\let\newfloat\newfloat@ltx
\newcounter{protocol}
\newenvironment{protocol}[1][htb]{%
  \let\c@algorithm\c@protocol
  \renewcommand{\ALG@name}{Protocol}
  \begin{algorithm}[#1]%
  }{\end{algorithm}
}
\renewcommand{\P}{\mathcal{P}}
\newcommand{\R}{\mathcal{R}}
\newcommand{\X}{\mathcal{X}}
\newcommand{\Y}{\mathcal{Y}}
\newcommand{\A}{\mathcal{A}}
\newcommand{\Aad}{\mathcal{A}_{ad}}
\newcommand{\Ana}{\mathcal{A}_{weak}}
\newcommand{\E}{\mathcal{E}}
\newcommand{\h}{\mathcal{H}}
\newcommand{\inp}{\mathrm{in}}
\newcommand{\out}{\mathrm{out}}
\newcommand{\tr}{\text{Tr}}
\newcommand{\C}{\mathcal{C}}
\newcommand{\V}{\mathcal{V}}
\newcommand{\rin}{\rho^{\inp}}
\newcommand{\rout}{\rho^{\out}}
\newcommand{\rc}{\rho^{c}}
\newcommand{\rr}{\rho^{r}}
\newcommand{\sigr}{\sigma^{r}}
\newcommand{\Hilin}{\mathcal{H}^{d_{\inp}}}
\newcommand{\Hilout}{\mathcal{H}^{d_{\out}}}
\newcommand{\ver}{\texttt{Ver}}
\newcommand{\Gn}{\mathcal{G}^{PUF}}
\newcommand{\Gnre}{\mathcal{G}_{re}}
\newcommand{\Gea}{\mathcal{G}^{\E_{f_1}}}
\newcommand{\Geb}{\mathcal{G}^{\E_{f_2}}}
\newcommand{\Gel}{\mathcal{G}^{\E^L_f}}
\newcommand{\mbraket}[2]{\bra{#1}#2\rangle}
\newcommand{\din}{\mathcal{D}^{in}}
\newcommand{\dout}{\mathcal{D}^{out}}
\newcommand{\forge}{\text{forge}}
\newcommand{\success}{\text{success}}
\newcommand{\fail}{\text{fail}}
\newcommand{\dist}{\text{dist}}
\newcommand{\guess}{\text{guess}}
\newcommand{\classical}{\text{classical}}
\newcommand{\quantum}{\text{quantum}}
\newcommand{\extract}{\text{extract}}
\newtheorem{theorem}{Theorem}
\newtheorem{lemma}{Lemma}
\newtheorem{corollary}{Corollary}
\newtheorem{definition}{Definition}
\newtheorem{requirement}{Requirement}
\newtheorem{game}{Game}
\newtheorem{construction}{Construction}
\newcommand{\yao}[1]{\textcolor{black}{#1}}
\begin{document}
%
\title{Quantum Lock: A Provable Quantum Communication Advantage}
%
%

\author{Kaushik Chakraborty}
\affiliation{School of Informatics, University of Edinburgh, Edinburgh, UK}
\author{Mina Doosti}
\affiliation{School of Informatics, University of Edinburgh, Edinburgh, UK}
\author{Yao Ma} 
\email{yao.ma@lip6.fr}
\orcid{0000-0003-1185-3431}
\thanks{This work has been presented at QCrypt 2022 (12th International Conference on Quantum Cryptography)}
\affiliation{Laboratoire d’Informatique de Paris 6 (LIP6), Sorbonne Université, Paris, France}
\author{Chirag Wadhwa}
\affiliation{Indian Institute of Technology Roorkee, India}
\author{Myrto Arapinis}
\affiliation{School of Informatics, University of Edinburgh, Edinburgh, UK}
\author{Elham Kashefi}
\affiliation{School of Informatics, University of Edinburgh, Edinburgh, UK}
\affiliation{Laboratoire d’Informatique de Paris 6 (LIP6), Sorbonne Université, Paris, France}

%


\begin{abstract}
Physical unclonable functions (PUFs) provide a unique fingerprint to a physical entity by exploiting the inherent physical randomness. In the review paper [Nature Electronics, 2020] on PUF technology, Gao \emph{et al.} discussed the vulnerability of most current-day PUFs to sophisticated machine learning-based attacks, highlighting the design of provably secure PUF as an important open problem. By encoding the outcome of the classical PUFs into qubits, we address this problem. Specifically, this paper proposes a generic design of provably secure PUFs, called \emph{hybrid locked PUFs} (HLPUFs), providing a practical solution for securing classical PUFs. An HLPUF uses a classical PUF (CPUF) and encodes the output into non-orthogonal quantum states (namely BB84 states, which are widely used for quantum key distribution) to hide the outcomes of the underlying CPUF from any adversary. Similar to the classical \emph{lockdown} technique [TMSCS, 2016], here we introduce a \emph{quantum lock}, to protect the HLPUFs from any general adversaries. The indistinguishability property of the non-orthogonal quantum states, together with the quantum lockdown technique, prevent the adversary from accessing the outcome of the CPUFs. We show that, for quantum polynomial-time adversaries, the ratio between the forging probabilities of the HLPUF, and the underlying CPUF is upper bounded by the distinguishing probabilities of those non-orthogonal states that decay exponentially in the number of output bits of the CPUF. Moreover, we show that by exploiting non-classical properties of quantum states, the HLPUF allows the server to reuse the challenge-response pairs for further client authentication. This result provides an efficient solution for running PUF-based client authentication for an extended period while maintaining a small-sized challenge-response pairs database on the server side. Later, we support our theoretical contributions by instantiating the HLPUFs design using accessible real-world CPUFs, called XOR-PUFs. We use the optimal classical machine-learning attacks to forge both the CPUFs and HLPUFs, and we certify the security gap in our numerical simulation for HLPUF construction, which is ready for implementation.

\end{abstract}
\maketitle  
%


\section{Introduction}

The recent advances in the development of quantum internet and both short-distance and long-distance quantum networks have enabled a broad range of applications from simple secure communication to advanced functionalities such as delegated quantum computation. Many of these applications are out of reach for classical networks \cite{broadbent_quantum_2016,wehner_quantum_2018,fitzsimons_private_2017,veriqloud_quantum_2019, pirandola_advances_2020,diamanti_demonstrating_2019,dynes_cambridge_2019,kozlowski_designing_2020,caleffi_quantum_2018,cacciapuoti_quantum_2020,unruh_everlasting_2013, BBD21Quantum}. Nevertheless, the search for other useful and implementable applications of quantum internet, and quantum communication networks in general, is a very active area of research. A common essential security feature for most such applications is the ability of secure authentication. In general, \emph{authentication} is captured by different definitions and security levels and plays a central role in establishing secure communications over untrusted channels \cite{alagic_quantum_2017, dulek_secure_2020, boneh_quantum-secure_2013}. In particular, \emph{entity authentication}, also known as \emph{device authentication} is a crucial, fundamental, and yet challenging and mostly unsolved task \cite{kang_controlled_2018,gollmann_what_1996}. This sets authentication as a good candidate for practical applications of quantum networks.

Among various approaches for authentication, hardware security provides a promising paradigm for solving such problems by exploiting the underlying properties of hardware and physical devices. In this context, Physically Unclonable Functions (PUF) are a full of potential technology that can establish trust in embedded systems without requiring any \emph{non-volatile memory} (NVM) \cite{GDD02,GCDD02,LLGS04}. A PUF derives unique volatile secret keys on the fly by exploiting the inherent random variations introduced by the manufacturing processes of the \emph{integrated circuits} (ICs). Any slight (yet unavoidable and uncontrollable) variation in the manufacturing process produces a different PUF, rendering the fabrication of an identical physical ‘clone’ of a PUF \cite{RS12} infeasible. Hence, PUFs provide copy-proof, cost-efficient unique hardware fingerprints. Usually, one can generate such fingerprints just by querying the PUF physically. In the literature, we refer to the query and response pairs as \emph{challenge-response pairs} (CRPs). Due to the uniqueness of these devices, different PUFs generate different CRPs.

The literature on classical PUFs (CPUFs) is rich, and there is a multitude of constructions available based on different hardware technologies \cite{GCDD02,GSST07,KL18}. We refer to \cite{Roel12} for a detailed review of the available constructions of classical PUFs. Although all of those constructions provide unique and inexpensive hardware fingerprints, they all suffer from providing sufficient randomness. As a result, most of the existing CPUF constructions are vulnerable against machine learning modelling-based attacks \cite{CHES:Becker15,becker_pitfalls_2014,Del19,RSS13,ruhrmair_modeling_2010}.  In these types of attacks, the attacker first collects a sufficient number of CRPs by adaptively querying the PUF and then uses that data to derive a numerical model using the tools from machine learning. Here, the goal of the model is to predict the response of the PUF to an arbitrary challenge. These attacks open multiple new research directions on designing machine learning-based attack-resilient PUFs \cite{NDJM19,SMCN17}. In the classical domain, there are a few proposals to prevent such sophisticated attacks. The \emph{lockdown technique} \cite{yu_lockdown_2016} is one such example. Informally speaking, it provides a two-way, i.e., server-client authentication. Here, the server first sends a part of the response along with a challenge to the client. The client first checks whether the sent partial response is consistent with the actual response from the PUF corresponding to the challenge that is sent by the server. The client replies with the rest of the response if the server passes this test. Though it prevents the adversary from querying the CPUF in an adaptive manner. However, all of such solutions are heuristic in nature, and none of them provides \emph{provable security} for CPUFs or PUF-based authentication protocols. On the other hand, in recent years, there has been a line of research suggesting to exploit quantum mechanical features of certain devices to design secure PUFs, known in the literature as quantum PUFs~\cite{ADDK19,KMK21,GGDF22Comparison}. Although these proposals provide provable security against quantum machine learning attacks, they are challenging to realise with current-day quantum technologies.

In this work, we introduce a new use-case of quantum communication with provable advantages in several aspects: A new PUF construction and a novel quantum entity authentication protocol that exploits the combination of hardware assumptions and quantum information to achieve secure authentication with provable exponential security advantage compared to its classical counterparts. We also formally prove that the protocol fulfills a specific desired property, namely, \emph{challenge reusability}, which is impossible unless using quantum communication, emphasizing the significance of quantum communication technology and quantum network for a new quantum security era. Moreover, we show that quantum communication makes our construction \emph{cheat-sensitive}, i.e., our PUF-based authentication protocol can detect the adversarial attempts (both passive and active) on intercepting the responses of the PUF. We aim to keep our construction implementable using present-day quantum communication technologies while exploiting the desirable security promises that are provided due to the quantum nature of the challenges and responses. Our PUF construction utilises classical PUFs, which are too weak to be useful in a standalone manner, but present the advantage of being widely accessible and easy to use, and enhances their security using commercially available tools from quantum communication. Here for the first time, we show that by encoding the output of classical PUFs into non-orthogonal qubits, one can enhance the security of PUFs against weak (non-adaptive) adversaries. As such, the first building block of our design is a construction we refer to as \emph{hybrid PUFs} (HPUFs), which encompasses a classical PUF and produces quantum responses for classical challenges. We prove that this construction provides security against the mentioned adversary. With this gadget at hand, we then introduce a construction that is secure against more powerful adaptive quantum adversaries (the general class of quantum polynomial-time (QPT) adversaries). To this end, we borrow the idea of the classical lockdown technique and, by redefining it in the quantum setting, we present our final construction, namely \emph{hybrid locked PUF} (HLPUF). We show that classical PUFs combined with quantum encoding and the new lockdown toolkit can considerably boost the security of classical PUFs without too much overhead. An important technological improvement compared to previous quantum-enhanced proposals where quantum memory was necessary is that for both HPUFs and HLPUFs, only a classical database of challenge-response pairs needs to be stored on the verifier's side. We formally prove adversarial bounds on the unforgeability of HLPUFs in comparison with the underlying classical PUFs, using rigorous proof techniques from quantum information theory. We also formally prove the security of our HLPUF-based device authentication protocol under realistic assumptions.

 

In addition to our theoretical contributions, to better demonstrate the applicability and strength of our results, we provide simulations for the design of HPUF constructions with underlying silicon CPUFs instantiated by the \emph{pypuf} python-based library~\cite{pypuf}. Furthermore, we simulate machine-learning-based modelling attacks on HLPUFs where an adversary acquires classical challenges and quantum-encoded responses from an HLPUF. Our simulation results assist in demonstrating our theoretical proofs by evidencing the security enhancement from CPUFs to HLPUFs. Another significance of our simulation results is that they certify the practicality, and security of our construction, even beyond the scope of the proven theorems, in a real-world scenario, as the CPUFs used in our simulations are commercially available and not only theoretical models. We also bring forward practical proposals to further improve the quality of such constructions.


Finally, through studying this construction, we will also address a long-standing open problem in the field of PUF-based authentication, which is the reusability of challenge-response pairs stored in the verifier database. One significant drawback of PUF-based authentication protocols is that the server/verifier cannot use the same challenge multiple times to authenticate a client/prover due to man-in-the-middle attacks. Therefore, the server exhausts all the challenges from the database after running several rounds of the authentication protocol. This limitation is unavoidable in any such classical protocols.
However, we show that due to the entropy uncertainty principle in quantum information theory, with our proposed construction, the server can reuse a challenge as long as they can successfully authenticate the client using that challenge in the previous rounds. Our result overcomes this open problem as we prove for the first time the challenge reusability of PUF-based applications. The entropy uncertainty principle also allows the honest server/client to detect any adversarial attempts on extracting information from the response of the HPUFs, providing the cheat sensitivity of our protocol. 

\section{Our Results}
We first present the construction of HLPUFs discussing our theoretical results \emph{w.r.t.} their security and showing a provable method of securing classical PUFs, using quantum communication. This result, as mentioned, provides a novel provable advantage that is only achievable using quantum communication. Then we introduce our HLPUF-based authentication protocol. In addition to discussing the security of the protocol, we also show a unique property of such protocols, namely challenge-reusability, which cannot be realised purely classically under similar assumptions. Lastly, we exhibit our theoretical results in practice through simulations, while stepping even closer to practice by using our construction to secure one of the most commercially available and cheap existing PUFs. 

\subsection{Construction of Hybrid Locked PUF (HLPUF)}
The core idea of our HLPUF construction is to hide the outcome of the classical PUF inside quantum states and prevent the adversary from implementing adaptive strategies or getting multiple copies of output quantum states using the quantum lock. The underlying component of our construction is a gadget which we name \emph{Hybrid PUF} (HPUF). HPUF is the part that protects the output interface of the classical PUF by encoding the classical outcomes in non-orthogonal states. Thus, an HPUF is a device with a classical bit-string as input and encoded quantum states as output.

\subsubsection{Construction of the HPUFs}
The construction uses a classical PUF $f:\{0,1\}^n \rightarrow \{0,1\}^{2m}$ with $2m$ outcome bits. From the $2m$ output bits, we construct $m$-pairs of bits. One example of such construction is to take the ($2j-1$)-th, and the $2j$-th (where $1\leq j \leq m$) output bits, and make a pair. Next, we define a two-to-one mapping of the tuple $(y_{2j-1}, y_{2j})$ (where $1\leq j \leq m$) of $f$'s outcome to a qubit $|\psi^{j}_{\out}\rangle \in \{|0\rangle, |1\rangle , |+\rangle, |-\rangle\}$. Here, $|+\rangle = \frac{1}{\sqrt{2}} (|0\rangle +|1\rangle)$, and $|-\rangle = \frac{1}{\sqrt{2}} (|0\rangle -|1\rangle)$. 
Therefore, the HPUF receives a classical query and produces a quantum state as a response. Figure \ref{fig:hpuf} illustrates the HPUF construction. For a more formal description of the construction, we refer to Construction~\ref{cons:hpuf_1} in the supplementary materials. 

\begin{figure}[h]
\includegraphics[width=0.49\textwidth]{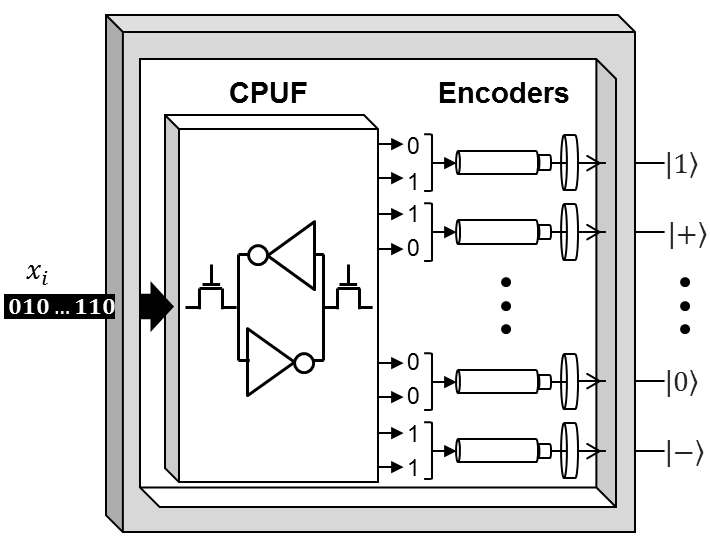}
\caption{HPUF Construction with Conjugate Coding}
\label{fig:hpuf}
\end{figure}

Intuitively, to forge the HPUF, the adversary needs to extract the classical outcome of each challenge from a series of quantum states produced by the HPUF. The task reduces to extracting information on all the two-bit outcomes of the classical PUF (say $(2j-1,2j)$-th bits) from each quantum state $|\psi^j_{\out}\rangle$. Thus the adversary needs to distinguish between four non-orthogonal states $ \{\ket{0}, \ket{1}, \ket{+}, \ket{-}\}$, which is possible with a probability at most $p_{\guess}$. Distinguishing an unknown non-orthogonal quantum state from a pre-determined set of the state is a well-known problem in quantum information which we exploit here in a more general way to introduce extra randomness on the adversary's extracted database of the underlying classical PUF.

Thus, an adversary trying to break HPUF is forced to run its forgery algorithm based on an imperfect training database. The adversarial model considered here assumes that the adversary gets access to a random set of these classical challenges and quantum responses, where there exist only one copy of each pair in the adversary's database. This model is usually referred to as \emph{weak adversary}. We later upgrade this adversary into a more powerful one, which is our target most powerful quantum adversary of interest, when introducing the locking mechanism of the construction.

Due to the probabilistic nature of this extraction process, the extra randomness, captured by probability $p_{\guess}$, enhances the security of the HPUF against weak quantum adversaries as they require considerably more challenge-response pairs to forge the HPUF. We refer to this specific forgery attack as \emph{measure-then-forge} strategy.
This attack is illustrated in Figure~\ref{fig:measure_forge_attack}. Our first result in Lemma \ref{lem:database_dest1} (see supplementary materials) shows that measure-then-forge is an optimal forging strategy for this problem. 

Given a set of $q$ random classical challenge and quantum response pairs, the adversary needs to extract \emph{enough} classical information to forge the HPUF with the most optimal forging algorithm. We assume that for a successful forgery, the adversary needs to extract the outcome of the CPUF from at least $(1-\varepsilon)q$ responses, where $0\leq \varepsilon \leq 1$. The value of $\varepsilon$ depends mainly on the quality of the CPUF and the noise tolerance of the machine-learning algorithm. The calculation of the $\varepsilon$ parameter is discussed in Section \ref{sec:disc}.

\onecolumngrid
\begin{figure}[ht!]
\begin{minipage}{0.99\textwidth}
\centering
\includegraphics[width=0.88\textwidth]{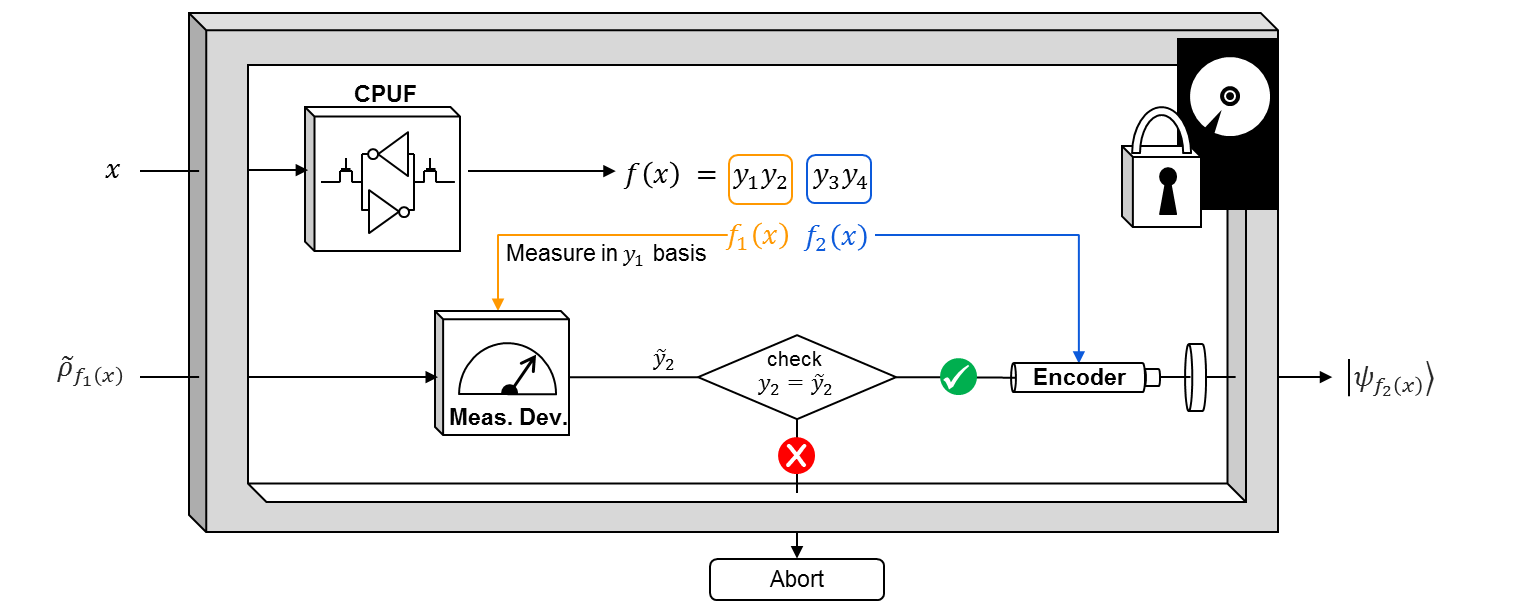}
\caption{HLPUF Construction. HLPUF uses an HPUF, a single-qubit quantum encoder device and a single-qubit measurement device, all inside a tamper-proof environment which prevents any quantum adversary from adaptively querying the HPUF.}
\label{fig:hlpuf}
\end{minipage}
\end{figure}
\twocolumngrid

To derive one of our central results, i.e. the quantum advantage brought by the HPUF construction, we prove an exponential gap between the success probabilities of optimal forgery attack on CPUF and HPUF. Let $P^{\classical}_{\forge}$ denote the probability of forging the CPUF using $q$ challenge-response pairs from the CPUF. We derive the following result, which is formally presented in Theorem \ref{thm:hpuf_prob_2} and Lemma \ref{lem:pguess}.

\noindent \textbf{Security Result 1}: The forging probability of HPUF, denoted as $p^{\quantum}_{\forge}$, is upper bounded by the following quantity.
\begin{equation}
    \label{eq:pril_pforge}
    p^{\quantum}_{\forge} \leq p_{\extract} \times P^{\classical}_{\forge}.
\end{equation}
where the $p_{\extract}$ probability itself is bounded as:
\begin{equation}
    \label{eq:pril_pext}
    p_{\extract} \leq \sum_{k=(1-\varepsilon)q}^{q}\binom{q}{k} (p_{\guess})^{2mk}(1-(p_{\guess})^{2m})^{q-k}, 
\end{equation}
Here, $p_{\guess}$ is the probability of guessing a single-bit outcome of the CPUF from a single-qubit outcome of the HPUF. As an important remark, we note that the classical forging probability,  $P^{\classical}_{\forge}$, is not a small value, given that the CPUF can be broken with a large enough number of queries. Therefore, the term $p_{\extract}$ is responsible for the exponential gap between the security of HPUF and CPUF and consequently highlights the role of quantum encoding in deriving this quantum advantage result. 

The $p_{\guess}$ probability itself is upper bounded (calculated in Lemma \ref{lem:guess_prob}) as follows as a function of a parameter $0.5 \leq p \leq 1$, which quantifies the randomness of the underlying CPUF. 
\begin{equation}
    \label{eq:pril_pguess}
    p_{guess} \leq p(1+\sqrt{2}p),
\end{equation}

If $p_{\extract}$ decays exponentially with the number of output bits of the HPUF, i.e., $m$ then $p_{\forge}^{\quantum}$ would be exponentially smaller than the success probability of CPUF forgery $p_{\forge}^{\classical}$. One can observe that for a smaller value of $\varepsilon$ (See Figure \ref{fig:pextract} in the supplementary materials), $p_{\extract}$ decays exponentially with $m$, showing an exponential separation in the security between the HPUF, and the CPUF. To conclude, we give concrete security bounds for HPUF based on its underlying insecure CPUF.


\subsubsection{Quantum Lock on the HPUFs}
Next, in order to prove the full quantum security of our construction, we need to uplift the previously considered weak adversary into any general \emph{adaptive} quantum adversary. An adaptive quantum adversary is free to build their database with any arbitrary query and in an adaptive manner, potentially depending on the previous queries\footnote{Note that here we don't allow superposition queries to the underlying CPUF inside the HLPUF. However, we allow the adversaries to run quantum algorithms on the challenge-response pair database.}. Particularly such adversaries can query HPUF multiple times with the same challenge $x$, obtaining several copies of $|\psi_{\out}\rangle$ and can easily extract the outcome $f(x)$ from multiple copies. Consequently, a probability $p_{\guess} \approx 1$ can be achieved in theory, and a strong adversary can forge the HPUF efficiently. Hence the construction of HPUFs on its own is not sufficient to achieve the most compelling desired notion of quantum security.


To complete our construction, we equip it with a mechanism called \emph{quantum lock}, which ensures security against general adaptive adversaries. The quantum lock is a mechanism that allows both parties to partially authenticate each other by having access to embedded small verification resources. As a result, it restricts the adversary from adaptively querying the device and reduces a powerful quantum polynomial time (QPT) adversary to a weak adversary.

We start by subdividing the output of the HPUF $\E_f : \{0,1\}^n \rightarrow (\h^{2})^{\otimes 2m}$ corresponding to a classical PUF $f:\{0,1\}^n \rightarrow \{0,1\}^{4m}$ into two different parts, where $\h^d$ denotes a $d$-dimensional Hilbert space of quantum states. The first part contains the first $m$ qubits, and the second half contains the last $m$ qubits of the outcome of the HPUF $\E_f$. Note that the first $m$ qubits of the HPUF's outcome come from the first $2m$ bits outcome of the underlying classical PUF $f$. For any challenge $x \in \{0,1\}^n$ we can write the outcome of the classical PUF as $f(x) = f_1(x)||f_2(x)$, where the mapping $f_1 :\{0,1\}^n \rightarrow \{0,1\}^{2m}$ denotes the first $2m$ bits of $f$ and $f_2 :\{0,1\}^n \rightarrow \{0,1\}^{2m}$ denotes the last $2m$ bits of $f$. Similarly, we can rewrite the HPUF $\E_f$ as a tensor product of two mappings $\E_{f_1}:\{0,1\}^n \rightarrow (\h^{2})^{\otimes m}$, and $\E_{f_2}:\{0,1\}^n \rightarrow (\h^{2})^{\otimes m}$, where for any challenge $x \in \{0,1\}^n$, $\E_{f_1}(x)$ denotes the first $m$ qubits of $\E_f(x)$, and $\E_{f_2}(x)$ denotes the last $m$ qubits of $\E_f(x)$.

The hybrid locked PUF, takes the classical input $x_i$ and a quantum state $\tilde \rho_1$ and produces the second half of the response of the hybrid PUF,  $\ket{\psi_{f_2(x_i)}}\bra{\psi_{f_2(x_i)}}$, as an output if $\tilde \rho_1$ is equal to the first half of the output of the hybrid PUF $\ket{\psi_{f_1(x_i)}}\bra{\psi_{f_1(x_i)}}$. Figure \ref{fig:hlpuf} illustrates the construction of HLPUF.

\onecolumngrid

\begin{figure}[ht!]
\begin{minipage}{0.99\textwidth}
\centering
\includegraphics[scale= 0.35,width=\textwidth]{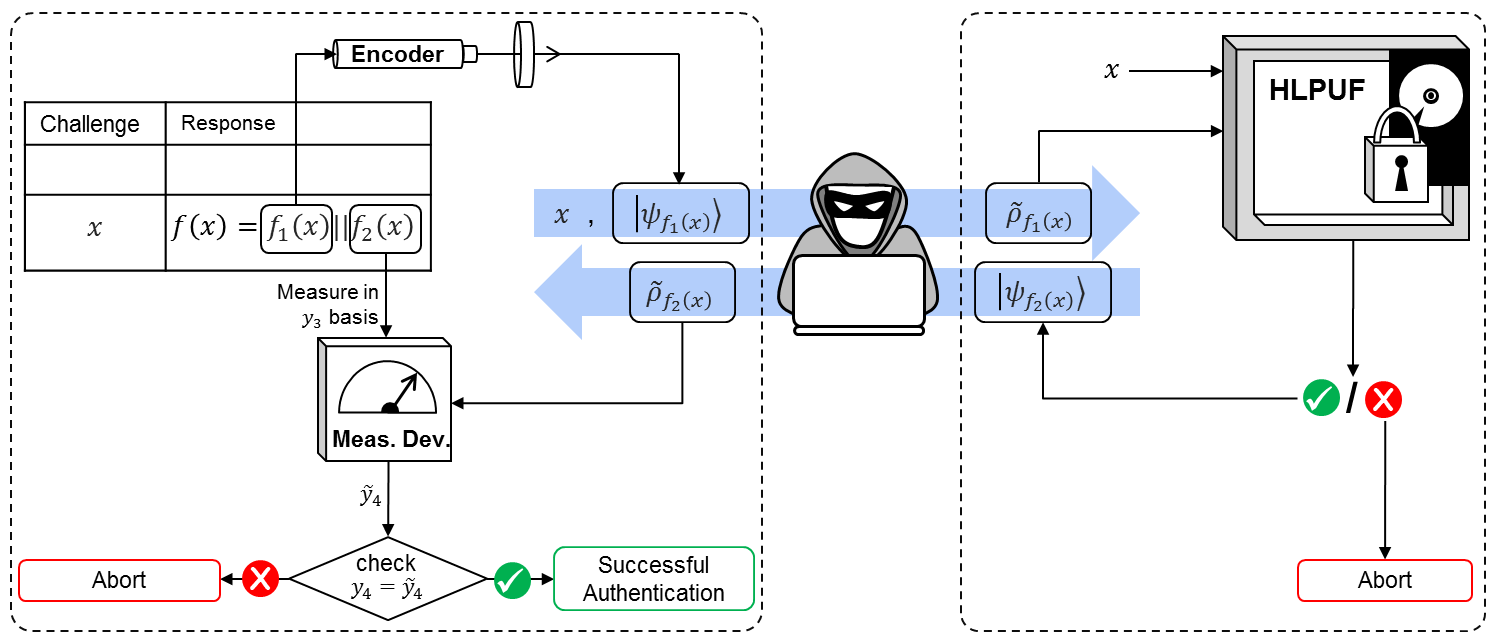}
\caption{HLPUF-based authentication protocol. In each authentication round, the verifier (server) uses a classical database and a quantum encoder to create the required form of challenge for HLPUF which consists of two parts: the classical challenge $x$, and the quantum state $\ket{\psi_{f_1(x)}}$, constructed based of the first half of the classical response, stored in the database. Then the verifier sends them through a public channel fully controlled by a quantum adversary, as illustrated in the figure. The prover (client) then inputs this two-part challenge into the HLPUF and either receives the state $\ket{\psi_{f_2(x)}}$ or gets a reject outcome and aborts the protocol, meaning the message did not come from the authentic verifier. The prover then sends back the quantum state through the same public quantum channel to the verifier, which will verify the client's response by measuring in $y_3$ according to the classical database. Recall that here, $f_2(x) = y_3 y_4$. Also, $\tilde{\rho}_{f_1(x)}$ and $\tilde{\rho}_{f_2(x)}$ denote the real quantum state received by the prover and verifier respectively, after the adversary's interaction with the original states.}
\label{fig:protocol}
\end{minipage}
\end{figure}

\twocolumngrid

Now we prove the promised security for this construction. Note that we assume that the adversary does not have any direct access to the outcome of the embedded classical PUF inside our construction. This assumption can be satisfied by putting the HLPUF inside a tamper-proof box. Thus under the assumption that the adaptive adversary has only access to the input/output ports of the HLPUF, we prove the security of our HLPUF construction, presented in the following informal theorem (The formal result and its proof can be found in Theorem \ref{thm:HLPUF_auth} in the supplementary materials).

\noindent \textbf{Security result 2 }\textit{(Informal)}: \textit{Suppose there is a HLPUF $\E^L_f$ which is made out of an HPUF $\E_f = \E_{f_1} \otimes \E_{f_2}$. If both $\E_{f_1}$ and $\E_{f_2}$ are secure against $q$-query weak adversaries then the HLPUF $\E^L_f$ is secure against any $q$-query adaptive adversaries.}

Intuitively, if an adversary tries to query the HLPUF with any arbitrary challenge $x$, then they need to produce a correct quantum state $|\psi_{f_1(x)}\rangle$, otherwise, the verification procedure inside the HLPUF fails, and the HLPUF replies with a garbage output $\perp$. The inability of the adversary to produce the outcome $|\psi_{f_1(x)}\rangle$ is itself insured via the unforgeability of the HPUF construction and the \emph{no-cloning} principle of the quantum states.


The only remaining option for the adaptive adversary would be to intercept the challenges sent by the server in the previous rounds and use them to query the HLPUF. Therefore practically, with the same challenge $x$ they can query the HLPUF only once. Given that the server chooses the challenges uniformly at random from its database, the adversary querying the HLPUF with those challenges will reduce their power to a weak adversary. As we showed the security of $\E_{f_1}$, and $\E_{f_2}$ against the $q$-query weak adversaries, with the proposed construction, the HLPUF remains secure against any $q$-query adaptive adversaries.

\subsection{HLPUF-based Authentication Protocol}

Putting our construction into practice, we propose an HLPUF-based authentication protocol. Figure \ref{fig:protocol} gives an illustration of the protocol and the formal description of the protocol is given in the supplementary materials. In a nutshell, the verifier (server) sends a challenge that consists of a classical part and a quantum state that will be verified on the prover's (client's) end when queried to the HLPUF device. If the verifier is successfully authenticated by the HLPUF, it produces the quantum response and sends it back to the verifier which can use it to authenticate the prover.

In Further, we formally prove the completeness and security of our protocol against adaptive quantum adversaries in the supplementary materials.

\noindent \textbf{Security Result 3}: The HLPUF-based authentication protocol shown in Figure~\ref{fig:protocol} is complete and secure (universally unforgeable) against any polynomial-time adaptive quantum adversary, given that an HLPUF is used according to the Construction~\ref{cons:hlpuf}, and all the assumptions for the construction are satisfied.

\subsection{Challenge Reusability and Cheat-Sensitivity}

In classical PUF-based authentication protocols, each challenge can be used only in a single authentication round due to man-in-the-middle attacks. The problem arises since the adversary can simply copy and record the challenges and responses and have a perfect copy of the challenger's database, which later can be used to falsely identify themselves. Therefore, the server needs to store an enormous database for running the authentication protocol for a long period. This is a fundamental limitation of classical PUFs~\cite{SD07,HYKD14}.

However, we show that HLPUFs provide an efficient and unique solution to this issue by exploiting the unclonability of the quantum states and the existence of uncertainty relations in quantum mechanics and quantum information. It allows the use of the same challenge several times for authentication without any security compromise. More precisely, each challenge-response pair can be reused under the circumstance of previous successful authentication rounds. This solution will resolve the important practical limitation of the challenger storing a big database or renewing the database of challenge responses frequently.

First, we clarify the condition under which the challenge can be reused. It is a straightforward observation that the challenges for which the verification test has failed should never be used again. A trivial attack, in this case, would be that the adversary intercepts the communication and stores the response state, and later when the same challenge has been queried again, will re-send the stored correct response state to pass the verification. As a result, all the challenges in the failed rounds should be discarded.

Nonetheless, one of our main results is to show that in the event of successful authentication, the challenges can be reused. Here, by successful authentication, we mean that the received response state passes the verification on the client and server side, and both are identified as honest parties. Even though the events of false identification of an adversary is still possible (for example, if the challenge is the same as one of the challenges that previously existed in the adversary's local database), our result, stated as follows, ensures that these events occur only with negligible probability. 

\noindent \textbf{Security result 4 }\textit{(Informal)}: If the HLPUF-based authentication protocol (Figure~\ref{fig:protocol}) doesn't abort for a specific challenge $x$, then the probability of the adversary successfully extracting the classical outcome of the PUF is upper bounded by $2^{-m}$. Therefore, the challenge $x$ can be reused.

This is an influential information-theoretic result that shows even in the presence of a powerful quantum adversary, if the challenge-response pair of HLPUF leads to successful authentication of the honest parties then the adversary has almost no information about the response $f(x)$ of the underlying CPUF $f$. We also show that using the same challenge for $k$ times, if the authentication is passed for all of them, the probability that the adversary successfully extracts the classical outcome of the PUF is upper bounded by $k2^{-m}$, which quantifies further this reusability feature. The results have been formally shown in Theorems \ref{th:uncertainty} and  \ref{thm:chall_reuse_mult} in the supplementary materials. This feature is uniquely been enabled due to quantum communication and the specific relation between the quantum states that we use for our encoding. Our results have been proven using a sophisticated toolkit in quantum information theory, namely, entropic uncertainty relations~\cite{deutsch1983uncertainty,coles2017entropic}, which have also been used for the full security proof of famous quantum protocols such as QKD.

Another relevant feature that our quantum communication-based solution provides is cheat sensitivity, meaning that due to the discussed quantum properties of our CRPs, a passive adversary trying to intercept and hijack the communication will be detected.  






\subsection{Our Theoretical Results in Practice: HLPUF's Resiliency to the Machine Learning-Based Attacks}
We validate and showcase the practicality of our theoretical results for HLPUF construction using numerical results and simulations. While introducing HPUF and our security results earlier, we gave a theoretical upper bound on the forging probability of HPUF. Our theoretical security analysis shows that exponential security can be achieved for this construction, relying on certain reasonable assumptions, including the existence of a classical PUF that is not broken with probability $1$, nonetheless is breakable with non-negligible probability given enough queries. Although such mid-level classical PUFs can be theoretically found, especially in optical-based constructions, we focus on putting our construction to into test using the cheapest and most widely available CPUFs. We choose silicon CPUFs such as arbiter PUFs for this purpose, which are known to be weak in security and breakable using machine-learning attacks. We compare the performance of these CPUFs with an HPUF that is constructed with the same underlying CPUF, performing measure-then-forge attacks using classical machine-learning algorithms (see Figure~\ref{fig:measure_forge_attack} for the illustration of the attack). The numerical simulation results assist in demonstrating our theoretical proofs by exhibiting an exponential advantage of success probability of HPUF forgery compared to its underlying CPUF with a limited $q$-query.



Here, we instantiate the underlying silicon CPUFs by a python-based library called \emph{pypuf} \cite{pypuf}. From the pypuf library, we consider the XOR Arbiter PUFs \cite{SD07} which are timing-based CMOS PUFs of the form $f:\{0,1\}^n \rightarrow \{0,1\}$. For constructing the HPUFs, we need an underlying CPUF with at least two bits outcome. Therefore, we use two such XOR arbiter PUFs (say $f_1$, and $f_2$) for instantiating an HPUF. For the forgery, we use the measure-then-forge strategy that we define in the HPUF section. As the best measurement strategy for the measure-then-forge attack, we use the upper bound we derived on the adversary's guessing probability of extracting a single-bit outcome of the classical PUF from the outcome of the HPUF (see Lemma \ref{lem:guess_prob} in the supplementary materials). After the measurement phase in the measure-then-forge strategy, the adversary ends up with a classical database. We use the classical \emph{logistic regression} (LR) algorithm for the forgery. Note that, for the $k$-XOR PUFs, the LR attacks show the best performance. Therefore, we use the same algorithm in our measure-then-forge strategy. For a more detailed description of the forgery attack, we refer to Section~\ref{ap:simulation} in the supplementary materials. Our numerical results can be categorised into two main contributions summarized as follows.

\subsubsection{Advantage over CPUFs}
First, our simulation results show a considerable advantage of our construction over CPUFs, even when constructed from the on-the-counter low-cost CPUFs. We summarize our numerical results on the advantage of HLPUF over the underlying CPUF in Figures~\ref{fig:simulation_k4} and \ref{fig:simulation_k5}. On each of the plots in these figures, the $X$-axis denotes the number of CRPs we use for the forgery, and the $Y$-axis denotes the accuracy of the forgery. The blue curves in each sub-figure represent the forging accuracy of the underlying CPUF. The red curves denote the forging accuracy of the HPUF against the general adaptive adversary, and the green curves denote the forging accuracy of the HLPUFs against general adaptive adversaries. From these plots, it is evident that without the quantum lock, the HPUF provides a very small advantage over the underlying CPUF. This implies that quantum communication alone is not sufficient in providing a higher security boost. However, the gap between the blue curve and the green curve in each of the plots of Figures~\ref{fig:simulation_k4} and \ref{fig:simulation_k5}, shows the importance of the quantum lock for providing a much higher security boost.

The simulation results show that if the adversary has enough challenge-response pairs from the HLPUF then eventually it can forge the HLPUF. However, if the adversary tries to forge the HLPUF, then it needs to measure to extract the classical information from the quantum state, i.e., the outcome of the HLPUF. This measurement can disturb the quantum state, and if the measurement is not successful then the authentication also fails. This is something different from the classical scenario, where the adversary can remain undetected and make the forgery. We refer to this property as the \emph{cheat-sensitivity} of the HLPUFs. Due to this property, we can safely use the HLPUFs in practice much more times than the prediction of Figures \ref{fig:simulation_k4} and \ref{fig:simulation_k5}.

\onecolumn

\begin{figure}[t]
    \begin{minipage}{0.99\textwidth}
    \centering
    \includegraphics[scale=0.6]{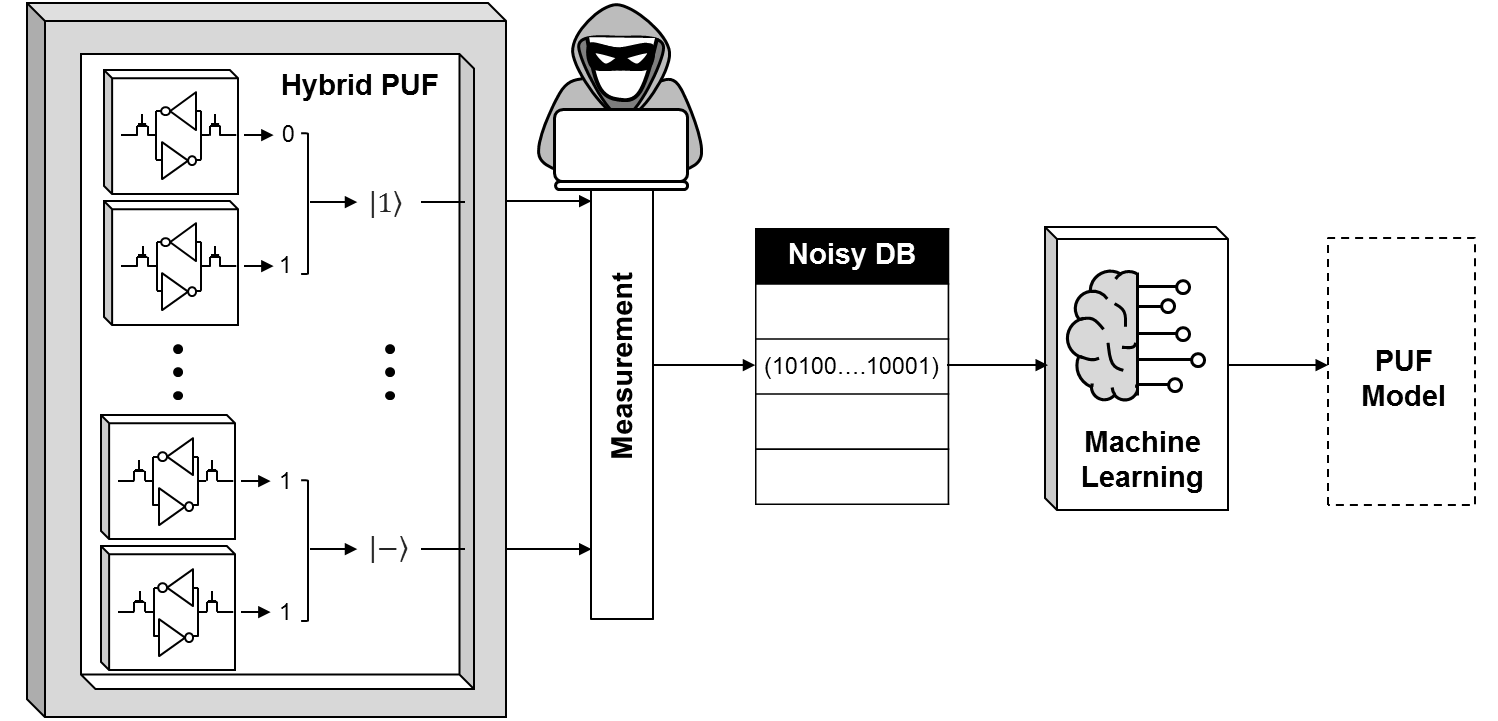}
    \caption{Illustration of the measure-then-forge attack. The quantum adversary receives a sequence of BB84 quantum states as the output of the HPUF and measures them with the optimal measurement strategy to obtain the underlying classical information of the responses of CPUF. Due to the quantum nature of the HPUF responses, even the best measurement strategy is still probabilistic, which leaves the adversary with a noisy version of the classical database. Then the adversary can run a machine-learning attack on the noisy database (in the optimal attack, this classical machine-learning algorithm is assumed to be optimal as well) to extract the mathematical model of the PUF.}\label{fig:measure_forge_attack}
    \end{minipage}
\end{figure}

\onecolumn
\begin{figure}[!h]
    \begin{minipage}{0.99\textwidth}
        \centering
        \subcaptionbox{\label{fig:n64k4}}{\includegraphics[width=0.5\linewidth]{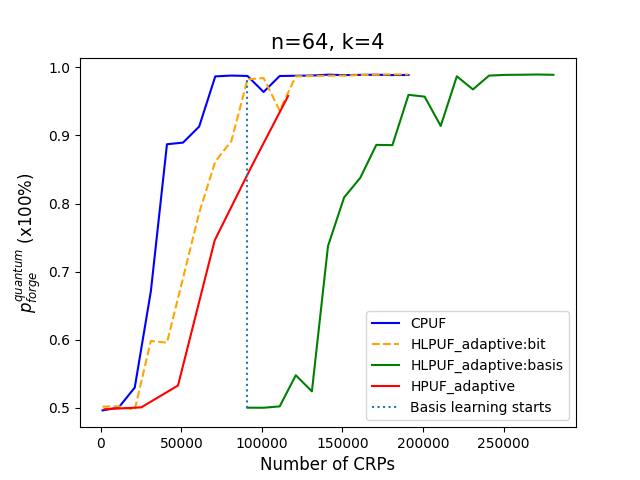}}\hspace*{\fill}
        \subcaptionbox{\label{fig:n128k4}}{\includegraphics[width=0.5\linewidth]{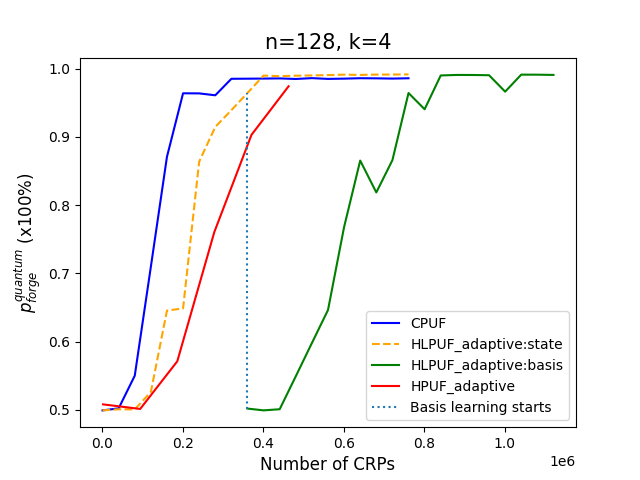}}
        \caption{Evolution of LR attack performance on CPUF(in blue), HPUF(BB84, in red for modelling a qubit), and HLPUF(BB84, in green for modelling a qubit) with different CRPs as the training set while the challenge size is 64 (\ref{fig:n64k4})/128 (\ref{fig:n128k4}) bits with k=4 XORPUFs}\label{fig:simulation_k4}
        \subcaptionbox{\label{fig:n64k5}}{\includegraphics[width=0.5\linewidth]{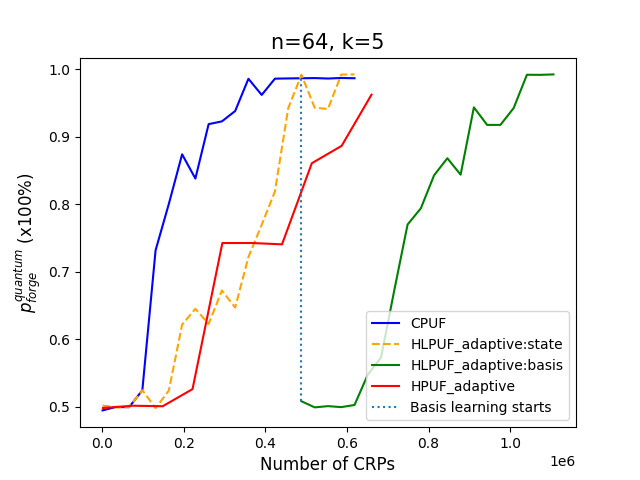}}\hspace*{\fill}
        \subcaptionbox{\label{fig:n128k5}}{\includegraphics[width=0.5\linewidth]{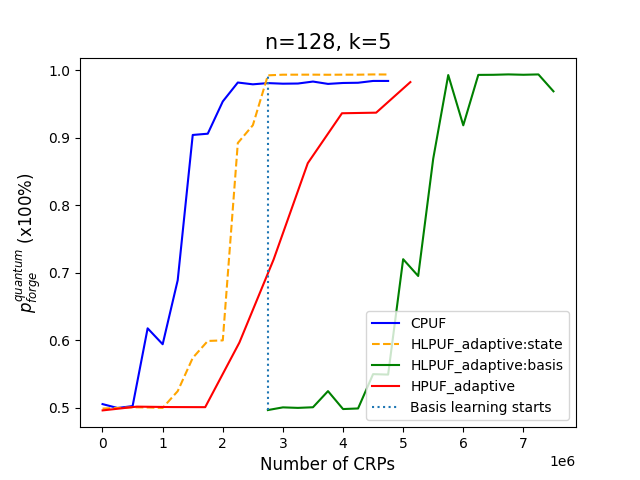}}
        \caption{Evolution of LR attack performance on CPUF(in blue), HPUF(BB84, in red for modelling a qubit), and HLPUF(BB84, in green for modelling a qubit) with different CRPs as the training set while the challenge size is 64 (\ref{fig:n64k5})/128 (\ref{fig:n128k5}) bits with k=5 XORPUFs}\label{fig:simulation_k5}
    \end{minipage}
\end{figure}

\begin{figure}[!h]
    \begin{minipage}{0.98\textwidth}
    \centering
      \subcaptionbox{\label{fig:cmp_cpuf}}{\includegraphics[width=0.5\linewidth]{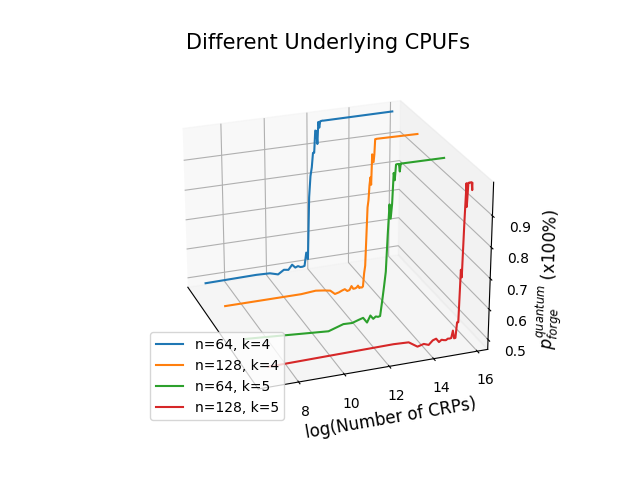}}\hspace*{\fill}
      \subcaptionbox{\label{fig:dim_cpuf}}{\includegraphics[width=0.5\linewidth]{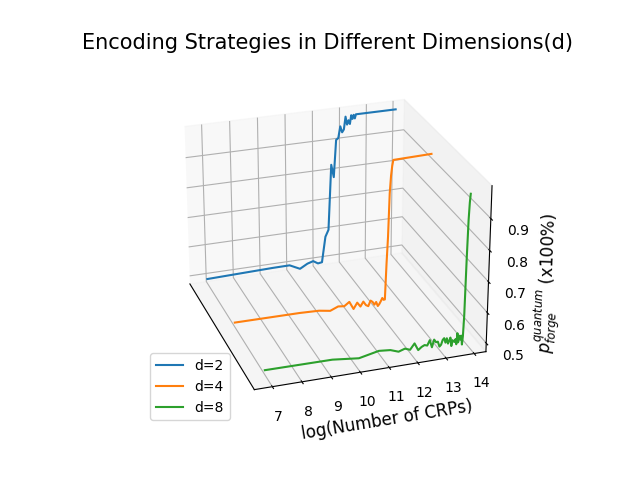}}
      \caption{Comparison of LR attack performance on HLPUFs with different underlying CPUFs (\ref{fig:cmp_cpuf}) and different encodings (\ref{fig:dim_cpuf}) strategies}\label{fig:diff_cpuf_enc}
    \end{minipage}
\end{figure}

\newpage
\twocolumn
\subsubsection{Practical solutions for boosting the security: Better CPUF or better quantum encoding}

In Figure \ref{fig:cmp_cpuf} we observe that if we increase the value of $k$ in the underlying $k$-XOR PUFs, then the adversary requires more challenge-response pairs for a successful forgery. This observation suggests that one possible way to enhance the security of the HLPUFs is to use more secure classical PUFs. Hence, we elaborate on the effect of different $k$-values on the HLPUF forgery. Moreover, the red plot in this figure also suggests that one can improve the security of HLPUFs significantly just by increasing the input size of the HLPUFs. 

We also explore another possible way to improve the security of the HLPUFs. The idea is to use a more sophisticated encoding than encoding two classical bits into a quantum state $\ket{\psi}$ such that $\ket{\psi} \in \{\ket{0}, \ket{1}, \ket{+}, \ket{-}\}$. Here we use the concept of Mutually Unbiased Bases (MUBs) \cite{BBRV02} of dimension $d=4$ or $d=8$ for the encoding. For the dimension $d=4$ ($d=8$), we encode four (six) classical bits to a two (three) qubits quantum state. We describe the encoding procedure in detail in the supplementary materials (see Section \ref{sec:MUB}). Intuitively, the higher dimensional encoding helps to reduce substantially the value of $p_{\guess}$ in the measure-then-forge strategy significantly. For example, in Section \ref{sec:MUB}, for the MUB encoding of dimension $8$, we calculate the value of $p_{\guess} \leq 0.62$.  

\begin{figure}[!ht]
    \centering
    \includegraphics[width=8cm]{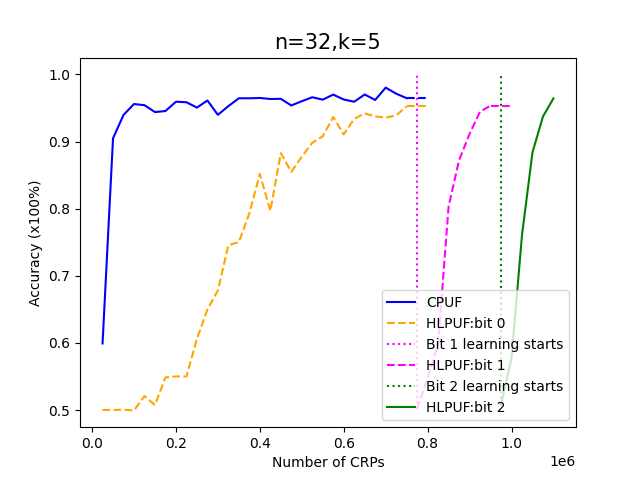}
    \caption{Evolution of LR attack performance of classical (in black) and hybrid (MUB in 8-dimension encoding, in red for modelling 3-qubit) constructions with different CRPs as the training set while the challenge size is 32 bits with k=5 XORPUFs}
    \label{fig:mub_k5} 
\end{figure}

In Figure \ref{fig:dim_cpuf}, we show the impact of this encoding on the forging probability. Specifically, we show an interesting simulation result in Figure \ref{fig:mub_k5}, where we only use $32$-bits input $5$-XOR PUF as an underlying CPUF. For such CPUFs, the total number of possible challenges is $2^{32} \approx 10^9$. In Figure \ref{fig:mub_k5}, we observe that the underlying CPUF can be forged using only $5000$ CRPs. On the other hand, for the forgery of the HLPUFs, the adversary requires almost $10^6$ queries. For the forgery of the HLPUF, the adversary needs to use almost all the CRPs. We can enhance the security of the HLPUFs by using higher-dimensional MUBs.

\section{Discussion}
\label{sec:disc}
In this paper, we proposed a new practical way to enhance the security of PUFs using quantum communication technology and showed a new use case for quantum communication, which benefits from both provability and practicality. We classify the adversaries into adaptive and weak adversaries based on their querying capabilities. This classification is not only useful in the proof reductions but also provides a step-by-step path towards a provably secure PUF against the strongest possible quantum adversaries. By harnessing the power of quantum information theory, here we propose a construction for a hybrid PUF with classical challenge and quantum response. The main idea is to encode the output of classical PUF into non-orthogonal quantum states. We show that for the forgery of the HPUF, any $q$-query weak adversary first needs to extract the classical string $f(x)$ from the outcome of the HPUF. The adversary tries to forge the CPUF using that extracted data. Due to the indistinguishability of the non-orthogonal quantum states, the adversary introduces extra randomness at the outcome of the CPUF, which in turn complicates the forging task for any QPT adversary. We have established the result under the assumption that for a $q$ random outcomes of the HPUF if the distance between the outcomes of CPUF and the extracted outcomes from the HPUF is above a threshold $\varepsilon$ then no QPT adversary can forge the HPUF. Under this assumption, we show that the probability of forging the HPUF is exponentially smaller than forging the CPUF. This is an exponential provable gap which is only achievable via quantum communication.
We also instantiated our HPUF design using real-world CPUF, called XOR-PUFs. In Figure \ref{fig:simulation_k4} and Figure \ref{fig:simulation_k5}, we show the gap in the number of queries the adversary needs to forge the HPUF compared to the underlying CPUF. As displayed in those figures, the probability of the HPUFs being fully broken is considerably small compared to their underlying CPUF. However, using an enormous number of samples, the adversary eventually forges the HPUF, certifying the assumption in our theoretical result. A more sophisticated encoding can enhance this gap. Later in Figure \ref{fig:mub_k5}, we show that the MUB of dimension $8$ encoding of the outcome of the CPUFs can enhance this gap substantially. 

In PUF-based authentication protocols, one important issue (both for classical and quantum PUFs) is that an adaptive adversary can query the PUF with arbitrary input challenges. It permits such an adversary to learn efficiently and emulate the input/output behaviour of the targeted PUF. We solve this problem with our quantum locking mechanism, leading to our HLPUF construction as discussed. In our proposed authentication protocol, we prove the security against adaptive adversaries. The advantage is twofold: On one hand, the probability of knowing information about a quantum state is upper-bounded compared to a classical PUF due to the quantum information theory. On the other hand, the implementation of hybrid PUFs is practical nowadays with the existing quantum communication technology. 

Another advantage of the hybrid locked construction is the reusability of the challenge-response pairs, which was impossible prior to this work for similar protocols. Therefore, with our solution, a server can perform secure client authentication for an extended period without exhausting its CRPs database. This result overcomes the fundamental drawbacks of the existing classical PUF-based authentication protocols while putting forward a novel and practical use case for our HLPUF construction as well as a unique feature enabled solely by quantum communication.


The no-cloning property of quantum states also prevents passive adversaries from intercepting and storing the qubits for forgery without getting detected by the server/client. Unlike the classical setting, quantum communication forces all adversaries to behave like active ones. In general, it is impossible for adversaries to extract information about the outcome of the underlying classical PUFs from the outcome of the HLPUFs without getting detected. This makes our HLPUF protocol cheat-sensitive, providing another advantage over CPUF-based authentication protocols.


The quantum communication part of our HLPUF construction relies on the conjugate coding, which is used in the quantum key distribution (QKD) protocols. QKD technology is one of the most mature quantum technologies. Long-distance QKD networks are already implemented and used in several countries like the USA, UK, China, EU, Japan, \cite{SFIK11,SLB11,PPM08,WCZ14,Court16} etc. Many commercially available QKD infrastructures provide almost $300$kb/s secret key rate over optical fibre links of length $120$km \cite{FBD17}. Moreover, the availability of the mature QKD on-chip technology \cite{SEG17,SSH20,BLL18} makes all the proposed constructions in this paper implementable using existing quantum technology. Our results show that picking off-the-shelf classical PUF technology and QKD technology can partially solve significant shortcomings of the device authentication problem in a quantum network. 

In this work, we show that our HLPUF construction makes the current-day insecure classical PUFs, secure with the help of quantum conjugate coding and lockdown techniques, and against present and future powerful quantum adversaries. However, all of our results are based on ideal implementations of the protocol. The next research direction will be to explore the performance of our HLPUF-based authentication protocol under channel noise and imperfect single-photon sources. Yet another intriguing research direction will be the design of robust variants of our protocol. Like some QKD protocols, our HLPUF becomes vulnerable to photon number splitting attacks if the source suffers from a multi-photon emission problem. Therefore, a further study of the feasibility and practicality of hybrid PUF constructions is an important future direction for bringing this technology from theory to practice.

Another interesting question arises in terms of the engineering design of the HLPUF, where a lockdown technique is exploited to prevent adaptive queries by network adversaries during usage. Explicitly, as a stand-alone construction, HLPUF construction implies a tamper-proof box where the underlying CPUF, as well as the quantum measurement and preparation apparatus, are under protection, except for the locked interface. A relevant question here is how a server can obtain a classical database of HLPUF given such tamper-proof environments. We argue that this is not an issue in the context of our proposed protocol and under the formal assumptions under which the protocol provides security guarantees. Firstly, we note that in the proposed protocols, the manufacturer, the server, and the client are all honest parties, and the construction of the HLPUF can be seen as a recipe for an honest manufacturer/server to construct such mechanisms given a CPUF which is potentially insecure, while followed by our adversarial model, the CPUF should not be queried directly at any point during the protocol. One can reasonably assume that the server first obtains the classical database of underlying CPUF prior to assembling HLPUF construction, then after assembling and sealing the box, transfers it to the client. We emphasise that such considerations will not affect the security guarantees of the protocol as they have been taken into account in our network adversarial model. 

Nonetheless, we also propose an alternative solution that can be implemented at the hardware engineering level to ensure our assumptions are being met while enabling the HLPUF to operate as a stand-alone hardware token, and not just within our given protocol. This can be achieved by integrating a \emph{programmable read-only memory} (PROM) based device inside HLPUF while assembling by the manufacturer. A PROM is a type of non-volatile classical memory chip that permits data to be written in only once after the device's manufacture \cite{harris2010digital, Authur79}. Once PROM is programmed, its content cannot be changed, which means the data is permanent. In practice, a small piece of PROM is needed, with at least 2 registers, to enable the HLPUF device to switch between \emph{setup} and \emph{handover} modes. The mode-switch procedure can be performed as follows: When the manufacturer produces an HLPUF device within a tamper-proof box, the registers of PROM are set to value $11$ as \emph{setup} mode, and it can be queried from outside. Once the mode has been set differently, it can never go back to $11$, which means that HLPUF has been used before in the setup mode. In setup mode, the server  can query the box with classical queries. On the first classical query, the register updates the mode to $01$ internally and will output classical responses, as long as it stays so. After the setup is done, the server can set the value of registers to $00$, in which case the encoding part of the device is activated and the HLPUF will output the quantumly encoded queries i.e., $\ket{\psi_{f(x)}}$. Of course, an adversary can do the same by querying HLPUF classically by setting registers from $11$ to $01$. However, this behaviour can be easily detected and when an honest party (server) receives the box, they will not use the HLPUF box, if it has ever been on a setup mode before. Furthermore, another engineering aspect to be taken is by harnessing device wear-out property to create limited access to the underlying CPUF \cite{DFKC17}. Finally, we note that the most efficient and practical design for such boxes although an interesting engineering problem, is not in the scope of this paper and is a completely distinct direction for future works.

\section{Acknowledgement}
\label{sec:ack}
This work is supported by grants from Région Ile-de-France, as well as Innovate UK funded project called AirQKD: product of a UK industry pipeline, Grant Number
106178



\bibliographystyle{abbrvnat}
\bibliography{biblio}

\newpage
\appendix
\newcommand{\hbAppendixPrefix}{S}
\renewcommand{\thefigure}{\hbAppendixPrefix\arabic{figure}}
\setcounter{figure}{0}
\begin{appendices}
\onecolumngrid
\section{Overview}
In the supplementary materials, we provide all the formal definitions and constructions, security proofs and other detailed technical results. The structure is as follows: First, in Appendix~\ref{ap:prelim} we introduce some of the basic notions and tools from quantum information and PUF literature that we will use later. In Appendix~\ref{ap:adv_model} we give a detailed description of an adaptive and weak quantum adversary, in the most general case of the unforgeability game where all the learning queries are density matrices. Then, we also give a more detailed version of the quantum unforgeability game, with adaptive and weak adversaries. In Appendix~\ref{ap:hpuf-formal} and Appendix~\ref{ap:hlpuf-const-security} we give the formal description of HPUF and HLPUF constructions respectively and then in Appendix~\ref{ap:hlpuf-security}, we present the main results of the paper formally. Then in Appendix~\ref{ap:reusability}, we discuss the challenge reusability result in further detail and in~\ref{ap:ch-reusability} we first give a brief introduction of the entropic uncertainty relations that have been used in the literature of quantum information for different purposes like security proof of QKD protocols. Then, we establish a formal version of Theorem~\ref{th:uncertainty}, in terms of the described uncertainty quantities, and finally, we give a full detailed proof of this theorem which we will use to establish the challenge reusability property for our HLPUF-based protocol.
In Appendix~\ref{ap:simulation} we discuss our simulation results more extensively, discussing also technical details about the effect of different quantum encoding. In Appendix~\ref{ap:limitation} we investigate the lockdown technique on quantum PUFs and we establish a general no-go result.
Finally, in Appendix~\ref{ap:security} we give the full and detailed security proofs for the theorem in Appendix~\ref{ap:hlpuf-security}, including the proof of Theorem~\ref{thm:cpuf_vs_hpuf}, Lemma~\ref{lem:guess_prob}, Lemma \ref{lem:database_dest1}, and Lemma \ref{lem:pguess}.

\section{Preliminaries}\label{ap:prelim}
In this section, we discuss some of the main concepts and definitions that we rely upon in the paper.

\subsection{Quantum information tools}
Quantum states are denoted as unit vectors in a Hilbert space $\h$. Any $d$-dimensional Hilbert space is equipped with a set of $d$ orthonormal bases. We say a quantum state is pure if it deterministically describes a vector in Hilbert space. On the other hand, a mixed quantum state is described as a probability distribution over different pure quantum states,
represented as a density matrix $\rho\in \h^d$. If a quantum state can be written as the tensor product of all its subsystems, we say that the state is \emph{separable}, otherwise, it is referred to as \emph{entangled state}.

If a quantum resource takes an input $\rho_{\inp}\in \h_{A}^{d_{\inp}}$ and produces an output $\rho_{\out}\in \h_{B}^{d_{\out}}$, we use a completely positive and trace preserving (CPTP) map $\E$ to describe the general quantum transformation $\E:\h_{A}^{d_{\inp}}\rightarrow \h_{B}^{d_{\out}}$.

The measurement of a quantum state is defined by a set of operators $\{M_{i}\}$ satisfying $\sum_i M_{i}^{\dagger}M_{i} = I$ with its conjugate transpose operator $M^{\dagger}$. The probability of getting measurement result $i$ on quantum state $\ket{\psi}$ is:
\begin{align*}
    P(i)=\bra{\psi}M_{i}^{\dagger}M_{i}\ket{\psi}=\bra{\psi}M_{i}\ket{\psi}.
\end{align*}
Furthermore, we define the set $\{E_i\}$ a \emph{POVM} (Positive Operator-Valued Measure) with positive operators $E_i=M_{i}^{\dagger}M_{i}$, where $\Sigma_i E_i=I$. 

An important property of the quantum states is the impossibility of creating perfect copies of general unknown quantum states, known as the \emph{no-cloning theorem} \cite{WZ1982}. This is an important limitation imposed by quantum mechanics which is particularly relevant for cryptography. A variation of the same feature makes it impossible to obtain the exact classical description of quantum states by having a single or very few copies, therefore, there exists a bound on how much classical information can be extracted from quantum states, known as Holevo bound~\cite{H1973}. Moreover, distinguishing between two unknown quantum states is also a probabilistic procedure known in the literature of quantum information as \emph{quantum state discrimination}. The distinguishability of the quantum states depends on their distance. There exist several distance measures for quantum states and quantum processes~\cite{nielsen_quantum_2010}, although, for the purpose of this paper, we introduce the fidelity, the trace distance and the diamond norm. The trace distance between two quantum states $\rho$ and $\sigma$ is defined as:
\begin{equation}
    \mathcal{D}_{tr}(\rho, \sigma) = \frac{1}{2} \|\rho - \sigma\|_1 = \frac{1}{2} Tr[\sqrt{(\rho - \sigma)^2}]
\end{equation}
The fidelity of mixed states $\rho$ and $\sigma$ is defined by the Uhlmann fidelity \cite{nielsen_quantum_2010}:
\begin{equation}
    F(\rho,\sigma)=[Tr(\sqrt{{\sqrt{\rho}}\sigma {\sqrt{\rho}}})]^{2}
\end{equation}
which will become $|\mbraket{\psi}{\phi}|^2$ the following expression for two pure quantum states $\ket{\psi}$ ($\rho=\ket{\psi}\bra{\psi}$) and $\ket{\phi}$ ($\sigma=\ket{\phi}\bra{\phi}$). The fidelity is bounded between 0 and 1, $0\leq F(\rho,\sigma)\leq 1$. $F(\rho,\sigma)=0$ when two states $\rho$ and $\sigma$ are orthogonal and $F(\rho,\sigma)=1$ when $\rho$ and $\sigma$ are identical. 

In this paper, we denote all the verification algorithms for checking equality of two quantum states by distance as a CPTP map $\texttt{Ver}: \h^{d} \otimes \h^d \rightarrow \{0,1\}$. For any two states $\rho_1 , \rho_2 \in \h^{d}$, this mapping is defined below.
 
\begin{equation}
    \label{eq:ver}
    \texttt{Ver}(\rho_1,\rho_2) := \begin{cases}
    &  1 ~~~~\text{if }\norm{\rho_1 - \rho_2}_1 \leq \epsilon, \\
    &  0 ~~~~\text{otherwise.}
    \end{cases}
\end{equation}

This general verification also includes measurements of quantum states as verification algorithms since it has been defined as a general CPTP map. Finally, we mention the notion of \emph{SWAP test}~\cite{buhrman_quantum_2001} as a quantum circuit for implementing the verification algorithm $\texttt{Ver(.)}$ above. The swap test's circuit uses the controlled version of a swap gate that swaps the order of two quantum states if the control qubit is $\ket{1}$. The circuit outputs $\ket{0}$ with probability $\frac{1}{2} + \frac{1}{2}F(\ket{\psi}, \ket{\phi})$ and it outputs $\ket{1}$ with probability $\frac{1}{2} - \frac{1}{2}F(\ket{\psi}, \ket{\phi})$. As can be seen, the success probability of this test depends on the fidelity of the states. This occurs because of the quantum nature of these states and measurements in quantum mechanics. 

\subsection{Models for PUF}
A Physical Unclonable Function is a secure hardware cryptographic device that is, by assumption, hard to clone or reproduce.
Here we give the mathematical model for the classical PUFs first, and then we also briefly mention the quantum analogue of them known as quantum PUF (QPUF) as defined in~\cite{ADDK19}. As classical PUFs are usually defined with probabilistic functions, due to their inherent physical randomness, we first define the notion of probabilistic functions as follows.

\begin{definition}[Probabilistic Function]
A \emph{probabilistic function} is a mapping $f: \R \times \X \rightarrow \Y$ with an input space $\X$, an \emph{random coin} space $\R$, and an output space $\Y$. 
\end{definition}

For a fixed input $x \in \X$, and a random coin (or key) 
$R \leftarrow \R$, we define the probability distribution of the output random variable $f(x) := f(R,x)$ over all $ y \in \Y$ as,

\begin{equation}
    p^f_x(y) := \Pr[f(x) = y|x] = \sum_{r:f(r,x) = y} \Pr[R = r].
\end{equation}

A classical PUF can be modelled as a probabilistic function $f:\R \times\X\rightarrow\Y$ where $\X$ is the input space, $\Y$ is the output space of $f$ and $\R$ is the identifier. The creation of a classical PUF is formally expressed by invoking a manufacturing process $f\leftarrow\mathcal{MP}_{C}(\lambda)$, where $\lambda$ is the security parameter. 

To model classical PUF $f$ in terms of security primitives, Armknecht et al. \cite{sako_towards_2016} define some requirements which are parameterized by some threshold $\delta_{i}$ and a negligible function $\epsilon(\lambda)\leq\lambda^{-c}$, where $c>0$ and $\lambda$ is large enough. Note that the requirements in our paper correspond to the requirements of intra and inter distances of PUF $f$.

\begin{definition}
The classical PUF $f:\R \times \X\rightarrow\Y$ with $(\mathcal{MP}_{C},\delta_{1},\delta_{2},\delta_{3},\epsilon,\lambda)$ satisfies the requirements defined below:
\begin{requirement}[$\delta_{1}$-Robustness]
Whenever a single classical PUF is repeatedly evaluated with a fixed input, the maximum distance between any two outputs $y_i\leftarrow f(x)$ and $y_j\leftarrow f(x)$ is at most $\delta_{1}$. That is for a created PUF $f$ and $x\in\X$, it holds that:
\begin{equation}
    \Pr\left[max({Dist(y_i,y_j)}_{i\neq j})\leq\delta_{1}\right]=1-\epsilon(\lambda).
\end{equation}
\end{requirement}
\begin{requirement}[$\delta_{2}$-Collision Resistance]
Whenever a single classical PUF is evaluated on different inputs, the minimum distance between any two outputs $y_i\leftarrow f(x_i)$ and $y_j\leftarrow f(x_j)$ is at least $\delta_{2}$. That is for a created PUF $f$ and $x_i,x_j\in\X$, it holds that:
\begin{equation}
    \Pr\left[min({Dist(y_i,y_j)}_{i\neq j})\geq\delta_{2}\right]=1-\epsilon(\lambda).
\end{equation}
\end{requirement}
\begin{requirement}[$\delta_{3}$-Uniqueness]
Whenever any two classical PUFs are evaluated on a single, fixed input, the minimum distance between any two outputs $y_i\leftarrow f_{i}(x)$ and $y_j\leftarrow f_{j}(x)$ is at least $\delta_{3}$.  That is for a created PUF $f$ and $x\in\X$, it holds that:
\begin{equation}
    \Pr\left[min({Dist(y_i,y_j)}_{i\neq j})\geq \delta_{3}\right]=1-\epsilon(\lambda)
\end{equation}
\end{requirement}
where $Dist(.,.)$ is a general notion of distance between the responses.
\end{definition}

We also introduce the notion of $\emph{randomness}$ for the classical PUF $f$. It says the maximal probability of $p^f_x(y)$ with an input $x_j\in \X$ on PUF $f_i$ where $i\in\R$. conditioned on the residual output space. A formal definition is as follows.
\begin{definition}[$p$-Randomness]\label{def:p-randomness}
We define the $p$-randomness of a classical PUF $f:\R \times \X\rightarrow\Y$ as \begin{equation}
    p := \max_{\substack{x \in \X \\ y \in \Y}} p^f_x(y).
\end{equation}
\end{definition}
For a correct valid modelling of PUF, $\delta_{1}<\delta_{2}$ and $\delta_{1}<\delta_{3}$ are necessary conditions to allow for a clear distinction between different input and different PUFs.

A quantum PUF, is again a hardware primitive that is unclonable by assumption which also utilises the properties of quantum mechanics. Similar to a classical PUF, a QPUF is assessed via challenge and response pairs (CPR). However, in contrast to a classical PUF where the CRPs are classical states, the QPUF CRPs are quantum states. Moreover, the evaluation algorithm of a QPUF is modelled by a general quantum transformation that is a CPTP map that produces an output in the form of a quantum state. A quantum transformation needs to have few requirements such as robustness, collision resistance and uniqueness to be considered a QPUF, similar to its classical counterpart. The focus of this paper is not on full quantum PUFs, and only for Section~\ref{ap:limitation}, where we discuss the feasibility of lockdown technique for general quantum PUFs, we use the QPUF as defined in~\cite{ADDK19}.

\section{Unforgeability against Adaptive and Weak Adversaries}\label{ap:adv_model}

\subsection{Models for adaptive and weak adversaries}

In this paper, we only consider the \emph{network adversarial model}, i.e., the adversary has only access to the communication channel. Moreover, we assume that the \textbf{manufacturer of the PUF is honest}. The network adversaries can get the challenge-response pairs just by intercepting the messages that are exchanged between the server and the clients. They can also pretend to be the server and make queries to the PUF on the client side with a challenge and get the response. 

Any network adversary that tries to predict the response of a PUF namely $\E: \din \rightarrow \dout$, can be modelled as an interactive algorithm. Here we consider Quantum Polynomial-Time (QPT) adversaries that have $q$-query classical access to the evaluation of the PUF, where $q$ is polynomial in the security parameter. An adaptive adversary can choose and issue any arbitrary query (up to $q$-query) which could also depend on the previous responses received from the PUF. On the other hand, a weak non-adaptive adversary, cannot choose the queries and instead receives $q$ CRPs of $\E$. In this case, the queries are being picked at random from a uniform distribution by an honest party and sent to the adversary. 

\subsection{Unforgeability with game-based security}
\emph{Unforgeability} is the main security property of PUFs. \emph{Unforgeability} means that given a subset of challenge-response pairs of the target PUF, the probability of correct estimation of a new challenge-response pair is negligible in terms of the security parameter. The unforgeability for Classical PUFs has been defined in~\cite{sako_towards_2016}, and for Quantum PUFs in~\cite{ADDK19} as a game-based definition. Moreover, a general game-based framework for quantum unforgeability has been defined in~\cite{DDKA21} for both quantum and classical primitives in an abstract way. Following the previous works, here in this paper, we present a game-based unforgeability definition for PUFs, emphasizing the adversary's capabilities in the learning phase, and capturing both adaptive and weak adversaries as defined in the previous section. We define the unforgeability of PUF as a formal game between two parties: a \emph{challenger} ($\C$) and an \emph{adversary} ($\A$). The game is divided with 4 phases: \emph{Setup}, \emph{Learning}, \emph{Challenge} and \emph{Guess}. A formal description is given as follows:

\begin{game}[Universal Unforgeability of PUF\footnote{We use the term \emph{Universal Unforgeability} as defined in~\cite{DDKA21}, to avoid confusion with a stronger security model. Nevertheless, in the PUF literature, this level of security is also called \emph{Selective Unforgeability} as also was used in~\cite{ADDK19}.}]
\label{game:uni-unf-abs} 
Let $\mathcal{MP}$ be the manufacturing process, $\texttt{Ver(.)}$ be a verification algorithm for checking the responses, and $\lambda$ the security parameter. We define the following game $\Gn(\A,\lambda)$ running between an adversary $\A$ and a challenger $\C$:
\begin{itemize}
    \item \textbf{Setup phase.}
        \begin{itemize}
            \item $\C$ selects a manufacturing process $\mathcal{MP}$ and security parameter $\lambda$. Then $\C$ creates a PUF by $\E\leftarrow\mathcal{MP}(\lambda)$, which is described by a CPTP map. The challenge and response domain $\din$ and $\dout$ are shared between $\C$ and $\A$.
        \end{itemize}
    \item \textbf{Learning phase.}
        \begin{itemize}
            \item If the adversary is adaptive, $\A = \Aad$:
            \begin{itemize}
                \item $\Aad$ selects any desired challenge $c_i \in \din$, and issues to $\C$ (up to $q$ queries).
                \item $\C$ queries the PUF with each challenge $c_i$ and sends the response $r_i = \E(c_i) \in \dout$ back to $\Aad$.
            \end{itemize}
            \item If the adversary is weak (non-adaptive), $\A = \Ana$:
            \begin{itemize}
                \item $\C$ selects a challenge $c_i \in \din$ uniformly at random from $\din$ and independent of $i$.
                \item $\C$ queries the PUF with $c_i$ and produces the response $r_i = \E(c_i)$.
                \item $\C$ issues to $\Ana$ the set of random challenges and their respective responses $\{(c_i,r_i)\}^q_{i=1}$.
            \end{itemize}
        \end{itemize} 

    \item \textbf{Challenge phase.} 
    \begin{itemize}
        \item $\C$ chooses a challenge $\tilde{c}$ uniformly at random from challenge domain $\din$.
        \item $\C$ issues $\tilde{c}$ to $\A$.
    \end{itemize}
    
    \item \textbf{Guess phase.}
    \begin{itemize}
        \item For the challenge $\tilde{c}$, $\A$ produces his forgery $\sigr\leftarrow \A(1^{\lambda},\tilde{c},\{(c_i,r_i)\}^q_i)$ and sends to $\C$.
        \item $\C$ runs a verification algorithm $b \leftarrow \texttt{Ver}(\sigr, \tilde{r})$, where $\tilde{r} = \E(\tilde{c})$ is the correct output and $b \in \{0,1\}$, to check the fidelity or equality of the responses.
        \item $\C$ outputs $b$. $\A$ wins if $b=1$.
    \end{itemize}
    
\end{itemize}
\end{game}

The above game is the abstract version of the unforgeability game that can be used for different classical or quantum PUFs and with different challenge types. For instance, the learning phase challenges $c_i$ can be classical bit-strings or quantum states and in that case, the domain $\din$ will be a  Hilbert. Here we mostly focus on the notion of classical and Hybrid PUFs. As a result, we do not need the full generalization to the quantum setting. Nevertheless, for the sake of completeness, we also give a full quantum version of this game-based definition in Appendix~\ref{ap:qunf}.

Note that the adversary could not choose arbitrarily the challenges in the challenge phase in this game. So it is so-called \emph{universal unforgeability}. Relatively, there are different notions of unforgeability e.g, \emph{unconditional unforgeability} and \emph{existential unforgeability} \cite{ADDK19}. Unconditional unforgeability models the PUF against an unbounded adversary with unlimited queries during the learning phase, which is the strongest notion of unforgeability. The difference between existential unforgeability and universal unforgeability is that the adversary could choose the challenges during the challenge phase with existential unforgeability instead of choosing the challenges by the challenger. Even though the universal unforgeability is the weaker one compared with the rest of the two, it is sufficient for most PUF-based applications. 

Finally, we define game-based security in terms of universal unforgeability in this setting:
\begin{definition}[Universal Unforgeability against Adaptive Adversary]\label{def:uni-unf-adapt}
A PUF with manufacturing process $\mathcal{MP}$ and verification algorithm $\texttt{Ver(.)}$ provides $(\epsilon,\lambda)$-universal unforgeability against adaptive adversary if the success probability of any adaptive QPT adversary $\Aad$ in winning the game $\Gn(\Aad, \lambda)$ is at most $\epsilon(\lambda)$.
\begin{equation}
    Pr[1\leftarrow \Gn(\Aad, \lambda)] \leq \epsilon(\lambda)    
\end{equation}
\end{definition}

\begin{definition}[Universal Unforgeability against Weak Adversary]\label{def:uni-unf-weak}
A PUF with manufacturing process $\mathcal{MP}$ and verification algorithm $\texttt{Ver(.)}$ provides $(\epsilon,\lambda)$-universal unforgeability against weak (non-adaptive) adversary if the success probability of any weak QPT adversary $\Ana$ in winning the game $\Gn(\Ana, \lambda)$ is at most $\epsilon(\lambda)$.
\begin{equation}
    Pr[1\leftarrow \Gn(\Ana, \lambda)] \leq \epsilon(\lambda)    
\end{equation}
\end{definition}

\subsection{Unforgeability game for general quantum PUF against adaptive and weak adversary}\label{ap:qunf}
In this appendix, we introduce the full quantum unforgeability game against adaptive and weak (non-adaptive) adversaries. Any adversary that tries to predict the response of a PUF $\E:\h^{d_{\inp}} \rightarrow \h^{d_{\out}}$, can be modelled as an interactive algorithm. Here we consider Quantum Polynomial-Time (QPT) adversaries that have $q$-query access to the evaluation of the PUF, namely $\E$ where $q$ is polynomial in the security parameter. An adaptive adversary can choose and issue any arbitrary query which could also depend on the previous responses received from the PUF. On the other hand, a weak non-adaptive adversary, cannot choose the queries and will instead receive $q$ input/output pairs states of $\E$. In the case that all the queries are quantum, the post-learning phase database of a weak adversary can be easily modelled by the definition. However, an adaptive quantum adversary is likely to consume the quantum state of the response to be able to pick the next query adaptively. Hence modelling the post-query database of an adaptive quantum adversary is more challenging. In what follows we give a $q$-query mathematical model for adaptive and weak adversaries.

\begin{definition}[Adaptive and Weak Adversary]\label{def:adap-weak-adv}
Let $q$ be a positive integer, and $\E : \mathcal{H}^{d_{\inp}} \rightarrow \mathcal{H}^{d_{\out}}$ be a PUF.  
We model a probabilistic adversary as a CPTP map 
$\A : \R \times (\h^{d_{\inp}})^{\otimes q} \otimes (\h^{d_{\out}})^{\otimes q} \rightarrow (\h^{d_{\inp}})$. Such an adversary is called an \textbf{adaptive} adversary $\A_{ad}$ if for all random coin $r \in \R$ and for any $\bigotimes_{i=1}^q \rho^{\inp}_i \in (\h^{d_{\inp}})^{\otimes q}$ and for $\bigotimes_{i=1}^q  \rho^{\out}_i \in (\h^{d_{\out}})^{\otimes q}$ (where $\rho^{\out}_i :=\E(\rho^{\inp}_i)$), the mapping $\bigotimes_{i=1}^q (\rho^{\inp}_i\otimes \rho^{\out}_i) \rightarrow \A^{r}_{ad}(\bigotimes_{i=1}^q (\rho^{\inp}_i\otimes \rho^{\out}_i))$ is dependent on the  $\rho^{\inp}_1\otimes \rho^{\out}_1, \ldots , \rho^{\inp}_q\otimes\rho^{\out}_q$; 
For a \textbf{weak} adversary $\Ana$ the mapping $\bigotimes_{i=1}^q (\rho^{\inp}_i\otimes \rho^{\out}_i) \rightarrow \A^{r}_{ad}(\bigotimes_{i=1}^q (\rho^{\inp}_i\otimes \rho^{\out}_i))$ is independent of $\bigotimes_{i=1}^q (\rho^{\inp}_i\otimes \rho^{\out}_i)$. Moreover, the adversary has no choice over the query, i.e., all the queries $\otimes_{i=1}^q\rho^{\inp}_i$ are chosen following a distribution $\R$, and a third party chooses the distribution. 
\end{definition}

Intuitively, an adaptive adversary $\A : \R \times (\h^{d_{\inp}})^{\otimes q} \otimes (\h^{d_{\out}})^{\otimes q} \rightarrow (\h^{d_{\inp}})$ captures the strategy to choose the query input $\rho^{\inp}_{q+1} \in \h^{d_{\inp}}$ to the PUF $\E$. The adversary can use these query response pairs to predict the output of the PUF. 
We call the pair $(\bigotimes_{i=1} ^q \rho^{\inp}_i,\bigotimes_{i=1} ^q \rho^{\out}_i)$ that is generated after the $q$-round of interaction between an adversary $\A$ and a PUF $\E$, as a transcript. Note, that the transcripts depend on the choice of the random coins of $\A$.\\

Similar to Game~\ref{game:uni-unf-abs}, We define the unforgeability of PUF as a formal game between two parties: a \emph{challenger} ($\C$) and an \emph{adversary} ($\A$). The difference here is that our adversaries are defined according to Definition~\ref{def:adap-weak-adv}. A formal description is given as follows:

\begin{game}[Universal Unforgeability of PUF] Let $\mathcal{MP}$ be the manufacturing process, $\texttt{Ver(.)}$ be a verification algorithm for checking the responses, and $\lambda$ the security parameter. We define the following game $\Gn(\A,\lambda)$ running between an adversary $\A$ and a challenger $\C$:
\begin{itemize}
    \item \textbf{Setup phase.}
        \begin{itemize}
            \item $\C$ selects a manufacturing process $\mathcal{MP}$ and security parameter $\lambda$. Then $\C$ creates a PUF by $\E\leftarrow\mathcal{MP}(\lambda)$, which is described by a CPTP map. The challenge and response domain $\Hilin$ and $\Hilout$ are shared between $\C$ and $\A$.
        \end{itemize}
    \item \textbf{Learning phase.}
        \begin{itemize}
            \item If the adversary is adaptive, $\A = \Aad$:
            \begin{itemize}
                \item $\Aad$ selects and prepares an initial state $\rin_{0} \in \Hilin$, while having full access to the preparation algorithm.
                \item $\Aad$ issues to $\C$ the initial challenge state $\rin_{0}\otimes\rho_{anc}$ where $\rho_{anc}$ is an initially blank state.
                \item $\C$ queries the PUF with $\rin_{0}$ and sends the response $(\E\otimes\mathcal{I})\rin_{0}\otimes\rho_{anc}$ back to $\Aad$
                \item for the next challenges ($i \neq 0$), the adaptive adversary $\Aad$ produces a new challenge for next query as $\rin_{i}=\A^{r_{i}}_i((\E\otimes\mathcal{I})\rin_{i-1})$ and issues to $\C$.
                \item $\C$ queries the PUF with $\rin_{i}$ and sends the response to $\A$. Recursively, $\A$ obtains the CPRs with challenge $\rin_{i}=\A_{i}^{r_{i}}(\E\otimes I)\A_{i-1}^{r_{i-1}}(\E\otimes I)\dots\A_{1}^{r^1}(\E\otimes I)(\rin_{0})$ and corresponding response $\rho_{i}^{\out}=(\E\otimes I)(\rin_{i}\otimes\rho_{anc})$
            \end{itemize}
            \item If the adversary is (weak) non-adaptive, $\A = \Ana$:
            \begin{itemize}
                \item $\C$ selects a challenge $\rin_{i}$ uniformly at random from $\Hilin$ and independent of $i$, while being able to prepare arbitrary copies of each challenge.
                \item $\C$ queries the PUF with $\rin_{i}$ and produces the response $\E(\rin_{0})$.
                \item $\C$ issues to $\Ana$ the set of random challenges $\bigotimes_{i=1}^{q}\rin_{i}$ and their respective responses $\bigotimes_{i=1}^{q} \rout_{i}$.
            \end{itemize}
        \end{itemize} 

    \item \textbf{Challenge phase.} 
    \begin{itemize}
        \item $\C$ chooses a challenge $\rc$ uniformly at random from challenge domain $\Hilin$. $\C$ can produce multiple copies of the challenge, and the respective response locally.
        \item $\C$ issues $\rc$ to $\A$.
    \end{itemize}
    
    \item \textbf{Guess phase.}
    \begin{itemize}
        \item For the challenge $\rc$, $\A$ produces his forgery $\sigr\leftarrow \A(1^{\lambda},\rc,\{(\rho_{i}^{\inp},\rho_{i}^{\out})\})$ and sends to $\C$.
        \item $\C$ runs a verification algorithm $b \leftarrow \texttt{Ver}(\sigr, \rr, \rho_{\C})$, to check the fidelity of the responses. Where $\rr = \E(\rc)$ is the correct output, $\rho_{\C}$ is the local register of the challenger that can include extra copies of correct output if necessary for the verification, and $b \in \{0,1\}$.
        \item $\C$ outputs $b$. $\A$ wins if $b=1$.
    \end{itemize}
    
\end{itemize}
\end{game}

Finally, the security definitions can be defined based on this game, similar to definitions~\ref{def:uni-unf-adapt} and ~\ref{def:uni-unf-weak}.

\section{Formal construction of HPUF}\label{ap:hpuf-formal}

We have illustrated our HPUF construction in the main text. Here in Construction \ref{cons:hpuf_1}, we give the formal description of our HPUF design which is based on conjugate coding \cite{W1983}.
For our construction, we start with a classical PUF (CPUF) that has a certain amount of randomness (also denoted as min-entropy). To increase the min-entropy further, we encode the output of the CPUF into non-orthogonal quantum states and send the qubits through the communication channel. We refer to the entire system, i.e., CPUF together with a quantum encoding as hybrid PUF (HPUF).  

\begin{construction}[Hybrid PUF]
\label{cons:hpuf_1}
Suppose $f:\{0,1\}^n \rightarrow \{0,1\}^{4m}$ be a classical PUF, that maps an $n$-bit string $x_i \in \{0,1\}^n$ to an $4m$-bit string output $y_i \in \{0,1\}^{4m}$. We denote the $j$-th bit of $y_i$ as $y_{i,j} \in \{0,1\}$. From the $4m$-bit string, we prepare the set of $2m$-tuples $\{(y_{i,(2j-1)}, y_{i,2j})\}_{1\leq j \leq 2m}$. The hybrid PUF encodes each of the tuples $(y_{i,(2j-1)}, y_{i,2j})$ into a single qubit $|\psi^{i,j}\rangle$ (also known as \emph{BB84 states}). The exact expression of the encoding is defined in the following way,

\begin{equation}
    |\psi^{i,j}_{\out}\rangle\langle\psi^{i,j}_{\out}| := 
    \begin{cases}
        |0\rangle\langle 0|~&(y_{i,(2j-1)}, y_{i,2j})= (0,0)\\
        |1\rangle\langle 1|~&(y_{i,(2j-1)}, y_{i,2j})= (1,0)\\
        |+\rangle\langle +|~&(y_{i,(2j-1)}, y_{i,2j})= (0,1)\\
        |-\rangle\langle -|~&(y_{i,(2j-1)}, y_{i,2j})= (1,1)
    \end{cases}
\end{equation}

For any $x_i \in \{0,1\}^n$, the mapping of the HPUF $\E_{f} : \{0,1\}^{n} \rightarrow (\h^{2})^{\otimes 2m}$ is defined as follows.
\begin{equation}
    \label{mqp:hpuf1}
    x_i \rightarrow |\psi^i_{\out}\rangle\langle\psi^i_{\out}|~~(\text{or}~ |\psi_{f(x_i)}\rangle\langle \psi_{f(x_i)}|)
\end{equation}
where $|\psi^i_{\out}\rangle\langle\psi^i_{\out}|=\bigotimes_{j=1}^{2m} |\psi^{i,j}_{\out}\rangle\langle\psi^{i,j}_{\out}|$.
\end{construction}

\section{Hybrid Locked PUF}
\label{ap:hlpuf-const-security}
In this section, we give the first construction for lockdown mechanics in the quantum setting. We use our proposed HPUF construction to increase the security of the classical PUFs against quantum adversaries and then we combine it with our quantum locking mechanism and construct a Hybrid Locked PUF (HLPUF) that resits powerful quantum adaptive adversaries. We then give a PUF-based authentication based on HLPUF and analyse its security. 

\subsection{Lockdown technique for Hybrid PUF}
In construction~\ref{cons:hlpuf} we show how to apply the lockdown technique on a hybrid PUF. We refer to such HPUFs with the lockdown technique as the hybrid locked PUFs (HLPUFs).  We formalise the construction as follows: 

\begin{construction}[HLPUF]
\label{cons:hlpuf}
Suppose we have a hybrid PUF $\E_f$ where $f:\{0,1\}^n \rightarrow \{0,1\}^{4m}$ is a CPUF. The mapping of the HLPUF $\E^{L}_{f}:\din\times \h^{d_{\out_{1}}}\rightarrow \h^{d_{\out_{2}}}\otimes\h^{\perp}$ corresponding to a hybrid PUF $\E$ is defined as follows:
\begin{equation}
    \label{eq:qlpuf_2}
    (x_i,\tilde \rho_1) \rightarrow \begin{cases}
    & \hspace{-0.15in}\ket{\psi_{f_2(x_i)}}\bra{\psi_{f_2(x_i)}}~\text{if } \texttt{Ver}(\ket{\psi_{f_1(x_i)}}\bra{\psi_{f_1(x_i)}},\tilde \rho_1) = 1\\
    & \perp \hspace{2cm}\text{otherwise.}
    \end{cases}
\end{equation}
where $\texttt{Ver}(.,.)$ is verification algorithm that checks the equality of the first half of the response based on the classical response $y^1_i$. To be precise, $\texttt{Ver}(.,.)$ is specified by measuring each qubit of the incoming quantum state with corresponding basis according to $\{y_{i,2j}\}_{1\leq j\leq 2m}$ of response $y_i$ and check the equality $\texttt{Equal}(y_{i,2j}, \tilde y_{i,2j})_{1\leq j\leq 2m}$ in our construction.
\end{construction}

\begin{figure}[!h]
    \centering
    \begin{tikzpicture}[
    node distance = 3mm, every node/.style = {rectangle, rounded corners, align=center, scale=0.95}]
    \node (puf) [draw, inner xsep=1mm, inner ysep=3mm] {HPUF $\E_f$};
    \node [coordinate, right=1.8cm of puf] (ADL){};
    \node (first) [below=1cm of ADL, draw, inner xsep=1mm, inner ysep=3mm] {$\texttt{Ver}(\ket{\psi_{f_1(x_i)}}\bra{\psi_{f_1(x_i)}},\tilde \rho_1)$};
    \node (second) [right=3.5cm of puf,draw, inner xsep=1mm, inner ysep=3mm] {Output};
    \node[name=outer1, dashed, fit=(puf) (first) (second), draw, inner xsep=3.8mm, inner ysep=3mm] {};
    \node (input_1)[left=0.4cm of puf] {$x_i$};
    \node (output_1) [right=0.5cm of puf, inner ysep=1mm] {$\ket{\psi_{f(x_i)}}\bra{\psi_{f(x_i)}}$};
    \node (input_2)[left=1.9cm of first] {$\tilde \rho_1$};
    \node (output_2) [right=0.4cm of second] {$\ket{\psi_{f_2(x_i)}}\bra{\psi_{f_2(x_i)}}$};
    \node (output_3) [below=0.1mm of output_2] {$/\perp$};
    \draw[-latex'] (input_1) -- (puf); 
    \draw[-latex'] (puf) -- (output_1); 
    \draw[-latex'] (output_1) -- (second); 
    \draw[-latex'] (output_1.south -| first.north) -- (first); 
    \draw[-latex'] (input_2) -- (first);
    \draw[-latex'] (second) -- (output_2);

    \draw[-latex'] (first) -| node[pos=0.7,right,font=\footnotesize] {b} (second);
    \end{tikzpicture}
    \caption{Hybrid Locked PUF (HLPUF) $\E^L_f$. The verification algorithm $\texttt{Ver}(.,.)$ is specified by measurement as described in Construction \ref{cons:hlpuf}. Here, $\ket{\psi_{f(x_i)}}\bra{\psi_{f(x_i)}}=\ket{\psi_{f_1(x_i)}}\bra{\psi_{f_1(x_i)}}\otimes\ket{\psi_{f_2(x_i)}}\bra{\psi_{f_2(x_i)}}$
    \label{fig:qlpuf}}
\end{figure}


\subsection{Security Analysis}\label{ap:hlpuf-security}
In this section, we give a comprehensive security analysis of the previously proposed constructions. First, we show that using hybrid construction will exponentially improve the security of classical PUFs. More precisely, it will exponentially decrease the success probability of a quantum adversary in the universal unforgeability game, compared to a classical PUF with the same number of learning queries. Further, we show how much quantum communication can improve the security of a weaker classical PUF and as a result propose an efficient and secure construction that can be built using existing classical PUFs. Finally, we analyse the completeness and security of the hybrid PUF-based device authentication protocol and show that under the assumption that the inherent classical PUF resists the weak quantum adversary, the HLPUF-based protocol will be secure against an adaptive adversary.

\subsubsection{Assumptions on the CPUFs}
\label{sec:assump}
For the security analysis of our constructions, we consider the following assumptions of the CPUFs $f:\{0,1\}^n \rightarrow \{0,1\}^{4m}$.

\begin{enumerate}
    \item For any input $x \in \{0,1\}^n$ the probability distributions of the $4m$ output bits $f(x)_1, \ldots , f(x)_{4m}$ are independent and identically distributed (i.i.d).
    \item The output distributions $\{p^f_x(y)\}_{y \in \{0,1\}^{4m}}$ for all the inputs $x$ are independent and identically distributed (i.i.d).
\end{enumerate}

\subsubsection{Security of the HPUFs against weak adversaries}

Intuitively the security of our HPUF comes from the indistinguishability property of the non-orthogonal quantum states. In Theorem \ref{thm:cpuf_vs_hpuf}, we first show that the HPUFs are at least as secure as the underlying CPUFs. Here we only give the proof sketch, later in Appendix \ref{apn:thm1} we give the detailed proof. 

\begin{theorem}
\label{thm:cpuf_vs_hpuf}
Let $f:\{0,1\}^n \rightarrow \{0,1\}^{2m}$ be a classical PUF. If there is no QPT weak adversary who can win the universal unforgeability game for CPUF with more than a negligible probability in the security parameter, then the HPUF constructed from $f$ according to construction~\ref{cons:hlpuf}, is also universally unforgeable. 
\end{theorem}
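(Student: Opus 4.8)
### Proof Strategy for Theorem~\ref{thm:cpuf_vs_hpuf}

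The plan is to prove the contrapositive: I will assume that the HPUF $\E_f$ constructed from $f$ \emph{can} be forged by some QPT weak adversary $\Ana^Q$ with non-negligible probability, and from this I will construct a QPT weak adversary $\Ana^C$ against the underlying classical PUF $f$ that also succeeds with non-negligible probability. This reduction-based approach is natural here because the security statement is a one-directional implication (``CPUF secure $\Rightarrow$ HPUF secure''), and the cleanest way to transfer unforgeability is to show that any winning strategy on the quantum side yields a winning strategy on the classical side.

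The key steps, in order, are as follows. First, I would set up the simulation: $\Ana^C$ plays the weak-adversary unforgeability game $\Gn(\Ana^C,\lambda)$ against $f$ and receives $q$ random challenge–response pairs $\{(x_i, y_i)\}_{i=1}^q$ with $y_i = f(x_i) \in \{0,1\}^{2m}$. The reduction $\Ana^C$ then internally prepares, for each pair, the corresponding BB84 encoding $\ket{\psi_{f(x_i)}}\bra{\psi_{f(x_i)}}$ according to the encoding rule of Construction~\ref{cons:hpuf_1}, and feeds the resulting quantum challenge–response pairs $\{(x_i, \ket{\psi_{f(x_i)}}\bra{\psi_{f(x_i)}})\}_{i=1}^q$ to the internal HPUF-adversary $\Ana^Q$ as its learning-phase database. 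This step crucially uses the fact that $\Ana^C$ \emph{knows} the classical strings $y_i$, so it can prepare the encodings exactly; this is the sense in which simulating the HPUF oracle is free for the classical adversary. Second, in the challenge phase, both games draw the challenge $\tilde c$ uniformly from the same domain $\{0,1\}^n$, so $\Ana^C$ simply forwards $\tilde c$ to $\Ana^Q$. Third, when $\Ana^Q$ outputs its quantum forgery $\sigma^r$ (a candidate for $\ket{\psi_{f(\tilde c)}}$), the reduction $\Ana^C$ must convert this into a \emph{classical} forgery $\tilde y$ of $f(\tilde c)$; I would do this by measuring each qubit of $\sigma^r$ in the appropriate basis dictated by the second bit of each tuple and reading off the classical outcome, thereby recovering a $2m$-bit string to submit to $\C$.

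The main obstacle I expect is the final conversion step and the accounting of success probabilities. Winning the HPUF game means $\texttt{Ver}(\sigma^r, \ket{\psi_{f(\tilde c)}}) = 1$, i.e. the forged quantum state is within trace-distance $\epsilon$ of the correct encoding; I must argue that this quantum closeness translates into the classical forgery $\tilde y$ matching $f(\tilde c)$ (in the relevant distance/robustness sense of the CPUF game) with a probability that remains non-negligible. Because the four BB84 states are pairwise non-orthogonal, a single measurement cannot perfectly distinguish them, so there is inevitable loss in the conversion; the argument must show this loss is at worst a constant (or polynomial) factor, not an exponential one, so that non-negligible HPUF-success implies non-negligible CPUF-success. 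Here I would lean on the i.i.d.\ assumptions on the output bits (Section~\ref{sec:assump}) to handle the $2m$ qubits independently, and on the fact that $\epsilon$-closeness in trace distance bounds the per-qubit measurement error. A subtle point to handle carefully is that the HPUF forgery verification ($\texttt{Ver}$ with threshold $\epsilon$) and the CPUF forgery verification (distance/equality on bits) are defined differently, so I would need to reconcile these two success criteria, making explicit how a trace-distance guarantee on the joint $2m$-qubit state yields a bit-agreement guarantee sufficient to pass the classical $\ver$. Since the theorem only claims that the HPUF is ``at least as secure'' as the CPUF, it suffices to show the reduction preserves non-negligibility, which should follow once these verification notions are aligned.
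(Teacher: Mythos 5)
Your overall reduction skeleton (contrapositive; simulate the HPUF learning phase by locally encoding the classical CRPs into BB84 states; forward the challenge; convert the quantum forgery back into a classical string) is the right frame and matches the spirit of the paper's argument. However, your conversion step has a genuine gap, and it is exactly the point where the paper's proof does something you have not. First, you propose to measure each qubit of $\sigma^r$ ``in the appropriate basis dictated by the second bit of each tuple'' --- but that second bit is part of $f(\tilde c)$, the very string you are trying to forge, so the reduction cannot know which basis to use. Second, and more fundamentally, any single-copy measurement of an unknown BB84 state succeeds in recovering both encoded bits only with some constant probability bounded away from $1$ (on the order of $0.85$ per bit, as in Lemma~\ref{lem:guess_prob}); under the i.i.d.\ treatment you invoke, this per-qubit constant compounds over the $m$ qubits to a success probability that is exponentially small in $m$. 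So your reduction does \emph{not} preserve non-negligibility: a non-negligible HPUF forger would be converted into a CPUF forger with negligible success, and the contrapositive fails. You correctly flag this as the main obstacle but the tools you propose (i.i.d.\ decomposition, trace-distance closeness) do not resolve it --- they are precisely what produces the exponential loss.

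The paper closes this gap with one additional idea: the HPUF forger is an \emph{algorithm}, so by fixing its internal randomness the reduction can re-run it $K$ times on the same challenge and obtain $K$ identical copies of the forged state $\ket{\psi_{\tilde f(x^*)}}$. With multiple copies of a state promised to lie in $\{\ket{0},\ket{1},\ket{+},\ket{-}\}$, one can identify it almost perfectly: measure successive copies in the computational basis, and if the outcomes ever disagree the state must be an $X$-basis state (then measure one more copy in the $\{\ket{+},\ket{-}\}$ basis); the only failure mode is all $K$ computational-basis outcomes agreeing for an $X$-basis state, which happens with probability $2^{-K}$. This recovers $f(x^*)$ with probability at least $(1-2^{-K})$ times the forger's success probability, which remains non-negligible, and it sidesteps both of the problems above (not knowing the basis, and the single-copy distinguishability ceiling). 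Without the multi-copy observation or some equivalent device, your proof does not go through.
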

\begin{proof}[Proof Sketch.]
  Here we prove the theorem using a contrapositive argument, i.e., we show that if any QPT weak adversary can forge the HPUF, then it can also forge the underlying CPUF efficiently. If any QPT weak adversary can forge the HPUF, i.e., win the universal unforgeability game with a non-negligible probability, then for a random challenge $x^* \in_R \{0,1\}^n$ it can produce the correct output state $|\psi_{f(x^*)}\rangle$. Note that, the adversary can produce multiple copies of the output state $|\psi_{f(x^*)}\rangle$ by fixing all the internal parameters of the attack algorithm to the same values. The forged quantum state $|\psi_{f(x^*)}\rangle$ is a product state of $m$ qubit states, where each qubit belongs to the set $\{|0\rangle, |1\rangle, |+\rangle,|-\rangle\}$. If the adversary has multiple copies of each qubit, then it can perform full state tomography just by measuring them in the $\{|0\rangle,|1\rangle\}$-basis, and $\{|+\rangle,|-\rangle\}$-basis. Thus, it can learn $f(x^*)$ from $|\psi_{f(x^*)}\rangle$ with probability arbitrarily close to one. Therefore, it can forge the CPUF with a non-negligible probability. This concludes the proof sketch. The full proof is given in Appendix~\ref{apn:thm1}. 
\end{proof}

The above theorem is an intuitive result that shows HPUF is stronger or at least as strong as the underlying CPUF. Although we want to prove a more powerful and explicit statement regarding HPUFs by quantifying how much the hybrid construction will boost security. In fact, we want to show that one can construct a secure unforgeable HPUF against a quantum adversary even if the underlying CPUF is breakable (with a certain probability) against the classical forger. To this end, we compare the success probability of a QPT adversary in breaking the HPUF in the universal unforgeability game, with the success probability of the adversary who breaks the CPUF with a certain non-negligible probability in a fixed query setting. This will allow us to show that some of the weak and considerably broken CPUFs can still be used to construct an asymptotically secure HPUF against stronger quantum adversaries since the quantum encoding drastically decreases the success probability. 

In Lemma \ref{lem:guess_prob}, first we give an upper bound on the adversary's guessing probability of the response $f(x_i)$ corresponding to a challenge $x_i$ and a single copy of the quantum response state $|\psi_{f(x_i)}\rangle$. The complete proof can be found in Section \ref{appn:lem1}.

\begin{lemma}
\label{lem:guess_prob}
Suppose $f:\{0,1\}^n \rightarrow \{0,1\}^{4m}$ be a CPUF with the following property,
\begin{equation}
    \forall~~x_i\in \{0,1\}^n, \forall~~1\leq j \leq 4m,~~p^{f}_{x_i}(y_{i,j} = 0) =\frac{1}{2}+\delta_r, 
\end{equation}
with a biased distribution $p=\frac{1}{2}+\delta_r$ where $0 \leq \delta_r \leq \frac{1}{2}$, and $\E_f$ be a HPUF corresponding to $f$ that we construct using Construction \ref{cons:hpuf_1}. Let a quantum adversary $\A^{i,j}_{guess}$ extract the value $y_{i,(2j-1)}$ out of $(y_{i,(2j-1)},y_{i,2j})$ from quantum state $|\psi^{i,j}_{\out}\rangle \langle \psi^{i,j}_{\out}|$ corresponding to a random challenge $x_i$. If all the output bits of the CPUF are independent and identically distributed, then for any quantum adversary $\A^{i,j}_{guess}$, and  $\forall~x_i\in \{0,1\}^n$,
\begin{align}
    \nonumber
    p_{\guess} &:= \Pr[\A^{i,j}_{guess}(x_i, |\psi^{i,j}_{\out}\rangle \langle \psi^{i,j}_{\out}|)= y_{i,(2j-1)}] \\
    \nonumber &\leq p(1+\sqrt{p^2+(1-p)^2})\\
    &\leq p(1+\sqrt{2}p)
\end{align}
\end{lemma}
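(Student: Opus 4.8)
The plan is to recognize that recovering $y_{i,(2j-1)}$ from a single copy of $|\psi^{i,j}_{\out}\rangle$ is exactly a \emph{binary minimum-error state discrimination} problem and to solve it via the Helstrom bound. First I would use the i.i.d. assumption to write down the ensemble seen by $\A^{i,j}_{guess}$. Since $\Pr[y_{i,(2j-1)}=0]=p$ and, independently, $\Pr[y_{i,2j}=0]=p$, conditioning on the bit to be guessed groups the four BB84 states into two normalized hypotheses: $\rho_0 = p\,|0\rangle\!\langle0| + (1-p)\,|+\rangle\!\langle+|$ when $y_{i,(2j-1)}=0$, and $\rho_1 = p\,|1\rangle\!\langle1| + (1-p)\,|-\rangle\!\langle-|$ when $y_{i,(2j-1)}=1$, occurring with prior probabilities $p$ and $1-p$ respectively. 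Folding the priors in, I set the subnormalized operators $\sigma_0 = p\,\rho_0$ and $\sigma_1 = (1-p)\,\rho_1$.

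Next I would invoke the Helstrom theorem for two-state discrimination: for any POVM the success probability of deciding between $\sigma_0$ and $\sigma_1$ is at most $\tfrac12\big(1 + \norm{\sigma_0 - \sigma_1}_1\big)$, with equality for the optimal measurement. Since $\A^{i,j}_{guess}$ holds exactly one copy and must output a single bit, this gives $p_{\guess} = \tfrac12\big(1 + \norm{\sigma_0 - \sigma_1}_1\big)$, so the whole task reduces to one trace-norm evaluation.

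The only genuine computation is that trace norm. Writing $\sigma_0 - \sigma_1$ in the computational basis yields a real symmetric $2\times2$ matrix whose diagonal entries differ by $p$ and whose off-diagonal entries equal $\tfrac{1-p}{2}$, so its eigenvalues are $\tfrac{2p-1}{2}\pm\tfrac12\sqrt{p^2+(1-p)^2}$. Because $p\ge\tfrac12$ one checks $(2p-1)^2 \le p^2+(1-p)^2$, hence the two eigenvalues have opposite signs and $\norm{\sigma_0-\sigma_1}_1 = \sqrt{p^2+(1-p)^2}$. This yields the exact value $p_{\guess} = \tfrac12\big(1+\sqrt{p^2+(1-p)^2}\big)$.

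Finally I would relax this exact value to the stated bound. Writing $s=\sqrt{p^2+(1-p)^2}$, the hypothesis $p\ge\tfrac12$ gives $\tfrac12(1+s) \le p(1+s)$, which is the first claimed inequality; and $(1-p)^2\le p^2$ (again from $p\ge\tfrac12$) gives $s\le\sqrt2\,p$, hence the second inequality $p(1+s)\le p(1+\sqrt2\,p)$. I expect the only subtlety to be conceptual rather than technical: one must justify that the one-shot guessing probability is \emph{exactly} the Helstrom expression (which is standard for two hypotheses and a single copy) and then carry out the $2\times2$ diagonalization correctly. Notably, the Helstrom route gives a bound strictly tighter than the lemma claims, the stated form being recovered only through the two elementary relaxations above.
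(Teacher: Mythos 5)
Your proposal is correct and follows essentially the same route as the paper: both reduce the task to minimum-error discrimination of the two mixtures $\rho_0=p\,|0\rangle\!\langle0|+(1-p)\,|+\rangle\!\langle+|$ and $\rho_1=p\,|1\rangle\!\langle1|+(1-p)\,|-\rangle\!\langle-|$, invoke the Helstrom--Holevo bound, and evaluate a $2\times2$ trace norm. The only difference is that you apply the exact prior-weighted Helstrom formula to get $p_{\guess}=\tfrac12\bigl(1+\sqrt{p^2+(1-p)^2}\bigr)$ and then relax it, whereas the paper bounds the prior $1-p$ by $p$ first and lands directly on the looser intermediate expression $p\bigl(1+\tfrac12\norm{\rho_0-\rho_1}_1\bigr)=p\bigl(1+\sqrt{p^2+(1-p)^2}\bigr)$; both yield the stated inequality, with your version strictly tighter along the way.
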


Lemma \ref{lem:database_dest1} shows that the adversary needs to extract the classical information $f(x)$ that is encoded in the quantum state $|\psi_{f(x)}\rangle$ for the forgery of the HPUFs. Here we only state the lemma, and for the complete proof we refer to Appendix \ref{appn:lem2}.

\begin{lemma}
\label{lem:database_dest1}
Suppose $|D_q\rangle = \bigotimes_{i=1}^q \left(|x_i\rangle_C \otimes |\psi_{f(x_i)}\rangle_R \right)$ denotes the adversary's database of $q$ random CRPs that are generated from a HPUF $\E_f:\{0,1\}^n \rightarrow (\h^2)^{\otimes m}$. If $E(D_q)$ denotes the measurement strategy for forging the HPUF with probability $p_{\text{forge}}$ using the database $D_q$, then using the following measure-then-forge strategy that can forge the HPUF with the same probability $p_{\forge}$.
\begin{itemize}
    \item Adversary extracts the classical encoding $\{f(x_i)\}_{1 \leq i \leq q}$ from $|D_q\rangle$. Let $\{\tilde f(x_i)\}_{1 \leq i \leq q}$ denotes the extracted classical string.
    \item The QPT adversary applies a forging strategy using the extracted data set $\{\tilde f(x_i)\}_{1 \leq i \leq q}$.
\end{itemize}
\end{lemma}

Lemma \ref{lem:database_dest1} suggests that the optimal adversary first needs to extract the classical information from the database state $|D_q\rangle$, and then perform the modelling attack to guess $|\psi_{f(x^*)}\rangle$. In general, if the extracted classical information $\{\tilde f(x_i)\}_{1 \leq i \leq q}$ from the database state $|D_q\rangle$ is very far from the original encoded string $\{f(x_i)\}_{1 \leq i \leq q}$ then it would be difficult for the adversary to forge the HPUF, based on that noisy data set. Here, we define the distance between $\tilde D^x_q = \{\tilde f(x_i)\}_{1 \leq i \leq q}$, and $D^x_q = \{f(x_i)\}_{1 \leq i \leq q}$ as follows.
\begin{equation}
    \dist(\tilde D^x_q, D^x_q) := \frac{\sum_{i=1}^q \text{Mis-match}(\tilde f(x_i),f(x_i))}{q},
\end{equation}
where we define $\text{Mis-match}(\tilde f(x_i),f(x_i))$ as follows.

\begin{equation}
    \text{Mis-match}(\tilde f(x_i),f(x_i)) := 
    \begin{cases}
    & 1 ~~~~~\text{If }(\tilde f(x_i) \neq f(x_i))\\
    & 0 ~~~~~\text{Otherwise.}
    \end{cases}
\end{equation}
It is reasonable to assume that no forging strategy can forge the HPUF with a non-negligible probability that runs on the noisy database set $\tilde D^x_q$ such that $\dist(\tilde D^x_q, D^x_q) > \varepsilon$, where $0\leq\varepsilon \leq 1$ is a parameter that quantifies the error threshold. In the next lemma, we give an upper bound on extracting $\tilde D^x_q$ from $|D_q\rangle$ such that $\dist(\tilde D^x_q, D^x_q) \leq \varepsilon$. Intuitively, a robust HPUF is with low $\varepsilon$ such that an adversary can not forge it with a noisy data set that is very far away from the original data set. Otherwise, the $\varepsilon$ should be high with a bad HPUF.

\begin{lemma}
\label{lem:pguess}
Suppose $|D_q\rangle = \bigotimes_{i=1}^q \left(|x_i\rangle_C \otimes |\psi_{f(x_i)}\rangle_R \right)$ denotes the adversary's database of $q$ random CRPs that are generated from a HPUF $\E_f:\{0,1\}^n \rightarrow (\h^2)^{\otimes m}$. If $\tilde D_q$ denotes the noisy classical response set that is extracted from $|D_q\rangle$ such that $\dist(D_q, \tilde D_q) \leq \varepsilon$ with probability $p_{\extract}$, then
\begin{equation}
    p_{\extract} \leq \sum_{k=(1-\varepsilon)q}^{q}\binom{q}{k} (p_{\guess})^{2mk}(1-(p_{\guess})^{2m})^{q-k},
\end{equation}
where $p_{\guess} \leq p(1+\sqrt{2}p)$, defined in Lemma \ref{lem:guess_prob}.
\end{lemma}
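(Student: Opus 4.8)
The plan is to bound $p_{\extract}$ by combining the single-bit guessing bound from Lemma~\ref{lem:guess_prob} with the i.i.d.\ structure of the CPUF outputs and a straightforward counting argument over how many of the $q$ responses are correctly extracted. First I would fix a measure-then-forge adversary, which is justified as optimal by Lemma~\ref{lem:database_dest1}, so that extracting $\tilde D_q$ from $\ket{D_q}$ reduces to independently measuring each of the $q$ response registers $\ket{\psi_{f(x_i)}}$. Since each response $\ket{\psi_{f(x_i)}}$ is a tensor product of $2m$ single-qubit BB84 states, and by the assumptions in Section~\ref{sec:assump} the output bits are i.i.d., the event that the adversary correctly extracts the \emph{entire} string $f(x_i)$ factorises over the $2m$ qubits. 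Using Lemma~\ref{lem:guess_prob}, each single-qubit (equivalently, two-bit-tuple) extraction succeeds with probability at most $p_{\guess}$, so the probability of perfectly recovering one full response $f(x_i)$ is at most $(p_{\guess})^{2m}$.

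Next I would treat the $q$ independent responses as a sequence of Bernoulli trials, where a ``success'' on trial $i$ means $\tilde f(x_i) = f(x_i)$, occurring with probability at most $(p_{\guess})^{2m}$, and a ``failure'' means a mismatch. The condition $\dist(\tilde D_q, D_q) \leq \varepsilon$ is exactly the statement that the number of mismatches is at most $\varepsilon q$, i.e.\ the number of correctly extracted responses $k$ satisfies $k \geq (1-\varepsilon)q$. Hence $p_{\extract}$ equals the probability that at least $(1-\varepsilon)q$ of the $q$ trials succeed, which is the upper tail of a binomial distribution. Summing the binomial probability mass over $k$ from $(1-\varepsilon)q$ to $q$, and substituting the per-trial success bound $(p_{\guess})^{2m}$ together with its complementary failure bound $1-(p_{\guess})^{2m}$, yields precisely
\begin{equation}
    p_{\extract} \leq \sum_{k=(1-\varepsilon)q}^{q}\binom{q}{k} (p_{\guess})^{2mk}(1-(p_{\guess})^{2m})^{q-k},
\end{equation}
which is the claimed bound.

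The main obstacle I anticipate is making the factorisation step fully rigorous: the adversary is not obliged to measure each response register separately, and in principle could use a joint measurement across all $q$ copies or a collective POVM on the $2m$ qubits of a single response. I would need to argue that because the challenges are distinct and chosen uniformly (so the registers carry independent randomness), and because the verification / extraction figure of merit $\dist$ decomposes additively over responses, the optimal extraction strategy can be taken to act register-by-register without loss; the i.i.d.\ assumptions of Section~\ref{sec:assump} are what license replacing a joint success probability by the product of marginal per-qubit bounds. A secondary subtlety is that the binomial expression is an upper bound rather than an equality: since $(p_{\guess})^{2m}$ is itself only an upper bound on the per-response success probability, one must check that substituting it into every term of the sum preserves the inequality. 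This follows because the binomial tail $\sum_{k \geq (1-\varepsilon)q}\binom{q}{k} p^k (1-p)^{q-k}$ is monotonically non-decreasing in the success parameter $p$, so replacing the true (smaller) success probability by the larger bound $(p_{\guess})^{2m}$ can only increase the sum, giving a valid upper bound on $p_{\extract}$.
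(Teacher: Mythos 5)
Your argument matches the paper's proof of Lemma~\ref{lem:pguess}: both reduce the extraction task to $q$ i.i.d.\ per-response Bernoulli trials, each succeeding with probability at most $(p_{\guess})^{2m}$ (via Lemma~\ref{lem:guess_prob} and the i.i.d.\ assumptions on the CPUF output bits), and then read off the binomial upper tail for at least $(1-\varepsilon)q$ successes. Your explicit check that the binomial tail is monotone in the success parameter, and your flagged concern about collective measurements across registers, are points the paper leaves implicit (resting only on its i.i.d.\ assumptions and on Lemma~\ref{lem:database_dest1}), so your write-up is if anything slightly more careful than the original.
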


\begin{proof}[Proof Sketch.] 
A $q$-query weak adversary gets a $q$ random outputs from the HPUF $\E_f:\{0,1\}^n \rightarrow (\h^2)^{\otimes m}$ along with $q$ bit random strings $X_i \in_R \{0,1\}^n$. Here each output state is $m$-qubit product state, where each qubit belongs to $\{|0\rangle, |1\rangle, |+\rangle, |-\rangle\}$, depending on the value of the random variable $f(X_i)$. In Lemma \ref{lem:guess_prob}, we show that the probability of guessing a single output bit is $p_{\guess}$. Due to the i.i.d assumption on the different output bits of a single outcome of the CPUF, the probability of guessing all the $2m$ output bits from the state $|\psi_{f(X_i)}\rangle$ is upper bounded by $(p_{\guess})^{2m}$.

Here, we would like to compute the probability of successfully guessing $f(X_i)$'s for at least $(1-\varepsilon)q$ random samples. We denote this probability as $p_{\extract}$. Due to the i.i.d assumption on the outcomes $f(X_i)$'s of the CPUF, the probability of guessing exactly $k$ responses out of $q$ responses is given by $\binom{q}{k} (p_{\guess})^{2m} (1-(p_{\guess})^{2m})^{q-k}$. Therefore, we get the following upper bound on the $p_{\extract}$.

\begin{equation}
    p_{\extract} \leq \sum_{k=(1-\varepsilon)q}^{q}\binom{q}{k} (p_{\guess})^{2mk}(1-(p_{\guess})^{2m})^{q-k}.
\end{equation}
This concludes the proof.
\end{proof}

To provide a better intuition of the expression of $p_{\extract}$ to show the exponential gap, we give in Figure \ref{fig:pextract} the evolution of $p_{\extract}$ for different values of $\varepsilon$. It means that with a bad HPUF with high $\varepsilon$, the $p_{\extract}$ converges to $1-negl(\lambda)$ as $q$, and the number of queries of the QPT weak adversary increases. Otherwise, for a smaller error threshold, corresponding to a better HPUF, it decreases exponentially with $q$. Later, we show in Section \ref{ap:simulation} the $\varepsilon$ of HPUF depends on its underlying CPUFs, and the machine-learning algorithm we use to forge the HPUF.
\begin{figure}[!h]
    \begin{minipage}{\textwidth}
    \centering
      \subcaptionbox{\label{fig:pextract04}}{\includegraphics[width=0.5\linewidth]{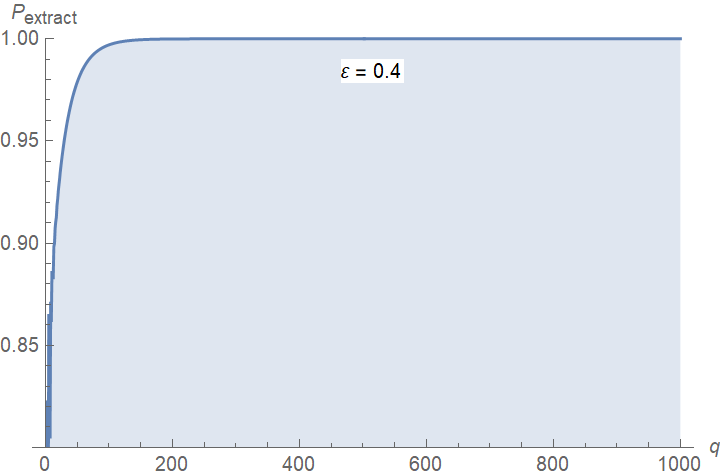}}\hspace*{\fill}
      \subcaptionbox{\label{fig:pextract015}}{\includegraphics[width=0.5\linewidth]{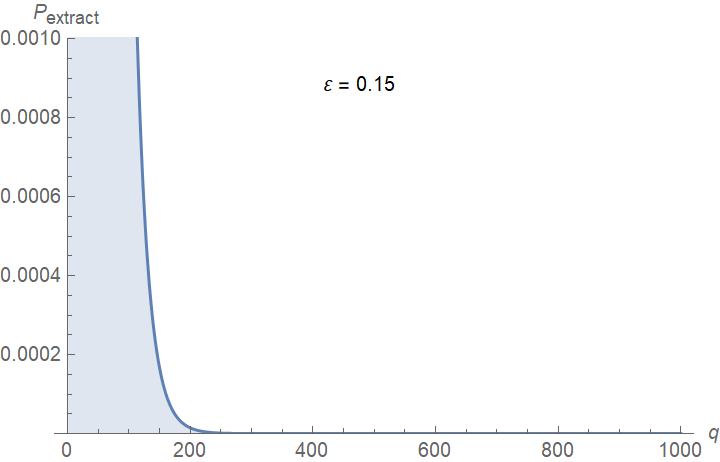}}
      \caption{Evolution of $p_{\extract}$ with different values of $\varepsilon$}\label{fig:pextract}
    \end{minipage}
\end{figure}

In the next theorem, we give an upper bound of the success probability of forging a HPUF by a QPT weak adversary.
\begin{theorem}
\label{thm:hpuf_prob_2}
Let  $f:\{0,1\}^n \rightarrow \{0,1\}^{4m}$ be a classical PUF with $p$-randomness, where $p = (\frac{1}{2}+\delta_r)$ with the following two properties. 
\begin{enumerate}
    \item Let any $q$-query weak adversary win the universal unforgeability game for the CPUF $f$ with probability at most $p_{\forge}^{\classical}(m,p,q) \geq nonnegl(\lambda)$.
    \item There is no QPT adversary that can win the universal unforgeability game for the CPUF using a noisy database $\tilde D_q$ such that $\dist(D_q,\tilde D_q) > \varepsilon$.
\end{enumerate} 
If we construct a HPUF $\E_f$ from such a CPUF $f$, then the $q$-query weak quantum adversary can win the universal  unforgeability game for the HPUF $\E_f$ with probability $p_{\forge}^{quantum}(x^*,p,|Q_q\rangle)$, such that,
\begin{equation}
   p_{\forge}^{\quantum}(x^*,p,|Q_q\rangle) \leq p_{\extract}\times p_{\forge}^{\classical}(m,p,q),
\end{equation}
where $$p_{\extract} \leq \sum_{k=(1-\varepsilon)q}^{q}\binom{q}{k} (p_{\guess})^{2mk}(1-(p_{\guess})^{2m})^{q-k}.$$
\end{theorem}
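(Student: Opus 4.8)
The plan is to assemble Theorem~\ref{thm:hpuf_prob_2} from the three preceding lemmas via a conditioning argument on the quality of the adversary's extracted database. First I would invoke Lemma~\ref{lem:database_dest1} to restrict, without loss of generality, to a \emph{measure-then-forge} adversary: the optimal $q$-query weak quantum adversary first measures its quantum database $|D_q\rangle = \bigotimes_{i=1}^q (|x_i\rangle_C \otimes |\psi_{f(x_i)}\rangle_R)$ to obtain a classical string $\tilde D_q = \{\tilde f(x_i)\}_{1\le i\le q}$, and then runs a purely classical forging algorithm on $\tilde D_q$. This reduces the quantum forgery to a classical forger operating on a possibly corrupted classical database, which is precisely the object the other two lemmas control.

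Next I would decompose $p_{\forge}^{\quantum}$ by the law of total probability over the measurement outcome. Writing $G$ for the event that the extraction is $\varepsilon$-close, i.e.\ $\dist(\tilde D_q, D_q)\le\varepsilon$, and $\neg G$ for its complement, the measure-then-forge structure gives
\begin{equation}
p_{\forge}^{\quantum} = \Pr[G]\,\Pr[\forge\mid G] + \Pr[\neg G]\,\Pr[\forge\mid\neg G].
\end{equation}
On $\neg G$ the extracted database satisfies $\dist(\tilde D_q, D_q)>\varepsilon$, so assumption~2 of the theorem forces $\Pr[\forge\mid\neg G]$ to be negligible and the second term drops. On $G$ the extracted database is a corrupted version of the true CRPs with at most $\varepsilon q$ mismatches; since the corrupted-database forger is itself a legitimate weak classical forging strategy, assumption~1 caps its success by $p_{\forge}^{\classical}(m,p,q)$ uniformly over all $\tilde D_q\in G$, hence $\Pr[\forge\mid G]\le p_{\forge}^{\classical}(m,p,q)$. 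Finally $\Pr[G] = p_{\extract}$ is exactly the probability of extracting an $\varepsilon$-close database, so $p_{\forge}^{\quantum}\le p_{\extract}\,p_{\forge}^{\classical}(m,p,q)$.

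To close, I would substitute the bound on $p_{\extract}$ established in Lemma~\ref{lem:pguess}, namely the binomial tail $\sum_{k=(1-\varepsilon)q}^{q}\binom{q}{k}(p_{\guess})^{2mk}(1-(p_{\guess})^{2m})^{q-k}$, with $p_{\guess}\le p(1+\sqrt{2}p)$ from Lemma~\ref{lem:guess_prob}, to recover the displayed inequality.

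The step I expect to require the most care is justifying the uniform bound $\Pr[\forge\mid G]\le p_{\forge}^{\classical}(m,p,q)$. This rests on a monotonicity intuition — that a forger handed a noisy database cannot outperform the best classical forger handed the exact CRPs — which must be phrased so that running the attack on the corrupted $\tilde D_q$ is recognised as a valid weak, $q$-query classical forging strategy whose success is therefore capped by assumption~1. One must also verify that the conditioning is legitimate, i.e.\ that after measurement the forging stage depends on $\tilde D_q$ alone (guaranteed by the measure-then-forge decomposition of Lemma~\ref{lem:database_dest1}), and that the i.i.d.\ assumptions on the CPUF outputs are in force so that $\Pr[G]$ genuinely equals the binomial tail of Lemma~\ref{lem:pguess}.
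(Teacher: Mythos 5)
Your proposal is correct and follows essentially the same route as the paper: reduce to measure-then-forge via Lemma~\ref{lem:database_dest1}, condition on whether the extracted database is $\varepsilon$-close (using assumption~2 to discard the far case and the monotonicity of the classical forger's success under database corruption to bound the close case by $p_{\forge}^{\classical}(m,p,q)$), and then substitute the binomial-tail bound on $p_{\extract}$ from Lemma~\ref{lem:pguess}. If anything, your explicit law-of-total-probability decomposition is slightly more careful than the paper's, which writes the forging probability directly as the product $p_{\extract}\times p_{\forge}^{\classical}(\tilde D_q, X^*,p)$ and leaves the discarding of the far-database branch implicit.
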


\begin{proof}
From Lemma \ref{lem:database_dest1}, we get that the optimal adversary's strategy is measure-then-forge. Let $\tilde D_q$ denotes the set of extracted database response. From the $2$nd property, we get that the adversary can forge the HPUF with a non-negligible probability if and only if $\dist(\tilde D_q, D_q) \leq \varepsilon$. Suppose $p_{\extract}$ denotes the optimal success probability of extracting $\tilde D_q$ from $|D_q\rangle$ such that $\dist(D_q, \tilde D_q) \leq \varepsilon$. If $p^{\classical}_{\forge}(\tilde D_q, X^*,p)$ denotes the optimal forging probability using the database $\tilde D_q$, then the total forging probability is given by the following equation.
\begin{equation}
    \label{eq:succ_prob_forge}
    p_{\forge}^{\quantum}(X^*,p,|D_q\rangle) = p_{\extract}\times p_{\forge}^{\classical}(\tilde D_q, X^*,p).
\end{equation}
Note that, the adversary's optimal forging probability with database $D_q$ is always higher than the optimal forging probability with the database $\tilde D_q$, i.e., \begin{equation}
    \label{eq:comp}
    p_{\forge}^{\classical}(m,p,q) \geq p_{\forge}^{\classical}(\tilde D_q, X^*,p).
\end{equation}
Substituting the relation in Equation \eqref{eq:comp} in Equation \eqref{eq:succ_prob_forge} we get the following expression of $p_{\forge}^{\quantum}(X^*,p,|D_q\rangle)$.
\begin{equation}
    \label{eq:succ_prob_forge_mod}
    p^{\quantum}_{\forge}(X^*,p,|D_q\rangle) \leq p_{\extract}\times p_{\forge}^{\classical}(m,p,q).
\end{equation}
From Lemma \ref{lem:pguess} we get that $p_{\extract} \leq \sum_{k=(1-\varepsilon)q}^{q}\binom{q}{k} (p_{\guess})^{2mk}(1-(p_{\guess})^{2m})^{q-k}$. By substituting the expression of $p_{\success}$ in Equation \eqref{eq:succ_prob_forge_mod}, we get the desired upper bound on the $p^{\quantum}_{\forge}(X^*,p,|D_q\rangle)$. This concludes the proof.
\end{proof}


The above result is a general statement for any fixed number of queries and compares the success probability of a weak adversary in breaking the unforgeability of CPUF and HPUF. Given this theorem, we can also easily state the following corollary that ensures the universal unforgeability of an HPUF constructed from a CPUF that does not provide suitable security, yet is not totally broken with overwhelming probability.

\begin{corollary}\label{cor:hpuf}
Let the success probability of any QPT weak-adversary in the universal unforgeability game with a CPUF $f:\{0,1\}^n \rightarrow \{0,1\}^{4m}$ with $p$-randomness, be at most $p^{\text{classic}}_{\forge}$, where $0 \leq p^{\text{classic}}_{\forge} \leq 1 - \text{non-negl}(2m)$. Then, there always exists an error threshold $0 < \varepsilon \leq 1$ for which the success probability of any QPT adversary in the universal unforgeability game for the HPUF $\E_f$, is at most $\epsilon(2m)$, which is a negligible function in the security parameter. Hence such HPUFs are universally unforgeable.
\end{corollary}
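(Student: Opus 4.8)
The plan is to read the corollary off directly from Theorem~\ref{thm:hpuf_prob_2}, whose bound $p_{\forge}^{\quantum}\le p_{\extract}\times p_{\forge}^{\classical}$ already isolates the quantum advantage in the single factor $p_{\extract}$. Since $p_{\forge}^{\classical}\le 1$ trivially, it suffices to exhibit a threshold $\varepsilon$ for which $p_{\extract}$ is negligible in the security parameter $2m$. In this reading the hypothesis $p_{\forge}^{\classical}\le 1-\text{non-negl}(2m)$ is used only to guarantee that the underlying CPUF is non-degenerate: under the i.i.d.\ output assumptions of Section~\ref{sec:assump}, a deterministic CPUF ($p=1$) would be fully learnable and would force $p_{\forge}^{\classical}\to 1$, so being bounded away from $1$ entails that the $p$-randomness satisfies $p<1$, and hence that the single-bit guessing probability $p_{\guess}\le p(1+\sqrt{2}p)$ of Lemma~\ref{lem:guess_prob} is itself strictly below $1$. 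I would establish this point first, since the whole argument collapses if $p_{\guess}=1$.

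The heart of the proof is then to treat $p_{\extract}$ (bounded in Lemma~\ref{lem:pguess}) as the upper tail of a binomial law. Writing $\mu:=(p_{\guess})^{2m}$ for the probability of extracting a full $2m$-bit response correctly, the bound of Lemma~\ref{lem:pguess} is exactly $\Pr[X\ge(1-\varepsilon)q]$ for $X\sim\mathrm{Binomial}(q,\mu)$. Because $\mu$ decays exponentially in $m$ while $1-\varepsilon$ stays a fixed constant, for large $m$ we sit deep in the upper tail ($1-\varepsilon>\mu$), and a direct estimate using $\mu^{k}\le\mu^{(1-\varepsilon)q}$ on the summation range together with $\sum_{k}\binom{q}{k}\le 2^{q}$ gives
\begin{equation}
p_{\extract}\le \mu^{(1-\varepsilon)q}\sum_{k=(1-\varepsilon)q}^{q}\binom{q}{k}\le \big(2\,(p_{\guess})^{2m(1-\varepsilon)}\big)^{q}.
\end{equation}
Taking logarithms, $\log_2 p_{\extract}\le q\big(1-2m(1-\varepsilon)\log_2(1/p_{\guess})\big)$, which for any fixed $\varepsilon<1$ is $\le -c\,m$ with $c=(1-\varepsilon)\log_2(1/p_{\guess})>0$ (using $q\ge1$) once $m$ is large enough. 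Hence $p_{\extract}\le 2^{-cm}=\epsilon(2m)$ is negligible, and so is $p_{\forge}^{\quantum}$. A sharper Chernoff bound on the binomial tail would improve the constant but is unnecessary for the asymptotic claim.

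To settle the existence clause of the statement I would exhibit the admissible range of $\varepsilon$ explicitly: any constant $\varepsilon$ bounded away from $1$ (for concreteness $\varepsilon\le\tfrac12$) keeps $1-\varepsilon$ bounded below, which is all the tail estimate requires, while smaller $\varepsilon$ — corresponding to a more robust HPUF in the sense of the second assumption of Theorem~\ref{thm:hpuf_prob_2} — only strengthens the bound. Thus the set of thresholds witnessing unforgeability is non-empty, which is precisely what the corollary asserts, and combining with the first paragraph yields $p_{\forge}^{\quantum}\le p_{\extract}\le\epsilon(2m)$.

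The main obstacle I anticipate is not the tail estimate, which is routine, but the interface between the hypothesis and the quantities of Theorem~\ref{thm:hpuf_prob_2}. One must argue carefully that ``$p_{\forge}^{\classical}\le 1-\text{non-negl}(2m)$'' genuinely forces $p_{\guess}$ away from $1$ rather than merely away from some other degenerate regime, and this chain passes through the $p$-randomness and the i.i.d.\ assumptions rather than being immediate. A further point of care is that the explicit bound $p_{\guess}\le p(1+\sqrt{2}p)$ is only non-vacuous for $p\lesssim 0.56$; outside that window the bound exceeds $1$ and conveys nothing, so the argument must either restrict to that randomness regime or appeal to the true (always sub-unit when $p<1$) value of $p_{\guess}$. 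I would flag this dependence explicitly rather than let it hide inside the asymptotics.
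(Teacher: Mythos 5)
Your proposal follows the same route as the paper — read the corollary off the product bound $p^{\quantum}_{\forge}\le p_{\extract}\times p^{\classical}_{\forge}$ of Theorem~\ref{thm:hpuf_prob_2} and argue that $p_{\extract}$ is negligible for a suitable $\varepsilon$ — but it is substantially more complete than what the paper actually writes. The paper's proof is a two-sentence assertion that the claim ``directly follows'' and that $p_{\extract}$ ``becomes negligibly small for a large family of $\varepsilon$,'' deferring to the proof of Theorem~\ref{thm:hpuf_prob_2} (which contains no such asymptotic analysis) and, implicitly, to the numerical plots of Figure~\ref{fig:pextract}. You supply the missing step: the crude binomial-tail estimate $p_{\extract}\le\bigl(2\,(p_{\guess})^{2m(1-\varepsilon)}\bigr)^{q}$ is correct and suffices (an even shorter route, once $(1-\varepsilon)q\ge 1$, is $p_{\extract}\le\Pr[X\ge 1]\le q(p_{\guess})^{2m}$ by a union bound). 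More importantly, you correctly flag the two points the paper glosses over: the argument needs $p_{\guess}$ bounded away from $1$, which must be extracted from the hypothesis $p^{\text{classic}}_{\forge}\le 1-\text{non-negl}(2m)$ via the $p$-randomness and i.i.d.\ assumptions rather than being immediate (and in particular a $p$ tending to $1$ with $m$, e.g.\ $p=1-1/m$, would make $(p_{\guess})^{2m}$ a constant and break the decay-in-$m$ claim); and the explicit bound $p_{\guess}\le p(1+\sqrt{2}p)$ of Lemma~\ref{lem:guess_prob} is vacuous for $p\gtrsim 0.56$, so one must fall back on the true Helstrom value being strictly below $1$. Neither caveat invalidates the corollary for the regime the paper intends (constant $p$ bounded away from $1$, constant $\varepsilon<1$, $q=\mathrm{poly}(m)$), but your version makes the dependence explicit where the paper leaves it implicit.
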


This directly follows from Theorem~\ref{thm:hpuf_prob_2} where $p^{\text{classic}}_{\forge} = p_{\forge}^{\classical}(m,p,q)$ for any $q = poly(m)$ is a value between $0$ and $1$, and not negligibly close to $1$. As shown in the proof of Theorem~\ref{thm:hpuf_prob_2} in the Appendix, for a large family of $\varepsilon$ the first part of the probability, namely $p_{\extract}$ becomes negligibly small (in $2m$) and hence the overall probability becomes a negligible function $\epsilon(2m)$.

\subsubsection{Security of the HLPUFs against general adaptive adversaries}

In the last two theorems, we analyse the security of the HPUFs against only weak adversaries. In Theorem \ref{thm:hpuf_auth} we show that if the HPUFs are secure against the weak adversaries then with the lockdown technique we can make the HLPUFs secure against the adaptive adversaries.  

\begin{theorem}
\label{thm:hpuf_auth}
Let $\E_f: \{0,1\}^{n}  \rightarrow (\h^{2})^{\otimes m} \otimes (\h^{2})^{\otimes m}$ be a hybrid PUF that we construct from a classical PUF $f:\{0,1\}^n \rightarrow \{0,1\}^{2m} \times \{0,1\}^{2m}$ and let $\E^L_{f}: \{0,1\}^{n} \times (\h^{2})^{\otimes m}\rightarrow (\h^{2})^{\otimes m}$ denotes the HLPUF that we construct from $\E_f$ using the Construction \ref{cons:hlpuf}.
If $\E_f = \E_{f_1} \otimes \E_{f_2}$ and if each of the mappings $\E_{f_1},\E_{f_2}$ has $(\epsilon, m)$-universal unforgeability against the $q$-query weak adversaries, then the corresponding HLPUF $\E^L_f$ is $(\epsilon, m)$-secure against the $q$-query adaptive adversaries.
\end{theorem}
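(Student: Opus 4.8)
The plan is to prove the contrapositive by a reduction: from any $q$-query adaptive adversary $\Aad$ that wins the universal unforgeability game for $\E^L_f$ with probability exceeding $\epsilon$, I would construct a $q$-query weak adversary that breaks the unforgeability of either $\E_{f_1}$ or $\E_{f_2}$, contradicting the hypothesis. The central structural fact to exploit is the locking gate of Construction~\ref{cons:hlpuf}: on input $(x_i,\tilde\rho_1)$ the device returns $\ket{\psi_{f_2(x_i)}}\bra{\psi_{f_2(x_i)}}$ only when $\texttt{Ver}(\ket{\psi_{f_1(x_i)}}\bra{\psi_{f_1(x_i)}},\tilde\rho_1)=1$, and returns $\perp$ otherwise. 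Thus every informative (non-$\perp$) response is gated by the adversary first supplying a state that is accepted by the $f_1$-verification.

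First I would classify the adaptive adversary's queries. A query $(x_i,\tilde\rho_1)$ either fails verification, yielding only $\perp$, which carries no encoding of $f_2(x_i)$, or it passes, in which case $\Aad$ must have held a state accepted as $\ket{\psi_{f_1(x_i)}}$. I would then argue that $\Aad$ can acquire such an unlocking state for a challenge $x_i$ in only two ways: (i) by reusing a copy of $\ket{\psi_{f_1(x_i)}}$ that it legitimately received, namely one intercepted from the honest verifier's transmissions, which are prepared for \emph{uniformly random} challenges; or (ii) by manufacturing $\ket{\psi_{f_1(x_i)}}$ for a challenge of its own choosing. Route (ii) is precisely a forgery of $\E_{f_1}$, so by the $(\epsilon,m)$-weak-unforgeability of $\E_{f_1}$, together with the no-cloning principle, which forbids duplicating a single intercepted copy to unlock repeatedly, this route succeeds with probability at most $\epsilon$.

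Next I would show that route (i) collapses the adaptive attack into a weak one. Because the challenges accompanying the unlocking states are drawn uniformly at random by the honest party, and because each intercepted copy of $\ket{\psi_{f_1(x_i)}}$ is consumed by a single $\texttt{Ver}$ call, the set of informative pairs $(x_i,\ket{\psi_{f_2(x_i)}})$ that $\Aad$ can extract from the HLPUF is exactly a set of at most $q$ challenge-response pairs of $\E_{f_2}$ for random, adversarially-unchosen challenges. This is the view of a weak adversary $\Ana$ against $\E_{f_2}$, so the learning phase of $\Aad$ can be simulated by $\Ana$: the reduction forwards the random samples, answers the $\perp$-queries trivially, and relays the final forgery on the random challenge $x^*$. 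Under the i.i.d. assumption on the output bits of $f$, the state $\ket{\psi_{f_1(x^*)}}$ provided at challenge time is independent of $f_2(x^*)$ and hence useless for producing $\ket{\psi_{f_2(x^*)}}$, so winning the HLPUF game amounts to winning the weak-unforgeability game for $\E_{f_2}$, which by hypothesis happens with probability at most $\epsilon$. Charging a route-(ii) success to the $\E_{f_1}$ game and a route-(i) success to the $\E_{f_2}$ game yields the claimed $(\epsilon,m)$-security.

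The hard part will be making the dichotomy between routes (i) and (ii) rigorous at the level of quantum information rather than intuition. Two points need care. First, the $\perp$ outcomes are not entirely inert: the accept/reject bit of $\texttt{Ver}$ is a one-bit oracle that leaks information about $\ket{\psi_{f_1(x_i)}}$, so I must argue that anything gleaned this way still reduces to discriminating the non-orthogonal BB84 states and is therefore already subsumed by the unforgeability of $\E_{f_1}$, rather than constituting a fresh side channel. Second, the no-cloning step must be made quantitative, invoking the consumption of the single received copy to justify "one unlock per intercepted challenge" and thereby ensuring the reduction issues at most $q$ weak queries. The remaining delicate bookkeeping is to keep the two failure probabilities combining into the single stated bound $\epsilon$, which I expect to resolve by observing that any route-(ii) success is itself an $\E_{f_1}$-forgery and can be attributed to that game alone.
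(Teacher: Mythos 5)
Your proposal follows essentially the same route as the paper's own proof: the lock forces every informative query to be unlocked either by forging $\E_{f_1}$ (excluded by its weak-adversary unforgeability) or by replaying intercepted states for uniformly random challenges, which collapses the adaptive adversary to a weak adversary against $\E_{f_2}$, whose unforgeability then bounds the final forgery. The subtleties you flag at the end --- the one-bit accept/reject leakage from $\texttt{Ver}$, the quantitative use of no-cloning, and the bookkeeping that merges the two failure events into the single bound $\epsilon$ --- are also left informal in the paper's proof, so your treatment is, if anything, more candid about where the rigor still needs to be supplied.
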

\begin{proof}[Proof Sketch.]
According to the Construction \ref{cons:hlpuf}, if the adaptive adversary tries to query the HLPUF with any arbitrary challenge $x \in \{0,1\}^n$, then it also needs to send a quantum state $\rho_{f_1(x)}$. The adversary successfully gets $|\psi_{f_1(x)}\rangle$ as a reply if and only if $\ver(\rho_{f_1(x)}, |\psi_{f_1(x)}\rangle\langle \psi_{f_1(x)}) = 1$. Note that the adversary doesn't have any access to the underlying classical PUF $f_1$, therefore it cannot produce such a $\rho_{f_1(x)}$ for an arbitrary $x$. The only possible option is to use some of the previous intercepted queries $x,|\psi_{f_1(x)}\rangle$ that were sent by the server. As the server chooses its queries uniformly at random, the adaptive adversaries need to depend on those random queries to make an adaptive query to the HLPUF. Moreover, for the adaptive queries to the HLPUF, first the adversary needs to forge the mapping $\E_{f_1}$ using the $q$ random challenge-response pairs $\{x_i,|\psi_{f_1(x_i)}\rangle\}_{1\leq i \leq q}$. Here, we assume that the mapping $\E_{f_1}$ is secure against $q$-query weak adversaries, therefore the adaptive adversary cannot forge $\E_{f_1}$. Hence, the $q$-query adaptive adversary can only get the responses from the mapping $\E_{f_2}$ for at most $q$ random queries. According to the assumption, the mapping $\E_{f_2}$ is also secure against $q$-query weak adversaries. Therefore, from $q$ random challenge-response pairs the adaptive adversary couldn't forge $\E_{f_2}$. Hence, the HLPUF remains secure against the $q$-query adaptive adversaries. This concludes the proof sketch.  
\end{proof}

\subsubsection{Security of the HLPUF-based Authentication Protocol:}
In this section, we first give a full formal description of the HLPUF-based authentication protocol, then we define the completeness and security properties of Protocol~\ref{prot:hpuf}. Later, in Theorem \ref{thm:HLPUF_auth} we prove its completeness and security.

\begin{protocol}[!h]
    \caption{Hybrid PUF-based Authentication Protocol with Lockdown Technique}
    \label{prot:hpuf}
    \begin{enumerate}
        \item \textbf{Setup:}
        \begin{enumerate}
            \item The Prover $\P$ equips a Hybrid Locked PUF: $\E_{f}^L$ with HPUF $\E_{f}: \{0,1\}^{n} \rightarrow (\h^{2})^{\otimes 2m}$ constructed upon a classical PUF $f:\X \rightarrow \Y$. Here, the classical PUF $f$ maps an $n$-bit string $x_i \in \{0,1\}^n$ to an $4m$-bit string output $y_i \in \{0,1\}^{4m}$.
            \item The Verifier $\V$ has a classical database $D:= \{(x_i,y_i)\}_{i=1}^{d}$ with all $d$ CRPs of $f$, as well as the necessary quantum devices for preparing and measuring quantum states. 
        \end{enumerate}
        \item \textbf{Authentication:}
        \begin{enumerate}
            \item \label{a}$\V$ randomly chooses a CRP $(x_i,y_i)$ and splits the response equally into two partitions $y_i=f_1(x_i)||f_2(x_i)=y_i^1||y_i^2$ with length $2m$.
            \item \label{b}$\V$ then encodes the first partition of response into $\ket{\psi_{f_1(x_i)}}\bra{\psi_{f_1(x_i)}}:=\bigotimes_{j=1}^m |\psi^{i,j}_{f_1(x_i)}\rangle\langle\psi^{i,j}_{f_1(x_i)}|$ and issues the joint state $(x_i,\ket{\psi_{f_1(x_i)}}\bra{\psi_{f_1(x_i)}})$ to the client. 
            \item \label{c} $\P$ receives the joint state $(x_i,\tilde \rho_1)$ and queries Hybrid Locked PUF $\E_{f}^L$. If the verification algorithm $\texttt{Ver}(\ket{\psi_{f_1(x_i)}}\bra{\psi_{f_1(x_i)}},\tilde \rho_1)\geq 1-\epsilon(\lambda)$ with negligible $\epsilon(\lambda)$, $\P$ obtains $\ket{\psi_{f_2(x_i)}}\bra{\psi_{f_2(x_i)}}:=\bigotimes_{j=1}^m |\psi^{i,j}_{f_2(x_i)}\rangle\langle\psi^{i,j}_{f_2(x_i)}|$ from $\E_{f}^L$ and sends back to $\V$. Otherwise, the authentication aborts.
            \item \label{d}$\V$ receives the quantum state $\tilde{\rho}_2$ and performs the the verification algorithm $\texttt{Ver}(.,.)$ as described in Construction \ref{cons:hlpuf}. If $\texttt{Ver}(\ket{\psi_{f_2(x_i)}}\bra{\psi_{f_2(x_i)}},\tilde \rho_2)\geq 1-\epsilon(\lambda)$ with negligible $\epsilon(\lambda)$, the authentication passes. Otherwise, it aborts.
        \end{enumerate}
    \end{enumerate}
\end{protocol}

\begin{definition}[Completeness of HLPUF-based Authentication Protocol~\ref{prot:hpuf}]
\label{def:comp_hlpuf}
We say the HLPUF-based authentication protocol~\ref{prot:hpuf} satisfies completeness if in the absence of any adversary, an honest client and server generating $|\psi_{f_1(x_i)}\rangle\langle \psi_{f_1(x_i)}|$ and $|\psi_{f_2(x_i)}\rangle\langle \psi_{f_2(x_i)}|$ with a valid HLPUF $\E_f^L$ for any selected challenge $x_i$, can pass the verification algorithms with overwhelming probability:
\begin{align}
    \nonumber
   \text{Pr}[\texttt{Ver}&(|\psi_{f_1(x_i)}\rangle\langle\psi_{f_1(x_i)}|,\tilde \rho_1)\\
   &=\texttt{Ver}(|\psi_{f_2(x_i)}\rangle\langle \psi_{f_2(x_i)}|, \tilde{\rho}_2)= 1] \geq 1 - \epsilon(\lambda)
\end{align}
\end{definition}

Now, we also define the security of our HLPUF-based authentication protocol, in relation with the universal unforgeability game as follows: 

\begin{definition}[Security of the HLPUF-based Authentication Protocol~\ref{prot:hpuf}] 
\label{def:secu_hlpuf}
We say the HLPUF-based authentication protocol~\ref{prot:hpuf} is secure if the success probability of any QPT adaptive adversary $\Aad$ in winning the universal unforgeability game to forge an output of HLPUF $\E_f^L$ according to Construction~\ref{cons:hlpuf}, for any randomly selected challenge of the form $\tilde{c} = (x,|\psi_{f_1(x)}\rangle\langle \psi_{f_1(x)}|)$ is at most negligible in the security parameter:
\begin{equation}
    Pr[1\leftarrow \Gel(\Aad, \lambda)] \leq \epsilon(\lambda)
\end{equation}
\end{definition}


\begin{theorem}
\label{thm:HLPUF_auth}
If the HLPUF $\E^L_f$ is constructed from a hybrid PUF $\E_f$ using the Construction \ref{cons:hlpuf} then the locked PUF-based authentication Protocol \ref{prot:hpuf} satisfies both the completeness and security conditions.
\end{theorem}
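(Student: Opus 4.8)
The plan is to establish the two properties of Protocol~\ref{prot:hpuf} separately, treating completeness and security as distinct sub-goals that draw on the machinery already developed in the excerpt. For \textbf{completeness}, I would work in the adversary-free setting where the channel is noiseless and both parties are honest. The verifier prepares $\ket{\psi_{f_1(x_i)}}\bra{\psi_{f_1(x_i)}}$ from a genuine database entry, so the state $\tilde\rho_1$ arriving at the prover equals $\ket{\psi_{f_1(x_i)}}\bra{\psi_{f_1(x_i)}}$ exactly. The internal verification of the HLPUF (Construction~\ref{cons:hlpuf}) measures each qubit in the basis dictated by the classical bits $\{y_{i,2j}\}$ and compares via $\texttt{Equal}$. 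Since the encoding and the decoding basis are matched by construction, each single-qubit equality test succeeds deterministically (up to the $\delta_1$-robustness of the underlying CPUF, which I would fold into the negligible $\epsilon(\lambda)$), so $\texttt{Ver}(\ket{\psi_{f_1(x_i)}}\bra{\psi_{f_1(x_i)}},\tilde\rho_1)=1$ with overwhelming probability. The HLPUF then outputs $\ket{\psi_{f_2(x_i)}}\bra{\psi_{f_2(x_i)}}$, which travels back unperturbed and passes the verifier's second test by the identical argument. Combining the two tests and accounting for CPUF robustness yields the bound in Definition~\ref{def:comp_lpuf}.

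For \textbf{security}, the strategy is to reduce an attack on the protocol to the unforgeability results already proven, i.e. to invoke Theorem~\ref{thm:hpuf_auth} together with the HPUF security bound of Theorem~\ref{thm:hpuf_prob_2} and Corollary~\ref{cor:hpuf}. Concretely, I would argue that winning the universal unforgeability game $\Gnhl(\Aad,\lambda)$ of Definition~\ref{def:secu_lpuf} requires the adaptive adversary to produce a state $\sigma_2$ passing $\texttt{Ver}(\ket{\psi_{f_2(x)}}\bra{\psi_{f_2(x)}},\sigma_2)=1$ for a randomly chosen challenge. The first step is to note that any adaptive query to the embedded HLPUF demands a correct first-half state $\ket{\psi_{f_1(x)}}$ to unlock it; since the adversary has no direct access to $f$ and cannot clone the non-orthogonal states, the only usable first-half states are those intercepted verbatim from the server, each available for a \emph{single} use. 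This is precisely the quantum lockdown argument that collapses an adaptive adversary to a weak one, already captured in the proof sketch of Theorem~\ref{thm:hpuf_auth}. The second step is then to apply the assumption that $\E_{f_1}$ and $\E_{f_2}$ are each $(\epsilon,m)$-universally unforgeable against $q$-query weak adversaries, so that neither the unlocking map nor the response map can be forged from the at-most-$q$ uniformly random intercepted pairs.

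The main obstacle I anticipate is handling the interaction correctly rather than any single calculation: I must be careful that the adversary controls \emph{both} legs of the public channel (as Figure~\ref{fig:protocol} emphasises), so it may tamper with $\ket{\psi_{f_1(x)}}$ en route to the prover and with $\ket{\psi_{f_2(x)}}$ en route back, and it may interleave these interceptions adaptively across rounds. The delicate point is to show that feeding a \emph{modified} $\tilde\rho_1$ into the prover's HLPUF gives the adversary no advantage: either $\tilde\rho_1$ passes the internal $\texttt{Ver}$ — in which case, by the no-cloning principle and the unforgeability of $\E_{f_1}$, the adversary effectively supplied a legitimate first-half state and has thereby "spent" one of its $q$ intercepts — or it fails and the HLPUF returns $\perp$, yielding nothing. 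I would formalise this as a hybrid argument bounding, round by round, the number of genuine $\E_{f_2}$ outputs the adversary can extract by the number of server-generated first-half states it has faithfully relayed, and then conclude that forging the challenge response is no easier than winning the weak-adversary game for $\E_{f_2}$, whose success is bounded by $\epsilon(\lambda)$. Assembling completeness and this security reduction establishes both conditions of the theorem.
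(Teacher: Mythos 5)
Your proposal is correct and follows essentially the same route as the paper: completeness is verified directly in the honest, adversary-free setting where both verification tests pass, and security is obtained by invoking the lockdown reduction of Theorem~\ref{thm:hpuf_auth} to collapse the adaptive adversary to a weak one and then applying the weak-adversary universal unforgeability of $\E_{f_1}$ and $\E_{f_2}$. Your added round-by-round hybrid accounting of "spent" first-half states and the folding of CPUF robustness into the completeness error are reasonable refinements, but the underlying argument matches the paper's proof.
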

\begin{proof}
In Protocol \ref{prot:hpuf} with hybrid PUF $\E_f = \E_{f_1} \otimes \E_{f_2}$, the server chooses the classical input $x_i\in\X$, encodes the quantum state corresponding to $2m$ bits of $f_1(x_i)$ and issues the joint state to the client. If there is no adversary, the client receives the joint state and queries $\E_f^L$ with $x_i$ and $\tilde \rho_1$, where $\tilde \rho_1=\E_{f_1}(x_i)=\ket{\psi_{f_1(x_i)}}\bra{\psi_{f_1(x_i)}}$ for the first $m$ qubits of $\E_{f}(x_i)$. Hence we have:
\begin{equation}
    \label{eq:thm_cs_1}
    Pr\left[\texttt{Ver}(|\psi_{f_1(x_i)}\rangle\langle \psi_{f_1(x_i)}|,\tilde \rho_1)=1\right]=1  
\end{equation}

On the client side, since the verification algorithm of HLPUF $\E_f^L$ always passes with $\texttt{Ver}(|\psi_{f_1(x_i)}\rangle\langle \psi_{f_1(x_i)}|,\tilde \rho_1)=1$, he returns the quantum state $\E_{f_2}(x_i)=|\psi_{f_2(x_i)}\rangle\langle \psi_{f_2(x_i)}|$ corresponding to $2m$ bits of $f_2(x_i)$ to the server. Without the presence of an adversary, the server always receives the state with $\tilde{\rho}_2=|\psi_{f_2(x_i)}\rangle\langle \psi_{f_2(x_i)}|$, and we obtain the equation similarly to Equation (\ref{eq:thm_cs_1}). Therefore, we can say the hybrid locked PUF-based authentication protocol satisfies the completeness condition with
\begin{align}
    \label{eq:thm_cs_2}
    \nonumber
    \text{Pr}[\texttt{Ver}&(|\psi_{f_1(x_i)}\rangle\langle \psi_{f_1(x_i)}|,\tilde \rho_1)\\
    &=\texttt{Ver}(|\psi_{f_2(x_i)}\rangle\langle \psi_{f_2(x_i)}|, \tilde{\rho}_2)= 1] = 1
\end{align}

On the other hand for security, we rely on Theorem \ref{thm:hpuf_auth} that the HLPUF $\E_f^L$ is $(\epsilon,m)$-secure against any $q$-query adaptive adversaries.
In the theorem, we show the fact that the adaptive adversary cannot boost from the weak-learning phase of HPUF $\E_{f_2}$, producing a forgery $\sigma_{2}$ for $\E_f^L$ that passes the verification $\texttt{Ver}(|\psi_{f_2(x_i)}\rangle\langle \psi_{f_2(x_i)}|, \sigma_2)$
. Since $\E_{f_2}$ has the universal unforgeability against a weak adversary by assumption, we have:
\begin{align}
    \nonumber
    Pr[1\leftarrow \Gel(\Aad, m)] &= Pr[1\leftarrow \Geb(\Ana, m)] \\ \label{eq:thm_cs_5}
    &\leq \epsilon(m)
\end{align}
This concludes the proof.
\end{proof}

\section{Challenge Reusability}
\label{ap:reusability}
We have discussed in the main paper about the issue of challenge-reusability in classical PUF-based protocols and discussed how our construction brings forward unique and new solution for this problem. In this section, we dive deeper into this issue and we formally prove why our proposal satisfies the important property of challenge reusability. 

We are thus interested in the eavesdropping attacks by the adversary on the first and second half of the response states that are of the form $\ket{\psi_{f_1(x_i)}}\bra{\psi_{f_1(x_i)}}=\bigotimes_{j=1}^m |\psi^{i,j}_{f_1(x_i)}\rangle\langle\psi^{i,j}_{f_1(x_i)}|$ and $\ket{\psi_{f_2(x_i)}}\bra{\psi_{f_2(x_i)}}=\bigotimes_{j=1}^m |\psi^{i,j}_{f_2(x_i)}\rangle\langle\psi^{i,j}_{f_2(x_i)}|$.
Note that eavesdropping on the states that encode the first part of the response will lead to breaking the locking mechanism while eavesdropping on the second half will lead to an attack on the authentication (\yao{Removed identification}). Without loss of generality, we only consider one of the cases where the adversary wants to eavesdrop on the first (or second) half to break the protocol in the upcoming rounds where the challenge is reused. The arguments will hold equivalently for both cases since the states and verification are symmetric.

Given all these considerations, the challenge reusability problem will reduce to the optimal probability of the eavesdropping attack on $\ket{\psi_{f_1(x_i)}}\bra{\psi_{f_1(x_i)}}=\bigotimes_{j=1}^m |\psi^{i,j}_{f_1(x_i)}\rangle\langle\psi^{i,j}_{f_1(x_i)}|$ which is in fact $m$ qubit states encoded in conjugate basis same as BB84 states. In the most general case, the adversary can perform any arbitrary quantum operation on the state $\bigotimes_{j=1}^m |\psi^{i,j}_{f_1(x_i)}\rangle\langle\psi^{i,j}_{f_1(x_i)}|$ or separately on each qubit state $\ket{\psi^{i,j}_{f_1(x_i)}}$, together with a local ancillary system and sends a partial state of this larger state to the verifier to pass the verification test, and keep the local state to extract the encoded response bits. Let $\rho_{SEC}$ be the joint state of the server, the eavesdropper and the client. Since the states used in the protocol are from Mutually Unbiased Basis (MUB) states i.e. from either $Z = \{\ket{0}, \ket{1}\}$ or $X = \{\ket{+}, \ket{-}\}$, in order to show the optimal attack, we can rely on the entropy uncertainty relations that have been used for the security proof of QKD. The measurements for verification are also performed in the $\{Z,X\}$ bases accordingly. We use the entropy uncertainty relations from~\cite{coles2017entropic} where the security criteria for QKD have been given in terms of the conditional entropy for MUBs measurements. Using these results we show that the entropy of Eve in guessing the correct classical bits for the response is very high if the state sent to the verification algorithm passes the verification with a high probability. Intuitively this is due to the uncertainty that exists related to the commutation relation between $X$ and $Z$ operators in quantum mechanics. Hence we conclude that the success probability of Eve in extracting information from the encoded halves of the response is relatively low. Also, we show that this uncertainty increases linearly with $m$ similar to the number of rounds for QKD. This argument results in the following theorem which we will formally describe and prove in Appendix~\ref{ap:ch-reusability} where we also introduce the uncertainty relations.

\begin{theorem}[informal]\label{th:uncertainty}
In Protocol~\ref{prot:hpuf}, if the client (or server for the second half of the state) verification does not abort for a challenge $x$, then Eve's uncertainty on the respective response of the CPUF, denoted by $H_{min}^{Eve}$ is greater than $m - \epsilon(m)$.
\end{theorem}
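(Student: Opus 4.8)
The plan is to recast a single interception round as a tripartite scenario and then invoke an entropic uncertainty relation with quantum side information, exactly in the spirit of the uncertainty-based security proofs of BB84-type QKD. First I would purify the honest encoding via the source-replacement trick: instead of sampling a classical response and encoding it, model the preparation as producing $m$ maximally entangled pairs, retaining one half of each (call the retained register $S$) and sending the other halves into the channel. Measuring $S$ qubit-by-qubit in the $Z$ basis $\{\ket 0,\ket 1\}$ or the $X$ basis $\{\ket +,\ket -\}$ reproduces the BB84 encoding of Construction~\ref{cons:hlpuf}, with the outcome playing the role of the value bit and the basis choice playing the role of the basis bit of the underlying CPUF response. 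Since the adversary's operation is fixed independently of the (adversary-unknown) basis choices, I would apply this reduction \emph{before} conditioning on the bases, leaving a single state $\rho_{SEC}$ on the honest register $S$, Eve's register $E$, and the verifier's forwarded register $C$.

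Next I would invoke the tripartite uncertainty relation of Tomamichel--Renner / Berta et al., in the smooth-entropy form stated in \cite{coles2017entropic}. For the two complementary $m$-qubit MUB measurements, whose maximal overlap is $c = 2^{-m}$, this reads
\[
H_{\min}(X \mid E) + H_{\max}(Z \mid C) \geq \log_2 \tfrac{1}{c} = m ,
\]
where the min-entropy term is to be identified with Eve's uncertainty about the encoded response, and the max-entropy term measures how well the verifier can reconstruct the encoded bits from $C$. The linear-in-$m$ right-hand side is precisely the source of the claimed $m-\epsilon(m)$ scaling, mirroring the way uncertainty accumulates over QKD rounds.

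The third step, and the crux, is to convert the hypothesis ``$\texttt{Ver}$ does not abort'' into a quantitative bound $H_{\max}^{\epsilon}(Z\mid C)\le \epsilon(m)$. The test inside $\texttt{Ver}(\cdot,\cdot)$ is exactly a measurement of $C$ in the bases dictated by the response, compared against the correct values, so a non-aborting run certifies that this measurement predicts the encoded bits up to the tolerated error $\epsilon(\lambda)$; I would translate this small prediction error into a small conditional max-entropy using the standard guessing-probability/max-entropy relation together with a smoothing argument. Substituting into the uncertainty relation yields $H_{\min}(X\mid E)\ge m-\epsilon(m)$, giving $H_{\min}^{Eve}$, and the $k$-fold reuse statement of Theorem~\ref{thm:chall_reuse_mult} then follows by a union bound over rounds.

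I expect the main obstacle to lie in this third step: carefully matching the abstract $X$/$Z$ registers of the uncertainty relation to the actual value-bit and basis-bit structure of the response, and reconciling the fact that the protocol verifies in the \emph{same} basis in which each qubit is encoded, whereas the uncertainty relation trades off \emph{complementary} measurements. Getting this identification right — so that the verification genuinely controls the max-entropy term appearing opposite Eve's response-uncertainty — is where the real work is, and it must be done uniformly over the basis choices that Eve does not know, which is exactly what forces the argument to pass through the basis-independent state $\rho_{SEC}$ rather than through any per-round classical conditioning.
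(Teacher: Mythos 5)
Your proposal follows essentially the same route as the paper's proof of the formal version (Theorem~\ref{th:unc-formal}): the tripartite smooth-entropy uncertainty relation of~\cite{coles2017entropic,tomamichel2011uncertainty} supplies the $m$-linear lower bound, and the non-abort event is converted into a max-entropy bound $H^{\epsilon}_{\max}(S^m\mid\tilde{Y}^m)\le m\,h(\epsilon(m))$ via the statistical-test argument, exactly as you outline in your third step. The only substantive difference is that the paper also absorbs the CPUF's randomness bias into shifted measurement operators, obtaining a per-qubit overlap $c=\tfrac12+\delta_r$ and an extra $-m\log_2(1+2\delta_r)$ term, whereas you assume ideal MUBs with $c=2^{-m}$ overall; this is immaterial for the informal statement but is needed for the formal one.
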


Now, we first define the reusability in relation with the unforgeability game and then using Theorem~\ref{th:uncertainty}, we prove the challenge reusability of the HLPUF-based Protocol~\ref{prot:hpuf}.

\begin{definition}[Challenge ($k$-)reusability in the universal unforgeability game]
Let $\Gnre(\lambda, \A, x_{k+1})$ be a special instance of the universal unforgeability game, where a challenge $x$, picked uniformly at random by the challenger, has been previously used $k$ times. We are interested in the events where the same challenge is used in the $(k+1)$-th round, which we denote by $x_{k+1}$. We say the challenge $x$ is \textit{($k$-)re-usable} if the success probability of any QPT adversary in winning $\Gnre(\lambda, \A, x_{k+1})$, i.e, in forging message $x_{k+1}$, is negligible in the security parameter:
\begin{equation}
    Pr_{forge}(\A, x_{k+1}) = Pr[1 \leftarrow \Gnre(\lambda, \A, x_{k+1})] \leq \epsilon(\lambda)
\end{equation}
\end{definition}

\begin{theorem}[Challenge reusability of HLPUF-based Authentication Protocol~\ref{prot:hpuf}] 
\label{thm:chall_reuse_mult}A challenge $x$ can be reused $k$ times during the Protocol~\ref{prot:hpuf} as long as the received respective response $\sigma$ for each round passes the (client's or server's) verification with overwhelming probability. In other words, under the successful verification, the success probability of the adversary in passing the $(k+1)$-th round with the same challenge $x$ is bounded as follows:
\begin{equation}
    Pr_{forge}(\A, x_{k+1}) \leq k 2^{-m} \approx \epsilon(m).
\end{equation}
\end{theorem}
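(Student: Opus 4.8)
The plan is to bootstrap the single-round entropic guarantee of Theorem~\ref{th:uncertainty} into a multi-round bound by a union bound over the $k$ rounds in which the challenge $x$ is reused. First I would record that, by the hypothesis of $k$-reusability, every one of the $k$ previous rounds with challenge $x$ passed the (client's or server's) verification with overwhelming probability. Conditioned on this premise, Theorem~\ref{th:uncertainty} applies to each such round and yields $H_{min}^{Eve} \geq m - \epsilon(m)$ for the adversary's guess of the respective $m$-qubit half of the CPUF response. Converting conditional min-entropy into guessing probability through $p_{\guess} = 2^{-H_{min}^{Eve}}$ gives the per-round bound
\begin{equation}
 p_{\guess}^{(i)} \leq 2^{-(m - \epsilon(m))} = 2^{\epsilon(m)}\, 2^{-m} \approx 2^{-m}
\end{equation}
on the probability that the adversary recovers the correct classical response from the state it intercepts in round $i$.

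Next I would establish that a successful forgery in round $k+1$ is possible only if the adversary has classically recovered the correct response in at least one of the $k$ preceding rounds. The reason combines the protocol structure with the no-cloning principle: since each passing round forwards the (now consumed) response state to the honest verifier, the adversary cannot have retained a usable copy of the correct state, so its only route to forging with the reused challenge $x$ is to have learned the classical response in some round and to re-encode it. Letting $W_i$ denote the event that the adversary extracts the correct response in round $i$, forging in round $k+1$ therefore entails the event $\bigcup_{i=1}^{k} W_i$.

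Finally I would close the argument with a union bound. Using $Pr[W_i] \leq 2^{-m+\epsilon(m)}$ from the first step, we obtain
\begin{equation}
 Pr_{forge}(\A, x_{k+1}) \leq Pr\Big[\bigcup_{i=1}^{k} W_i\Big] \leq \sum_{i=1}^{k} Pr[W_i] \leq k\, 2^{-m+\epsilon(m)} \approx k\, 2^{-m},
\end{equation}
which, since $k$ is polynomial and $2^{-m}$ is negligible in the security parameter, is itself a negligible function $\epsilon(m)$; this is exactly the claimed bound.

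The main obstacle I anticipate is justifying the union-bound step against a fully quantum adaptive adversary who stores all $k$ intercepted registers in a joint quantum memory and correlates its attacks across rounds, so that partial information gleaned from several rounds could in principle combine to beat the naive sum of per-round success probabilities. I would address this by emphasizing that Theorem~\ref{th:uncertainty} bounds the \emph{conditional} min-entropy in the presence of the adversary's quantum side-information, in the spirit of the QKD uncertainty relations of~\cite{coles2017entropic}; consequently the per-round bound on $Pr[W_i]$ already absorbs any accumulated quantum memory, and because each $W_i$ is defined on the classical outcome of the adversary's extraction, the union bound applies irrespective of inter-round correlations. Care is likewise needed so that conditioning on ``all $k$ rounds pass verification'' does not silently inflate the per-round guessing probabilities; here I would rely on the fact that this conditioning is precisely the premise under which Theorem~\ref{th:uncertainty} is stated, so no additional loss is incurred and the bound $k\,2^{-m}$ stands.
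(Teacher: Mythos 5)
Your proposal follows essentially the same route as the paper: invoke Theorem~\ref{th:uncertainty} for each passing round, convert the conditional min-entropy bound into a per-round guessing probability via $P_{guess}^{Eve}=2^{-H_{min}^{Eve}}\leq 2^{-m+\epsilon(m)}$ (the paper justifies this conversion by noting the server--adversary state is a classical-quantum state, citing~\cite{KRS09}), and then apply a union bound over the $k$ rounds. The one place you deviate is the final accounting: you claim that a forgery in round $k+1$ \emph{entails} the event $\bigcup_i W_i$, i.e.\ that the adversary must have extracted the classical response in some earlier round. That containment is not quite right as stated --- an adversary could also forge the response to the reused $x$ exactly as it would forge a fresh challenge, via a model-building attack that never touches the intercepted states, and that event lies outside $\bigcup_i W_i$. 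The paper handles this by \emph{adding} the baseline universal-unforgeability probability $\epsilon_1(m)$ of the HLPUF, obtaining $Pr_{forge}(\A,x_{k+1})\leq \epsilon_1(m)+k2^{-m}$, which is still the claimed negligible bound. Since the missing term is itself negligible, your conclusion survives, but the step should be phrased as ``the reuse boosts the forging probability by at most $k2^{-m}$ over the single-use case'' rather than as a strict containment. Your extra discussion of why the union bound is safe against joint quantum memory (because the min-entropy is already conditioned on Eve's quantum side information) is a sound observation that the paper leaves implicit.
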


\begin{proof}
To prove this theorem, we use the Theorem~\ref{th:uncertainty} directly. First, we assume that $x$ has been used one time before in a previous round. Given the assumption that the verification is passed with probability $1 - \epsilon(m)$, and this theorem, we conclude that the uncertainty of the adversary in guessing the encoded response of the HLPUF is larger than $m - \epsilon(m)$. In our case, the joint quantum state between the server and the adversary is a classical-quantum state (server has the classical description of $f(x)$, and the adversary has the quantum state $|\psi_{f(x)}\rangle$). For such states, Eve's uncertainty, $H_{min}^{Eve}$ is the same as $-\log P_{guess}^{Eve}$, where $P_{guess}^{Eve}$ is Eve's guessing probability of the classical information encoded in the quantum state \cite{KRS09}. Therefore,
\begin{equation}
\begin{split}
    P_{guess}^{Eve} & = 2^{-H_{min}^{Eve}}\\
    & \leq 2^{-m + \epsilon(m)}.
\end{split}
\end{equation}

This probability is negligible in the security parameter, which means that after performing any arbitrary quantum operations, the adversary's local state includes at most, a negligible amount of information on the response of $x$, each round that the state $x$ is reused. Now, we can use the union bound to show that this success probability only linearly scales with $k$:
\begin{equation}
    P_{guess}^{Eve,k} = P(\bigcup^k_{i=1} E_{guess}^i) \leq \sum^k_{i=1} P(E_{guess}^i) \approx k2^{-m},
\end{equation}
where $E_{guess}^i$ are the events where Eve correctly guesses the response and $P(E_{guess}^i) = (P_{guess}^{Eve})^i$ is the success probability of Eve in guessing in the $i$-th round. Finally, let the success probability of an adversary in the universal unforgeability game for the HLPUF be upper-bounded by $\epsilon_1(m)$ which is a negligible function in the security parameter since we assume that the HLPUF satisfies the universal unforgeability. This is the same as the success probability of the adversary in passing the verification for a new challenge, chosen at random from the database. Now in the $(k+1)$-th round, where the same $x$ is reused, the success probability is at most boosted by the guessing probability over the previous $k$-th rounds, hence we will have:
\begin{equation}
    Pr_{forge}(\A, x_{k+1}) \leq \epsilon_1(m) + k2^{-m} = \epsilon(m)
\end{equation}
As long as $k$ is polynomial in the security parameter, the second term is also a negligible function and since the sum of two negligible probabilities will also be negligible. This concludes the proof.
\end{proof}

\section{Simulation for HPUF/HLPUF}
\label{ap:simulation}
In this section, we simulate the design of HPUF/HLPUF constructions with underlying silicon CPUFs instantiated by \emph{pypuf} \cite{pypuf}. \emph{pypuf} is a python-based emulator that features different existing CPUFs. Furthermore, we simulate the situation where an adversary acquires classical challenges and quantum-encoded responses from HPUF/HLPUF and converts the responses into classical bitstrings by measuring the output quantum state. The adversary then attempts to perform machine learning-based attacks with the obtained CRPs to reproduce a model that predicts accurately enough the behaviour of the underlying CPUF. As a result, we say such an adversary wins the unforgeability game successfully in the end. According to the simulation result, we show the performance of hybrid construction in boosting the security of CPUF, quantify the existing advantage of hybrid construction and discuss potential improvements to obtain greater security. 


XOR Arbiter PUFs \cite{SD07} with $n$-bit challenge to a one-bit response is one of the CPUFs provided by pypuf. Its security is studied widely by Ulrich Rührmair et al. \cite{RSSD10}. In that paper, the performance of different machine learning attacks like \emph{Logistic Regression} (LR), \emph{Support Vector Machines} (SVMs), and \emph{Evolution Strategies} (ES) is evaluated in terms of the prediction accuracy of responses with unseen challenges. It turns out that the LR has the best performance. Moreover, it shows that the LR attacks can handle well with the situation while the training data is erroneous with noise up to $40\%$. In practice, this noise comes from the PUF implementation with the integrated circuit. Meanwhile, quantum encoding of HPUF can be treated as another source of noise to prevent the adversary from modelling CPUFs.


\subsection{BB84 encoding with split attack on the HPUF/HLPUF}
Recall that the HPUFs that we proposed in this paper encodes every two-bit tuple of response $(y_{i,(2j-1)}, y_{i,2j})_{1\leq j \leq 2m}$ into one BB84 state with $y_{i,2j}$ the basis value and $y_{i,(2j-1)}$ the bit value. Here, we assume that each bit of response is generated independently uniformly at random by an XOR Arbiter PUF. We simulate firstly an adaptive adversary on HPUF. he queries with the same classical challenge multiple times until he extracts the classical information from multi-copy of quantum response with high accuracy. The simulation results for modelling underlying CPUF are shown in red of Figure \ref{fig:simulation_k4} and \ref{fig:simulation_k5}.

On the other hand, while we consider HLPUF against an adaptive adversary, the lockdown technique reduces an adversary from adaptive to weak queries on HPUF. With a single copy of each quantum response uniformly at random, we intuitively think that the adversary has a $50\%$ probability of guessing the basis value correctly for each qubit of HPUF. If he guesses the basis value correctly, he can then measure the qubit correctly to obtain the exact $(y_{i,(2j-1)}, y_{i,2j})$. Otherwise, the classical tuple $(y_{i,(2j-1)}^\prime, y_{i,2j}^\prime)$ of each qubit obtained by the adversary is always incorrect. Hence, the success probability of recovering each tuple $\{(y_{i,(2j-1)}, y_{i,2j})\}$ from corresponding qubit $|\psi^{i,j}_\out\rangle \langle \psi^{i,j}_\out|$ by such an adversary is not greater than guessing a tossing coin.

However, there is a specific way to attack HPUFs that we discover throughout the simulation so-called \emph{Split Attack}. To the best of our knowledge, it is the optimal strategy that a weak adversary can perform on HPUF with underlying XORPUFs. We elaborate the attack as follows: Instead of predicting the tuple $(y_{i,(2j-1)}, y_{i,2j})$ simultaneously, the adversary first predicts the bit value $y_{i,(2j-1)}$ of each qubit. For the HPUF with BB84 states encoding, the problem of distinguishing a state from uniformly distributed BB84 states then reduces to the problem of distinguishing two mixed states $\rho_1^{i,j} = \frac{1}{2}|0\rangle\langle 0|+\frac{1}{2}|+\rangle\langle +|$ and $\rho_2^{i,j} = \frac{1}{2}|1\rangle\langle 1|+\frac{1}{2}|-\rangle\langle -|$ with equal probability. 
From Lemma \ref{lem:guess_prob}, we get the optimal success probability as,
\begin{align}
    \label{eq:bit_optimal_dis}
    \nonumber
    &Pr[\mathcal{A}^{i,j}_\text{guess}(x_i,\rho_1^{i,j}, \rho_2^{i,j})=y_{i,(2j-1)}]\\
    \nonumber
    &\leq\frac{1}{2}+\frac{1}{2}(\frac{1}{2}\norm{\rho_1^{i,j} - \rho_2^{i,j}}_1)\\
    \nonumber
    &=\frac{1}{2}+\frac{1}{2\sqrt{2}}\\
    &\approx 0.85.
\end{align}
As it is to say, the adversary $\mathcal{A}$ can perform LR attacks on bit value with a $15\%$ error afflicted CRPs training set. We do the simulation of HPUF with BB84 encoding and an underlying of 4-XOR Arbiter PUF and 5-XOR Arbiter PUF and a challenge size of 64 bits and 128 bits. 
Here, $k=4/5$ of XOR Arbiter PUF is the parameter related to its hardware structure. With higher value of $k$ of XORPUF, it takes more CRPs to model accurately with LR attacks.
The evolution of accuracy in predicting the bit value of each qubit with different underlying XORPUFs are shown in orange of Figure \ref{fig:simulation_k4} and \ref{fig:simulation_k5}.

After the bit value of each qubit can be predicted accurately with a given challenge, the problem of predicting the basis value $y_{i,2j}$ of the following qubits is equivalent to the adversary discriminates either a quantum state $|0\rangle$ from $|+\rangle$ if $y_{i,(2j-1)}=0$ or a quantum state $|1\rangle$ from $|-\rangle$ if $y_{i,(2j-1)}=1$. We denote the success probability of guessing the basis value correctly conditioned on an accurate prediction on bit value $y_{i,(2j-1)}^\prime=y_{i,(2j-1)}$ by $ Pr[\mathcal{A}^{i,j}_\text{guess}(x_i,|\psi^{i,j}_\out\rangle \langle \psi^{i,j}_\out|)=y_{i,2j}| y_{i,(2j-1)}^\prime=y_{i,(2j-1)}]$ from a quantum state $|\psi^{i,j}\rangle \langle \psi^{i,j}|$, we have: 
\begin{align}
    \nonumber
    & Pr[\mathcal{A}^{i,j}_\text{guess}(x_i,|\psi^{i,j}_\out\rangle \langle \psi^{i,j}_\out|)=y_{i,2j}| y_{i,(2j-1)}^\prime=y_{i,(2j-1)}] \\
    &= \frac{1}{2}+\frac{1}{2}\sin45^\circ \approx 0.85.
\end{align}

With the same level of noise introduced by HPUF on guessing the basis value and bit value, the similar performance of LR attack is expected to predict the basis value as long as the prediction accuracy of the bit value is high enough. We have the success probability of guessing both bit and basis values of tuple $(y_{i,(2j-1)}, y_{i,2j})$ as:
\begin{align}
    \nonumber
    &Pr[\mathcal{A}^{i,j}_{guess}(x_i, |\psi^{i,j}_\out\rangle \langle \psi^{i,j}_\out|)
    = (y_{i,(2j-1)},y_{i,2j})] = \\ &Pr[\mathcal{A}^{i,j}_\text{guess}(x_i,|\psi^{i,j}_\out\rangle \langle \psi^{i,j}_\out|)=y_{i,2j}|  y_{i,(2j-1)}^\prime=y_{i,(2j-1)}].
\end{align}
In the end, we get the evolution of accuracy on predicting a tuple $(y_{i,(2j-1)}, y_{i,2j})$ with different CRPs for training as the green curves in Figure \ref{fig:simulation_k4} and \ref{fig:simulation_k5}. The gap between the blue and green curves denotes the reinforcement of security by HPUF construction. \yao{We also simulate in Figure \ref{fig:ksizen} the best-performing training set sizes of CRPs for obtaining accurate enough models from machine learning attacks with different k-XORPUFs in the cases of CPUFs, HPUFs, and HLPUFs constructions. See \cite{hpuf} for details of the simulation.}

\begin{figure}[!h]
    \begin{minipage}{\textwidth}
    \centering
      \subcaptionbox{\label{fig:ksize64}}{\includegraphics[width=0.5\linewidth]{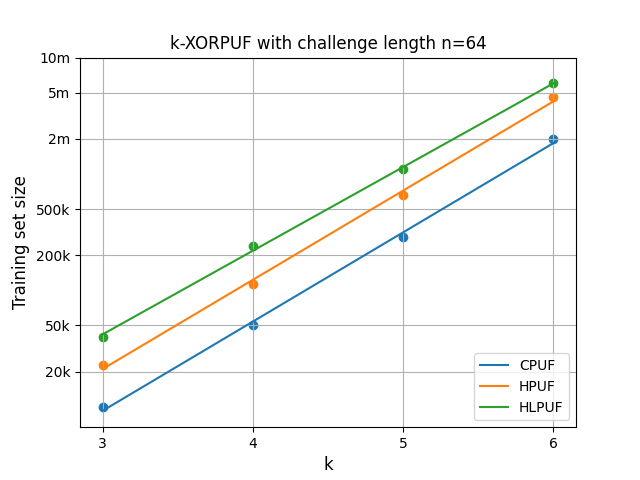}}\hspace*{\fill}
      \subcaptionbox{\label{fig:ksize128}}{\includegraphics[width=0.5\linewidth]{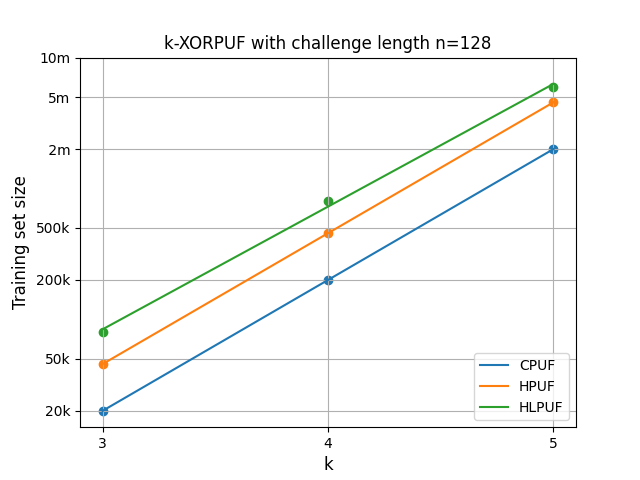}}
      \caption{Attack with best-performing number of CRPs for k-XORPUFs (CPUF, HPUF and HLPUF constructions) with challenge length $n=64/128$ and BB84 encoding}\label{fig:ksizen}
    \end{minipage}
\end{figure}

Corresponding to our proofs in Lemma \ref{lem:pguess} and Theorem \ref{thm:hpuf_prob_2}, our simulation shows an exponential advantage of HPUF compared to the same CPUF with a limited $q$-query in terms of the modelling success probability against an adversary by LR attacks. As to a larger $q$-query, the advantage shown in the simulation limits by the fact that k-XORPUFs is a vulnerable CPUF with a large $\varepsilon$, which allows a modelling attack with a noisy data set. That is to say, the probability $p_{\extract}$ can be high with $\dist(\tilde D^x_q, D^x_q)=0.15$. As long as $Pr(1,\frac{1}{2},q)=1-negl(\lambda)$, the success probability of modelling with hybrid construction converges to $1-negl(\lambda)$ with an increasing $q$.   
Therefore, to decrease the forging probability in practice, there are mainly two directions: Firstly, we choose more robust underlying CPUFs to construct HPUF with lower $\varepsilon$ and $Pr(1,\frac{1}{2},q)=1-negl(\lambda)$ with a greater $q$. Second, we can consider other sophisticated encodings of HPUF, e.g., MUB encoding of quantum states with higher dimensions. In the next section, we show the construction of HPUF with MUB encoding in 8-dimension and the simulation result.

\yao{In our simulations, the construction of H(L)PUF with underlying Arbiter-based PUFs generates a 1-bit response per query, thus although one can observe the exponential gap for a fixed number of queries between CPUF and HPUF, the inverse exponential scaling with $m$ cannot be witnessed. While for a general $m$-qubit response construction this inverse-exponential scaling can be seen from the theoretical results. In Figure~\ref{fig:mpn}, we also attempt to simulate this behaviour for a $m$-qubit response constructed by several Arbiter-based PUFs. The construction is a rather trivial one via parallelism, i.e., we simply duplicate the single structure $m$ times and query them by the same challenge \cite{SD07}. We note that this construction is far from optimal in terms of security, as it does not provide the required independent $m$-qubit outcome required in the theoretical result, and as a result it allows the adversary to perform more effective parallel attacks. However, we can still see that the guessing probability of an eavesdropper decreases inverse exponentially on $m$ until the averaged learning models are all accurate enough (See Figure \ref{fig:mpn} with 4-XORPUFs and different lengths of challenges). Moreover, the quantum encoding can in any case help with the detection of a network adversary trying to perform ML attacks, as such adversaries will perturb the quantum state in the quantum channels due to measurement, enabling the honest parties to detect their existence with high probability, and preventing the adversary from learning $m$-qubit states simultaneously during the protocol, as discussed in Appendix \ref{ap:reusability}.}


\begin{figure}[!h]
    \begin{minipage}{\textwidth}
    \centering
      \subcaptionbox{\label{fig:mp64}}{\includegraphics[width=0.5\linewidth]{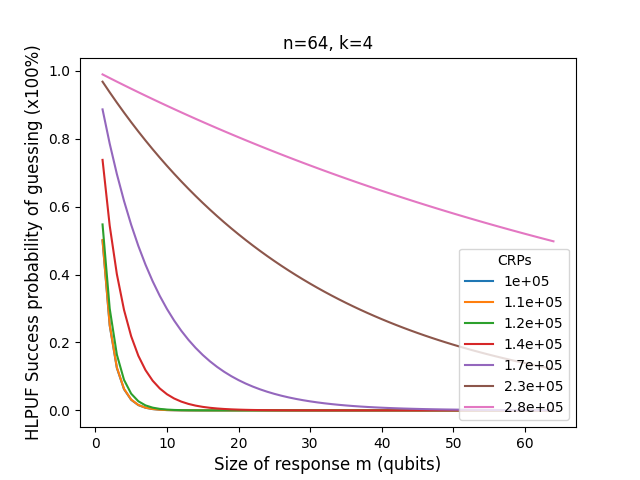}}\hspace*{\fill}
      \subcaptionbox{\label{fig:mp128}}{\includegraphics[width=0.5\linewidth]{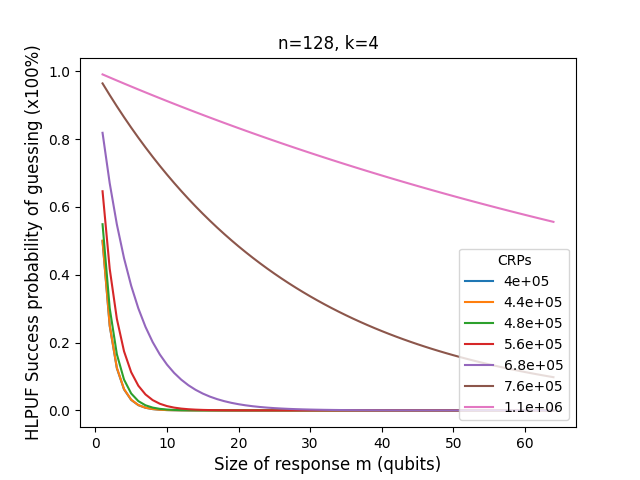}}
      \caption{HLPUF (BB84) Success probability of guessing with 4-XORPUFs and challenge length $n=64/128$}\label{fig:mpn}
    \end{minipage}
\end{figure}

\subsection{MUB in 8-dimension encoding with split attack}
\label{sec:MUB}
In this section, we show that a more sophisticated encoding of quantum state in higher dimensions, i.e., an 8-dimensional quantum state with 9 MUB, leads to more noise introduced to the database that an adversary emulates CPUFs with. We denote the encoding quantum state as:
\begin{equation}
    \ket{x^\theta}, x =x_0x_1x_2~\text{and}~\theta \in \{0,1,...8\}
\end{equation}
, where $\theta$ represents the basis and $x$ represents the state. Here, the adversary attempts to obtain the accurate models of $x_0x_1x_2$ from 3 CPUFs associated with the state value. Similarly to the strategy shown in BB84 encoding, the adversary performs a Split Attack on $x_0x_1x_2$ sequentially. The success probability of guessing bit is equivalent to the probability of distinguishing mixed states out of $\rho_x = \frac{1}{9}\sum_{\theta = 0}^{8} |x^\theta\rangle\langle x^\theta|$. We obtain the optimal $p_0, p_1$ and $p_2$ corresponding to guessing correctly $x_0, x_1$ and $x_2$ as 
\begin{align}
    p_0 \approx 0.62, p_1 \approx 0.69, p_2 \approx 0.77.
\end{align}
More details of the construction of MUBs and the calculation of probabilities are given in Section~\ref{ap:mub8}. We simulate the modelling of XORPUFs under Split Attack in Figure \ref{fig:mub_k5}.

It takes up to $10^6$ CRPs to model the underlying CPUFs accurately. The required number of CRPs to model the underlying $k=5$ CPUFs in 8-dimension encoding is the same as BB84 encoding with less input space with 32 bits challenge size. In the HLPUF authentication protocol, it means a longer usage period with the same hardware. However, the MUB in an 8-dimension encoding setting (or high dimensions) requires multi-qubit gates on both the server and client sides. Hence, there is a trade-off between the complexity of encoding and implementation effort. Furthermore, we should consider the imperfect quantum channels and measurements with the HPUF setting. We leave these as one of our benchmarking works in the future. 

\section{Limitations of Lockdown Technique for Generic Quantum PUFs}
\label{ap:limitation}
In this section, we study for the first time, the possibility of exploiting the lockdown technique for quantum PUFs (QPUFs), and we demonstrate the mathematical model for it. It is also worth mentioning that implementing QPUFs in practice is challenging and subject to current research. Some constructions have been proposed for constructing fully secure unitary QPUFs such as~\cite{KMK21}, but they are usually resourceful quantum constructions. Also, some other classes of QPUFs, namely quantum-readout PUFs~\cite{S12}, have been defined in weaker attack models and under restricted quantum adversaries. Apart from the theoretical aspect of the problem, it is also interesting to see whether the lockdown technique can help to reduce the adversarial power in the quantum case. One of the main problems in the case of QPUFs is that if an adversary manages to query a QPUF with the same input multiple times, then such an adversary can get multiple copies of the same output state. This allows the adversary to use the tools from the quantum state tomography~\cite{DLP2001}, and the quantum emulation algorithm to emulate the input-output behavior~\cite{ML16} of the target QPUF. One possible way to protect it from such sophisticated attacks is to use the lockdown technique. The main goal of such a lockdown technique is to prevent the adversary from querying in an adaptive manner, with arbitrary challenges.

Similarly to the hybrid PUF setting, an important feature of the lockdown technique on QPUFs is the equality test of unknown quantum states for verification. As introduced previously, the verification algorithm can be efficiently implemented by SWAP test~\cite{buhrman_quantum_2001} if two states $\rho_1, \rho_2$ are two pure states. With this constraint in mind, we prove that only very restricted QPUFs can be efficiently constructed as a quantum-locked PUF (QLPUF) with a verification algorithm.

\begin{theorem}
The construction of QLPUF with verification algorithm can be achieved if and only if the input/output mapping of the targeted quantum PUF $\E: \h^{d_{\inp}}: \rightarrow \h^{d_{\out_1}}\otimes \h^{d_{\out_2}}$ is of the form $|\psi_{\inp}\rangle\langle \psi_{\inp}| \mapsto |\psi_{\out}\rangle_{S^1}\langle \psi_{\out}| \otimes |\psi_{\out}\rangle_{S^2}\langle \psi_{\out}|$. Otherwise, such a lockdown technique is incapable of quantum PUFs.
\end{theorem}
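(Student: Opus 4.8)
The plan is to prove the two directions of the equivalence separately, with the reverse (``only if'') direction carrying essentially all the content and the \emph{no-cloning theorem} supplying the crucial restriction. Throughout I would rely, exactly as the paper does in this section, on the fact that the equality test $\ver$ for a priori unknown quantum states is realised by the SWAP test, whose acceptance probability on inputs $\rho,\sigma$ is $\tfrac12+\tfrac12\,\text{Tr}(\rho\sigma)$.

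For the forward (``if'') direction I would simply exhibit the construction. Assuming $\E(|\psi_{\inp}\rangle\langle\psi_{\inp}|)=|\psi_{\out}\rangle_{S^1}\langle\psi_{\out}|\otimes|\psi_{\out}\rangle_{S^2}\langle\psi_{\out}|$, I define the QLPUF following Construction~\ref{cons:hlpuf}: on input $(|\psi_{\inp}\rangle,\tilde\rho_1)$ it runs $\E$ to obtain the two registers, applies the SWAP test between $\tilde\rho_1$ and the $S^1$ register to implement $\ver$, and releases the $S^2$ register iff the test accepts. Completeness is immediate, since both registers carry the pure state $|\psi_{\out}\rangle$, so an honest party holding a copy of $|\psi_{\out}\rangle$ passes the SWAP test with probability one (fidelity one between identical pure states); hence the construction is achievable.

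For the reverse direction I would argue by necessity, working backwards from the requirement that $\ver$ be a reliable equality test on unknown states. First, realising $\ver$ by the SWAP test and demanding completeness (accepting the honest state with probability $1-\epsilon(\lambda)$) forces $\text{Tr}(\rho^2)\to 1$ on the compared register, i.e. the $S^1$ register of the PUF output must be pure. Next I would invoke the standard fact that a pure reduced state on $S^1$ forces the global output to factorise as $|\phi_1\rangle_{S^1}\langle\phi_1|\otimes\rho_{S^2}$, ruling out both entanglement across the lock/response cut and any mixedness in $S^1$. Applying the same SWAP-test argument to the server-side verification of the returned response register then shows $\rho_{S^2}=|\phi_2\rangle_{S^2}\langle\phi_2|$ is also pure, so the output is necessarily a product of two pure states.

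The main obstacle, and the step where no-cloning genuinely enters, is upgrading ``product of two pure states'' to ``two identical copies of a single response'', i.e. $\phi_1=\phi_2=:\psi_{\out}$. Here I would use that in the PUF model the server's only per-challenge secret is the enrolled response of $\E$: the credential that unlocks $S^1$ and the reply verified on $S^2$ are drawn from the same enrolled response state, which forces $\phi_1=\phi_2$. The resulting requirement --- that $\E$ deterministically emit two identical pure copies $|\psi_{\out}\rangle^{\otimes 2}$ of its response for \emph{every} challenge --- is precisely a cloning of the (unknown) response and, by the no-cloning theorem, is impossible unless the family of response states is highly restricted (e.g. mutually orthogonal, hence effectively classical). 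This is the exact sense in which only ``very restricted'' quantum PUFs admit a lockdown. I expect the delicate point to be making the $\phi_1=\phi_2$ identification fully rigorous rather than merely semantic, since one must invoke the completeness and the security (unforgeability/unclonability) requirements jointly to exclude independent pure halves and reduce the situation cleanly to an instance of quantum cloning.
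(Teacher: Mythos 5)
Your forward direction coincides with the paper's, and your necessity argument up to ``the output must be a product of two pure states'' is a legitimate --- and arguably cleaner --- alternative route. The paper never argues via $\mathrm{Tr}(\rho^2)\to 1$; instead it takes a generic entangled output $\alpha\ket{a_1}\ket{b_1}+\beta\ket{a_2}\ket{b_2}$, computes the reduced state $|\alpha|^2\ket{b_1}\bra{b_1}+|\beta|^2\ket{b_2}\bra{b_2}$, and argues that the internal equality test fails because two genuinely different ensembles (e.g.\ $\{\ket{b_1},\ket{b_2}\}$ with weights $|\alpha|^2,|\beta|^2$ versus $\{\alpha\ket{b_1}\pm\beta\ket{b_2}\}$ with weight $\tfrac12$ each) share that density operator and are therefore ``unequal but totally indistinguishable.'' Your purity-from-completeness argument (SWAP-test acceptance $\tfrac12+\tfrac12\mathrm{Tr}(\rho\sigma)$ close to $1$ forces purity, and a pure marginal forces the global state to factorise) reaches the same conclusion more quantitatively, and is a perfectly good substitute for that part of the proof.

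The genuine gap is your final step. The notation $\ket{\psi_{\out}}_{S^1}\bra{\psi_{\out}}\otimes\ket{\psi_{\out}}_{S^2}\bra{\psi_{\out}}$ in the statement should be read as ``a product of pure states across the $S^1/S^2$ cut,'' not as two identical copies of one response: the paper's proof only ever uses purity of the reduced state $\rho_{S^1}$ so that the SWAP test applies, and the HLPUF instantiation itself places \emph{different} states $\ket{\psi_{f_1(x)}}$ and $\ket{\psi_{f_2(x)}}$ in the two registers. More importantly, the no-cloning argument you use to force $\phi_1=\phi_2$ does not go through: no-cloning forbids a \emph{linear} map $\ket{\psi}\mapsto\ket{\psi}^{\otimes 2}$ on arbitrary unknown inputs, but a device that emits two identical copies of a response it generates itself, on each member of an orthogonal (e.g.\ classical) challenge set, is a valid isometry --- nothing unknown is being copied, so no-cloning imposes no restriction there. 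Hence the identification $\phi_1=\phi_2$ is neither provable by your route nor required by the theorem; dropping that step (and the accompanying no-cloning discussion) leaves a correct proof that matches the paper's intent.
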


\begin{proof}
The proof is twofold. For a quantum PUF $\E: \h^{d_{\inp}}: \rightarrow \h^{d_{\out_1}}\otimes \h^{d_{\out_2}}$ that maps an input state $|\psi_{\inp}^i\rangle_{S_i}\langle \psi_{\inp}^i| \in \h^{d_{\inp}}$ to an output state $|\psi_{\out}^i\rangle_{S_i^1S_i^2}\langle \psi_{\out}^i| \in \h^{d_{\out_1}} \otimes \h^{d_{\out_2}}$ with subsystem $S^1$ and $S^2$. The mapping of the QLPUF $\E_L: \h^{d_{\inp}} \otimes \h^{d_{\out_1}} \rightarrow \h^{d_{\out_2}} \otimes \h^{\perp}$ corresponding to a quantum PUF $\E$ is defined as follows:
\begin{equation}
    \label{eq:QLPUF}
    |\psi_{\inp}^i\rangle_{S_i}\langle \psi_{\inp}^i| \otimes \tilde \rho_{S_i^1} \rightarrow 
    \begin{cases}
    & \rho_{S_i^2}~~~~\text{if } \texttt{Ver}(\rho_{S_i^1}, \tilde \rho_{S_i^1})=1\\
    & \perp ~~~~~\text{otherwise.}
    \end{cases}
\end{equation}
where $\rho_{S_i^1}=\tr_{S_i^2}\left[|\psi_{\out}^i\rangle_{S_i^1S_i^2}\langle \psi_{\out}^i|\right]$ and $ \rho_{S_i^2}=\tr_{S_i^1}[|\psi_{\out}^i\rangle_{S_i^1S_i^2}\langle \psi_{\out}^i|]$. 

According to such construction, the QLPUF takes the input $|\psi_{\inp}^i\rangle_{S_i}\langle \psi_{\inp}^i| \otimes \tilde \rho_{S_i^1}$. Among the two input states, the QLPUF uses $|\psi_{\inp}^i\rangle_{S_i}\langle \psi_{\inp}^i|$ to get an output state $|\psi_{\out}^i\rangle_{S_i^1S_i^2}\langle \psi_{\out}^i|$. The QLPUF outputs a state $\rho_{S_i^2}$ if $\rho_{S_i^1}$ is same as the state $\tilde \rho_{S_i^1}$. Otherwise, it outputs an abort state $\perp$. We refer to Figure \ref{fig:qlpuf_limited} for the circuit of the QLPUF. Note that the QLPUF needs to check internally whether $\rho_{S_i^1} = \tilde \rho_{S_i^1}$ or not. If $\rho_{S_i^1}$ is a pure state then we can use the SWAP test to check the equality of two pure states. The circuit of the SWAP test makes the circuit of the entire QLPUF efficient. 


\begin{figure}[!h]
    \centering
    \begin{tikzpicture}[
    node distance = 3mm, every node/.style = {rectangle, rounded corners, align=center, scale=0.95}]
    \node (puf) [draw, inner xsep=1mm, inner ysep=3mm] {QPUF $\E$};
    \node [coordinate, right=1.5cm of puf] (ADL){};
    \node (first) [below=1cm of ADL, draw, inner xsep=1mm, inner ysep=3mm] {$\texttt{Ver}(\rho_{S_i^1}, \tilde \rho_{S_i^1})$};
    \node (second) [right=4cm of puf,draw, inner xsep=1mm, inner ysep=3mm] {Output};
    \node[name=outer1, dashed, fit=(puf) (first) (second), draw, inner xsep=3.5mm, inner ysep=3mm] {};
    \node (input_1)[left=0.3cm of puf] {$\ket{\psi_{\inp}^i}_{S_i}\bra{\psi_{\inp}^i}$};
    \node (output_1) [right=0.3cm of puf, inner ysep=1mm] {$\ket{\psi_{\out}^i}_{{S_i^1}{S_i^2}}\bra{\psi_{\out}^i}$};
    \node (input_2)[left=2.2cm of first] {$\tilde \rho_{S_i^1}$};
    \node (output_2) [right=0.4cm of second] {$\rho_{S_i^2}$};
    \node (output_3) [below=0.1mm of output_2] {$/\perp$};
    \draw[-latex'] (input_1) -- (puf); 
    \draw[-latex'] (puf) -- (output_1); 
    \draw[-latex'] (output_1) -- (second); 
    \draw[-latex'] (output_1.south -| first.north) -- (first); 
    \draw[-latex'] (input_2) -- (first);
    \draw[-latex'] (second) -- (output_2);
    \draw[-latex'] (first) -| node[pos=0.7,right,font=\footnotesize] {b} (second);
    \end{tikzpicture}
    \caption{Construction of QLPUF $\E_L$ with quantum PUF $\E: \h^{d_{\inp}}: \rightarrow \h^{d_{\out_1}}\otimes \h^{d_{\out_2}}$}
    \label{fig:qlpuf_limited}
\end{figure}

On the other hand, however, in the case when the quantum channel $\E$ of the quantum PUF can have entangling power and hence the subsystems $S^1$ and $S^2$ that represent the different parts of the response, may be entangled. Let's start from the simple situation with a 2-qubit entangled state as $|\psi_{\out}^i\rangle\langle \psi_{\out}^i|$. i.e., for a quantum PUF $\E$ that maps an input state $|\psi_{\inp}^i\rangle\langle \psi_{\inp}^i|$ to an entangled output state $|\psi_{\out}^i\rangle\langle \psi_{\out}^i| := (\alpha\ket{a_1^i}\ket{b_1^i}+\beta\ket{a_2^i}\ket{b_2^i})(\alpha^\ast\bra{a_1^i}\bra{b_1^i}+\beta^\ast\bra{a_2^i}\bra{b_2^i})$ where \(\abs{\alpha}^{2}+\abs{\beta}^{2}=1\), $\ket{a_1}$ and $\ket{a_2}$ are any two vectors in the space of subsystem $S^1$, and $\ket{b_1}$ and $\ket{b_2}$ are any two vectors in the space of subsystem $S^2$. Consider a POVM measurement on the subsystem $S^1$ with $m$ elements $\{E_m\}$ where $\Sigma_m E_m=I$
, the reduced density operator of $S^2$ after tracing out $S^1$ is: 
\begin{align}
    \label{eq:reduced}
    \nonumber
    \rho_{S_i^2}&=\Sigma_m\tr_{S_i^1}[\tr(|\psi_{\out}^i\rangle_{S_i^1S_i^2}\langle \psi_{\out}^i|E_m)]\\
    \nonumber
    &=\Sigma_m\tr_{S_i^1}[\langle \psi_{\out}^i|E_m|\psi_{\out}^i\rangle_{S_i^1S_i^2}]\\
    &=|\alpha|^2\ket{b_1^i}\bra{b_1^i}+|\beta|^2\ket{b_2^i}\bra{b_2^i}
\end{align}

The state of subsystem $S^2$ is clearly a mixed state. However, checking the equality between two mixed states is difficult, and sometimes not possible. For example, we have two different mixed states:
\begin{equation}
    \ket{\psi_1^i}= 
    \begin{cases}
    \ket{b_1^i}~\text{with probability}~|\alpha|^2\\
    \ket{b_2^i}~\text{with probability}~|\beta|^2
    \end{cases}
\end{equation}
and
\begin{equation}
    \ket{\psi_2^i}= 
    \begin{cases}
    \alpha\ket{b_1^i}+\beta\ket{b_2^i}~\text{with probability}~\frac{1}{2}\\
    \alpha\ket{b_1^i}-\beta\ket{b_2^i}~\text{with probability}~\frac{1}{2}
    \end{cases}
\end{equation}
The density operators of both mixed states are represented as Equation (\ref{eq:reduced}). That is to say, these two mixed states are unequal but totally indistinguishable. This can be trivially extended to the n-qubit situation. So the lockdown technique is not implementable with generic quantum PUFs.  
\end{proof}

In the case of quantum PUFs, our study shows that some quantum mechanical properties of quantum PUFs such as entanglement generation, make it challenging to use the straightforward quantum analogue of the classical lockdown technique. However, this is still an interesting observation, because we do not need this sort of condition on encoding the output of classical PUF to construct an HPUF with the lockdown technique.

\section{Detailed Security Analysis}\label{ap:security}

\subsection{Proof of Lemma \ref{lem:guess_prob}}
\label{appn:lem1}
Here, we give a detailed proof for Lemma~\ref{lem:guess_prob}.
\begin{proof}[\textbf{Proof of Lemma~\ref{lem:guess_prob}}]
According to Construction \ref{cons:hpuf_1}, for a given $x_i$, we use the $2j$-th bit $y_{i,2j} \in \{0,1\}$ of the outcome of the CPUF to choose the basis (either $ \{|0\rangle,|1\rangle\}$-basis or $\{|+\rangle,|-\rangle\}$-basis) of the $j$-th qubit output of the HPUF. Further, we use the $y_{i,(2j-1)}\in \{0,1\}$ to choose a state from the chosen basis. Here, if $y_{i,(2j-1)} = 0$ then from an adversarial point of view, the output state is $\rho_0=(\frac{1}{2}+\delta_r)|0\rangle\langle 0| + (\frac{1}{2}-\delta_r)|+\rangle \langle +|$. Similarly,  if $y_{i,(2j-1)} = 1$ then from an adversarial point of view, the output state is $\rho_1=(\frac{1}{2}+\delta_r)|1\rangle\langle 1| + (\frac{1}{2}-\delta_r)|-\rangle \langle -|$. For the adversary, the probability of correctly guessing $y_{i,(2j-1)}$ is the same as distinguishing the two states $\rho_0,\rho_1$.  Here $\Pr[\A^{i,j}_{guess}(x_i, |\psi^{i,j}_{\out}\rangle \langle \psi^{i,j}_{\out}|)= y_{i,(2j-1)}]$ denotes the optimal probability of guessing the bit correctly. From the Helstorm-Holevo bound \cite{helstrom_quantum_1969,H1973} we get,

\begin{align}
\nonumber
    &\Pr[\A^{i,j}_{guess}(x_i, |\psi^{i,j}_{\out}\rangle \langle \psi^{i,j}_{\out}|)= y_{i,(2j-1)}] \\ 
    \nonumber &\leq p[1+\max_{E} Tr[E(\rho_0 - \rho_1)]] \\
    \nonumber &= p[1 + \frac{1}{2}\norm{\rho_0- \rho_1}_1].\\
    \nonumber &= p(1+\sqrt{p^2+(1-p)^2})\\
    &\leq p(1+\sqrt{2}p)
\end{align}

This concludes the proof.

\end{proof}

\subsection{Proof of Theorem \ref{thm:cpuf_vs_hpuf}}
\label{apn:thm1}

We show the contrapositive statement that if you can break HPUF you can also break underlying CPUF. Here we give the proof for $m=1$, and it can easily be generalised for any arbitrary integer $m > 0$.

Suppose for the HPUF, a $q$-query weak-adversary win the unforgeability game with a non-negligible probability $P(m=1,p,q)$. This implies, given a database of $q$ random challenge response from the HPUF, the adversary can produce $|\psi_{f(x^*)}\rangle$ corresponding to a random challenge $x^* \in \{0,1\}^n$ with a non-negligible probability $P(m=1,p,q)$. Note that, for the deterministic adversarial strategy, the adversary can produce multiple copies of the forged state $| \psi_{\tilde f(x^*)}\rangle$ for a random challenge $x^*$. For the random adversaries, we can produce multiple copies of the same forged state $| \psi_{\tilde f(x^*)}\rangle$ just by fixing the internal randomness parameter of the adversarial strategy. Hence, both the random and deterministic adversary can produce multiple copies of the forged state $| \psi_{\tilde f(x^*)}\rangle$ for a random challenge $x^*$. From the multiple (say $K$) such copies of $| \psi_{\tilde f(x^*)}\rangle$, the adversary will extract $\tilde{f}(x^*)$ using the following strategy.

\begin{algorithm}
\caption{Algorithm to Forge CPUF from HPUF}
\begin{algorithmic}
    \Require $K \geq 2$-copies of the forged state $| \psi_{\tilde f(x^*)}\rangle$
    \State Measure the $1$-st copy of the state $| \psi_{\tilde f(x^*)}\rangle$ in $\{|0\rangle, |1\rangle\}$-basis.
    \State Let $z_1 \in \{0,1\}$ be the measurement outcome.
    \For{$i=2$; $i \leq (K-1)$; $i++$}
    \State Measure the $i$-th copy of the state $| \psi_{\tilde f(x^*)}\rangle$ in $\{|0\rangle, |1\rangle\}$-basis.
    \State Let $z_i \in \{0,1\}$ be the measurement outcome.
    \If{$z_i \neq z_{i-1}$}
        \State \textbf{break}   \Comment{Implies $| \psi_{\tilde f(x^*)}\rangle \in \{|+\rangle,|-\rangle\}$.}
    \EndIf
    \EndFor
    \If{$i=K$}
        \State \Return $\tilde f(x^*) = (0,z_i)$
        \Else
        \State Measure the $i+1$-th copy in $\{|+\rangle, |-\rangle\}$-basis.
        \State Let $z_{i+1}$ be the measurement outcome.
        \State \Return $\tilde f(x^*) = (1,z_{i+1})$.
    \EndIf
\end{algorithmic}
\end{algorithm}

If $|\psi_{f(x^*)}\rangle = |\psi_{\tilde f(x^*)}\rangle \in \{|0\rangle, |1\rangle\}$ then in Algorithm 1 all the measurement outcomes $z_i$ (for $1\leq i \leq K$) would be the same, and $\tilde f(x^*) = f(x^*)$. However, if $|\psi_{f(x^*)}\rangle = |\psi_{\tilde f(x^*)}\rangle \in \{|+\rangle, |-\rangle\}$ then $\tilde f(x^*) \neq f(x^*)$ if and only if all the measurement outcomes $z_i$ are equal ($1 \leq i \leq K$). This happens with probability $\frac{1}{2^K}$. Therefore, we get

\begin{equation}
\label{eq:forge_f_psi_x}
    \Pr_{x^*}[\tilde f(x^*) = f(x^*)| |\psi_{f(x^*)}\rangle = |\psi_{\tilde f(x^*)}\rangle] \geq (1-\frac{1}{2^K}).
\end{equation}

If the adversary successfully forges the HPUF with a non-negligible probability $P(m=1,p,q)$ then from Equation \eqref{eq:forge_f_psi_x} we get that the adversary manages the CPUF with probability at least $P(m=1,p,q)(1-\frac{1}{2^K})$, which is also non-negligible. Therefore, if an adversary manages to win the unforgeability game for the HPUF with a non-negligible probability, then using the same forging strategy it can also win the unforgeability game for the corresponding CPUF with a non-negligible probability. This implies, if no QPT weak adversary can win the universal unforgeability game with a non-negligible probability for the CPUF then no QPT adversary can win the universal unforgeability game with a non-negligible probability for the corresponding HPUF. This concludes the proof.

\subsection{Proof of Lemma \ref{lem:database_dest1}}
\label{appn:lem2}
For a successful forgery, the adversary needs to win the universal unforgeability defined in Game \ref{game:uni-unf-abs}. This implies, using the measurement strategy $E(D_q)$ the adversary needs to produce a quantum state $\ket{\psi_{f(x^*)}}$ corresponding to a challenge $x^* \in_R \{0,1\}^n$ that is chosen uniformly at random. Without loss of generality, we can write the measurement strategy as a POVM with two outcomes $E(D_q) = \{E_{\forge}(D_q,x^*),E_{\fail}(D_q,x^*)\}$, where $E_{\forge}(D_q,x^*), E_{\fail}(D_q,x^*)$ denote the measurement operators corresponding to the successful forgery and the failure forgery respectively. Therefore, we can write the successful forging probability $p_{\forge}$ as follows.
\begin{equation}
    p_{\forge} = \tr[E_{\forge}(D_q,x^*)\rho_{D_q}^{x^*}],
\end{equation}
where $\rho_{D_q}^{x^*} := \ket{D_q}\bra{D_q} \otimes \ket{x^*}\bra{x^*}\otimes \ket{0^m}_{out}\bra{0^m}$. Here the $out$ register would contain the forged state. If we write $E_{\forge}(D_q,x^*) = M^{\dagger}_{\forge}(D_q,x^*)M_{\forge}(D_q,x^*)$, then we can rewrite the post-measurement state corresponding to the successful forgery as follows:
\begin{equation}\label{eq:post_state}
\begin{split}
        & \frac{M_{\forge}(D_q,x^*)\ket{D_q} \otimes \ket{x^*} \otimes \ket{0^{m'}}_{out}}{\sqrt{p_{\forge}}} \\
        & = \frac{\ket{\tilde D_q}_R \otimes \ket{x^*} \otimes \ket{\psi_{f(x^*)}}_{out} \otimes \ket{\tilde a}_{out}}{\sqrt{p_{\forge}}},
\end{split}
\end{equation}
where $|\tilde D_q\rangle_R$ denotes the post-measurement database state, and $\ket{\tilde a}_{out}$ is the post-measurement state of the ancillary system which is a $(m'-m)$ dimensional state while as $\ket{\psi_{f(x^*)}}_{out}$ is $m$ dimensional. As $\bigotimes_{i=1}^q|x_i\rangle_C$ is a classical state, in the rest of the proof we don't write them in the expressions. 

Using the \emph{Neimark's theorem} we can replace the POVM measurement strategy $E(D_q)$ with the combination of a unitary acting on an extended system including an ancilla $\ket{anc}_A$, followed by a projective measurement. Let us denote the unitary as $U^{x^*}_{D_q}$ which couples the input state $\ket{D_q} \otimes \ket{0^{m'}}_{out}$ with the ancillary system $|anc\rangle_A$, and let $\{|v\rangle\}$ be the basis on which the projective measurement is applied to the ancilla. We first rewrite the impact of the unitary $U^{x^*}_{D_q}$ on the input state:
\begin{align}\label{eq:application-proof-neimark-unit}\nonumber
    & U^{x^*}_{D_q}\left(\bigotimes_{i=1}^q \ket{\psi_{f(x_i)}}_R \otimes \ket{0}_{out} \otimes \ket{anc}_A\right) \\
    \nonumber
    & = U^{x^*}_{D_q}\left( \ket{\Psi^q_{f}}_R \otimes \ket{0}_{out} \otimes \ket{anc}_A\right) \\ 
    & = \sum_{v} \sqrt{p_v} \ket{\Psi^q_v}_R \otimes \ket{\tilde \psi_v}_{out} \otimes \ket{v}_{A}.
\end{align}
where in the second line we have rewritten everything after applying the unitary in the $\{\ket{v}\}$-basis. Now, the adversary performs a projective measurement on the state \eqref{eq:application-proof-neimark-unit} in this basis. Suppose for the correct forgery, the ancilla is projected into the $\ket{v_{\forge}}_A$ state. Therefore we can rewrite the expression of $p_{\forge}$ as follows:
\begin{equation}
    p_{\forge} = \sum_{v: v = v_{\forge}} p_v |\mbraket{v_{\forge}}{v}|^2.
\end{equation}
Overall, following this strategy, the purification of the adversary's post-measurement state with an optimal POVM measurement can be written as the following:
\begin{equation}\label{eq:application-post-state-neimark}
    \frac{\ket{\tilde D_q}_R \otimes \ket{x^*} \otimes \ket{\psi_{f(x^*)}}_{out} \otimes \ket{v_{\forge}}_A}{\sqrt{p_{\forge}}}, 
\end{equation}
where $\ket{\tilde D_q}$ denotes the post-measurement database state. Note that, due to Neimark's theorem the post-measurement database states in Equation \eqref{eq:post_state}, and \eqref{eq:application-post-state-neimark} are the same, if the same ancillary system has been assumed after the purification and POVM, \emph{i.e.} if $\ket{v_{\forge}}_A = \ket{\tilde a}_{out}$.

Now, let us use the unitary $U^{x^*}_{D_q}$ and the measurement basis $\{\ket{v}\}$ to construct a \emph{measure-then-forge} strategy. As the unitary $U^{x^*}_{D_q}$ only depends on the input $x^*$ and $D_q$, we can rewrite it in the basis that is diagonalised with respect to the states $\{\ket{\Psi^q_v,v}\}_v$.

For the post-measurement state $\ket{v_{\forge}}$, of the ancilla, the adversary applies $U^{x,x^*}_{D_q,\Psi^q_{\forge},v_{\forge}}$ on the $\ket{0}_{out}$ register. Note that, the adversary doesn't have any information about the $\{f(x_i)\}_{1 \leq i \leq q}$ before measuring the ancillary sub-system in the $\{\ket{v}\}$-basis. Hence, the measurement basis $\{\ket{v}\}$ choice only depends on the classical challenges $x_i$'s and $x^*$. Therefore, the adversary can use the same information to find the $\{\ket{v}\}$-basis, and first performs the measurement on the $RA$ register in $\{\ket{\Psi^q_v,v}\}$-basis, and obtains the state $\ket{\Psi^q_{\forge},v_{\forge}}$ with the same probability $p_{\forge}$. After the measurement, the adversary applies the unitary $U^{x^*}_{D_q,\Psi^q_{\forge},v_{\forge}}$ on $\ket{0}_{out}$, and get the forged state $\ket{\psi_{f(x^*)}}$. Therefore, with this strategy, the adversary also wins the unforgeability game with the probability $p_{\forge}$.

Note that, there always exists a unitary $U$ such that $U(\bigotimes_{i=1}^q\ket{\tilde{f}(x_i)}) \otimes \ket{anc} = \ket{\Psi^q_{\forge},v_{\forge}}$, where $\tilde f(x_i)$ denotes the extracted information about $f(x_i)$'s from the encoded database $\ket{D_q}$. Therefore, from any generalised measurement strategy $E(D_q)$ we can construct a strategy for the measure-then-forge protocol that can win the universal unforgeability game with the same probability $p_{\forge}$. This concludes the proof.

\subsection{Proof of Lemma \ref{lem:pguess}}

In this lemma, we give an upper bound on the probability of extracting the CPUF outcomes from the $(1-\varepsilon)q$ out of $q$ responses of the HPUF. Let $\A_h$ be a quantum adversary who plays the unforgeability game against the HPUF. $\A_h$ has access to $q$ queries of the HPUF as $q$ pairs of $\{(X_i, \ket{\psi_{f(X_i)}})\}^q_{i=1}$. Note that, according to the construction \ref{cons:hpuf_1}, $|\psi_{f(X_i)}\rangle\langle \psi_{f(X_i)}|=\bigotimes_{j=1}^{2m} |\psi^{i,j}_{f(X_i)}\rangle\langle\psi^{i,j}_{f(X_i)}|$, where $|\psi^{i,j}_{f(X_i)}\rangle \in \{|0\rangle, |1\rangle, |+\rangle, |-\rangle\}$. As the state in the adversary's possession depends fully on a classical string, we can describe this situation using a classical-quantum state, where the $C$ register contains the classical string $f(X_i)$, and the $S$ register contains the quantum state $|\psi_{f(X_i)}\rangle\langle \psi_{f(X_i)}|$. We assume the $j$-th bit of the string $f(X_i)$ as $Y_{i,j}$. The classical-quantum state for the $j$-th qubit is of the following form.

\begin{equation}
    (\rho_{CS})_{j} = \hspace{-0.15in} \sum_{\substack{Y_{2j-1}, \\ Y_{2j} \in \{0,1\}}} \hspace{-0.15in} \frac{1}{4} |Y_{2j-1,2j}\rangle_{C_{i,j}}\langle Y_{2j-1,2j}| \otimes |\psi^{i,j}_{f(X_i)}\rangle\langle \psi^{i,j}_{f(X_i)}|.
\end{equation}

In Lemma \ref{lem:guess_prob}, we prove that the probability of guessing $Y_j$ is $p_{\guess}$, and it has the following upper bound.

\begin{equation}
\label{eq_appen:pguess}
    p_{\guess}\leq p(1+\sqrt{2}p).
\end{equation}

In Section \ref{sec:assump}, we assume that all the output bits of the CPUF are i.i.d. Therefore the entire classical-quantum state for the $i$-th challenge $X_i$ is $\rho_{CS}$ of the following form. 

\begin{equation}
    \rho_{CS} = \bigotimes_{j=1}^{m} (\rho_{CS})_{j}.
\end{equation}

Therefore, the probability of guessing $f(X_i)$ from the $S$ subsystem is upper bounded by 
\begin{align}
    (p_{\guess})^{2m}.
\end{align}
Let $\rho_{C^qS^q}$ denote the joint state shared between the server and the $q$-query weak adversary. Due to the i.i.d assumption on all the outputs of the underlying classical PUF of the HPUF, $\rho_{C^qS^q}$ has the following form. 

\begin{equation}
    \rho_{C^qS^q} = \left(\bigotimes_{j=1}^{m} (\rho_{CS})_{j}\right)^{\otimes q}.
\end{equation}

Here, we would like to find an upper bound on the probability of successfully guessing $f(X_i)$'s for at least $(1-\varepsilon)q$ responses out of $q$ responses. We denote this guessing probability as $p_{\extract}$. Note that, due to the i.i.d assumption on the different outcomes of the CPUF, the adversary's success probability of guessing exactly $k$ responses out of $q$ responses is upper bounded by $\binom{q}{k}(p_{\guess})^{2mk}(1-(p_{\guess})^{2m})^{q-k}$. Therefore, we can re-write the expression of $p_{\extract}$ as follows,

\begin{equation}
    \label{eq_app:pext_pril}
    p_{\extract} \leq \sum_{k=(1-\varepsilon)q}^{q}\binom{q}{k} (p_{\guess})^{2mk}(1-(p_{\guess})^{2m})^{q-k}.
\end{equation}

This concludes the proof.

\subsection{Proof of Theorem \ref{thm:hpuf_auth}}

At the $i$-th round, the HLPUF $\E^L_f$ receives the queries of the form $(x_i, \tilde \rho_1)$, where the classical string $x_i \in \{0,1\}^n$, and $\tilde \rho_1 \in (\h^{2})^{\otimes m}$. The HLPUF returns $\E_{f_2}(x_i)$ if $\ver(\tilde \rho_{1}, \E_{f_1}(x_i)) =1$, otherwise it returns an abort state $\ket{{\perp}}\bra{\perp}$ corresponding to $\perp$. Hence, to get any non-abort state $\ket{\perp}$ from the HLPUF, the adaptive adversaries $\Aad$ need to produce a query of the form $(x_i, \E_{f_1}(x_i))$. As the adversary doesn't have any direct access to the mapping $\E_{f_1}$, the only way it can get any information about $\E_{f_1}(x_i)$ by intercepting the challenges that are sent by the server to the client. Suppose that the adaptive adversary has access to a set of $q$ queries $X_{[q]} := \{X_i\}_{1\leq i \leq q}$ and the corresponding responses $\Psi_{[q]} := \{\E_{f_1}(x_i)\}_{1\leq i \leq q}$. Here each $X_i$ follows a uniform distribution over the challenge set $\{0,1\}^n$. Hence, for the mapping $\E_{f_1}$ the power of the adaptive adversary reduces to the power of a weak adversary. As $\E_{f_1}$ has the universal unforgeability property against any $q$-query weak adversary, hence we get, for any random challenge $X \not \in X_{[q]}$,

\begin{align}
\label{eq:unforge_e1_1}
    \nonumber
    &\Pr_{X,X_{[q]}}[1 \leftarrow \Gea(\Aad, m,X,X_{[q]})]\\
    &=\Pr_{X,X_{[q]}}[1 \leftarrow \Gea(\Ana, m, X, X_{[q]})]\leq\epsilon(m).
\end{align}

This implies, using the set of challenges $X_{[q]}$ and responses $\Psi_{[q]}$ the adversary cannot produce the response corresponding to a random challenge $X \not \in X_{[q]}$. 
Suppose from the query set $X_{[q]}$ and the responses, the adaptive adversary successfully generates a set $X'_{[q']}$ of $q'$ adaptive queries, and corresponding responses $\Psi_{[q']}$ for the HLPUF $\E^L_f$. Without any loss of generality, we assume that for all of the queries, $X'_i \in X'_{[q']}$ the HLPUF returns a non-abort state.  

We assume that the adaptive adversary wins the universal unforgeability game using the query set $ X_{\text{ad}} = X_{[q]} \cap X'_{[q']}$. This implies,

\begin{equation}
    \label{eq:assump_el}
    \Pr_{X,X^{\E_L}_{[q]_{\text{ad}}}}[1 \leftarrow \Gel(\Aad, m, X, X_{\text{ad}})] \geq  \text{non-negl}(m).
\end{equation}

From the construction of our HLPUF in Construction \ref{cons:hlpuf} we get that winning the universal unforgeability game with the HLPUF $\E^L_{f}$ implies winning the universal unforgeability with $\E_{f_2}$. Hence, we can rewrite Equation \eqref{eq:assump_el} in the following way,

\begin{equation}
    \label{eq:assump_e2}
    \Pr_{X,X_{\text{ad}}}[1 \leftarrow \Geb(\Aad, m, X, X_{\text{ad}})] \geq  \text{non-negl}(m).
\end{equation}

Note that, if the adaptive adversary manages to get non-abort outcomes from the HLPUF corresponding to all $X'_i \in X_{\text{ad}}$ then from the Construction \ref{cons:hlpuf} we get, $1 \leftarrow \Gea(\Aad, m, X'_i, X_{\text{ad}})$. Due to the unforgeability assumption of Equation \eqref{eq:unforge_e1_1} we get,

\begin{align}
\nonumber
    &\Pr_{X,X_{[q]}}[1 \leftarrow \Gea(\Ana, m,X,X_{[q]})] \\ \label{eq:assump_e1_2}
    &= \Pr_{X,X_{\text{ad}}}[1 \leftarrow \Gea(\Aad, m, X, X_{\text{ad}})] \leq   \epsilon(m).
\end{align}

Note that, the main difference between adaptive and weak adversaries lies in the choice of the query set. If we fix the query set $X_{\text{ad}}$, then the both adaptive $\Aad$ and a weak adversary can extract the same amount of information from the responses corresponding to the query set $X_{\text{ad}}$. Therefore, their winning probability of the universal unforgeability game becomes equivalent. This implies, we can rewrite Equation \eqref{eq:assump_e1_2} in the following way,

\begin{align}
   \nonumber
    & \Pr_{X,X_{\text{ad}}}[1 \leftarrow \Gea(\Aad, m, X, X_{\text{ad}})] \\  \label{eq:assump_e1_3}
    &=   \Pr_{X,X_{\text{ad}}}[1 \leftarrow \Gea(\Ana, m, X, X_{\text{ad}})]\leq \epsilon(m).
\end{align}

By combining Equation \eqref{eq:assump_e1_2} and Equation \eqref{eq:assump_e1_3} we get, both the random variables $X_{[q]}$ and $X_{\text{ad}}$ are equivalent. 
From the universal unforgeability property of the PUF $\E_{f_2}$ against any $q$-query weak adversary, we get

\begin{align}
\label{eq:unforge_e2_1}
     \Pr_{X,X_{[q]}}[1 \leftarrow \Geb(\Ana, m, X, X_{[q]})] \leq  \epsilon(m).
\end{align}
As both of the random variables $X_{[q]}$ and $X_{\text{ad}}$ are equivalent, so we get, 

\begin{align}
    \label{eq:final_thm_2}
    \nonumber
    & \Pr_{X,X_{[q]}}[1 \leftarrow \Geb(\Ana, m, X, X_{[q]})] \\
    \nonumber
    & = \Pr_{X,X_{\text{ad}}}[1 \leftarrow \Geb(\Ana, m, X, X_{\text{ad}})]\\
    &= \Pr_{X,X_{\text{ad}}}[1 \leftarrow \Geb(\Aad, m, X, X_{\text{ad}})] \leq \epsilon(m).
\end{align}

The second equality follows from the fact that for a fixed query set $X_{\text{ad}}$ the adaptive adversary $\Aad$ and weak adversary $\Ana$ become equivalent. Note that, only one of Equation \eqref{eq:assump_e2} and Equation \eqref{eq:final_thm_2}  is true. The Equation \eqref{eq:final_thm_2} is true because of the unforgeability of $\E_{f_2}$. Hence, our assumption of Equation \eqref{eq:assump_e2} is wrong. Therefore, Equation \eqref{eq:assump_el} is also not true. Hence, with the proof by contradiction, we get,  
\begin{equation}
    \label{eq:true_el}
    \Pr_{X,X_{\text{ad}}}[1 \leftarrow \Gel(\Aad, m, X, X_{\text{ad}})] \leq  \epsilon(m).
\end{equation}

This concludes the proof.

\subsection{Challenge reusability Proof}\label{ap:ch-reusability}
In this subsection, we give a detailed security analysis and proof for the challenge reusability discussed in Section~\ref{ap:reusability}. First, we introduce the tools and uncertainty relation that we need for the proof mostly from~\cite{coles2017entropic}, then we give the formal statement and proof for Theorem~\ref{th:uncertainty}.

Heisenberg's uncertainty principle is one of the most important fundamental properties of quantum mechanics which is mathematically speaking due to the non-commuting property of some observables like Pauli $X$ and $Z$ measurements. Reformulating these relations in terms of entropic quantities has been very useful in the foundations of quantum information and has also been widely used in the security proofs of different quantum communication protocols such as QKD. The most well-known uncertainty relation for these operators was given by Deutsch~\cite{deutsch1983uncertainty} and later improved~\cite{maassen1988generalized} as follows:

\begin{equation}\label{eq:uncert-xz}
  H(X) + H(Z) \geq \log_2 (\frac{1}{c})  
\end{equation}

where $c$ denotes the maximum overlap between any two eigenvectors of $X$ and $Z$. Usually, a quantum system $A$ is considered where the state is described with the density matrix $\rho_A$ on a finite-dimensional Hilbert space. If the measurement is performed in a $X$ and $Z$ basis (or equivalently any other MUB bases), then the measurements are just projective operators that project the state into the subspace spanned by those bases. In the most general case, the measurements are a set of POVM operators on system $A$ denoted as $\{M^x\}_x$ and $\{N^z\}_z$ where the general Born rule states that the probability of obtaining outcomes $x$ and $z$ to be as follows:

\begin{equation}
  P_X(x) = tr[\rho_A M^x] \quad , \quad  P_Z(z) = tr[\rho_A N^z]
\end{equation}

In this case, the Equation~(\ref{eq:uncert-xz}) still gives the generalised uncertainty relation with the difference that the $c$ is defined as follows:

\begin{equation}
  c = \max_{x,z} c_{zx}, \quad \text{and} \quad c_{xz} = \parallel \sqrt{M^x}\sqrt{N^z} \parallel^2
\end{equation}

where $\parallel\cdot\parallel$ denotes the operator norm (or infinity norm). The above uncertainty relation can be extended to conditional entropy as well in the context of guessing games~\cite{coles2017entropic}. Assume two parties, Alice and Bob, where Bob prepares a state $\rho_A$ and Alice randomly performs the $X$ and $Z$ measurements leading to a bit $K$. Then Bob wants to guess $K$ given the basis choice $R=\{0,1\}$. The conditional Shannon entropy is defined as follows:

\begin{equation}
  H(K|R) := H(KR) - H(R)
\end{equation}
 Thus one can get the same uncertainty relation with the conditional entropy as:
 \begin{equation}
  H(K|R=0) + H(K|R=1) \geq \log_2(\frac{1}{c})
\end{equation}
We also have the quantum equivalent of Shannon entropy for mixed quantum state called von Neumann entropy, which is defined as $H(\rho) = tr[\rho log (\rho)] = -\sum_i \lambda_i \log_2 (\lambda_i)$ where $\lambda_i$ are the eigenvalues of $\rho$. Similar, to the classical case, for a bipartite system $\rho_{AB}$ the conditional von Neumann entropy is defined as follows:

\begin{equation}
  H(A|B) := H(\rho_{AB}) - H(\rho_B)
\end{equation}

Furthermore, this can be generalised to any tripartite quantum system with state $\rho_{ABC}$. An interesting property here is an inequality referred to as \emph{data processing inequality}~\cite{coles2017entropic} which states that the uncertainty of $A$ conditioned on some system $B$ never goes down if $B$ performs a quantum channel on the system. In other words for any tripartite system $\rho_{ABC}$ where system $C$ will perform a quantum operation on the quantum state in order to extract some information, we have the following:

\begin{equation}
  H(A|BC) \leq H(A|B)
\end{equation}

Given the above inequality leads to the general uncertainty relations between any tripartite system including two parties Alice and Bob, and an eavesdropper Eve:

\begin{equation}
  H(K|ER) + H(K|BR) \geq \log_2\left(\frac{1}{c}\right)
\end{equation}

Where $K$ is the measurement output and $R$ is the basis bit. This imposes a fundamental bound on the uncertainty in terms of von Neumann entropy, in other words, the amount of information that an eavesdropper can extract from the joint quantum systems shared between the three parties. These inequalities can also be extended to the case where $n$ bits are encoded in $n$ quantum states where $R^n$ and $K^n$ are bit-strings denoting the basis random choices for the qubits and measurement outputs respectively, and $B^n$ denotes Bob's bit-string. Also, $E$ denotes Eve's system, a general quantum system operating on $n$-qubit messages and any arbitrary local system. We have the following inequality, which is the main result that we will use in the proof of the next theorem:

\begin{equation}
  H(K^n|ER^n) + H(K^n|B^nR^n) \geq n \log_2(\frac{1}{c})
\end{equation}

Now we are ready to give a more formal version of the Theorem~\ref{th:uncertainty} and the proof.

\begin{theorem}\label{th:unc-formal}
In Protocol~\ref{prot:hpuf}, let $x$ be a challenge and $(y_{1},\dots,y_{2m})$ be the response of a classical PUF used inside the HPUF construction, with randomness bias $p = (\frac{1}{2} + \delta_r)^{2m}$ in generating the random classical responses. If the verification algorithm for a state $\tilde{\rho}$ passes with probability $1 - \epsilon(m)$, then Eve's conditional min-entropy $H_{min}^{Eve}$ in terms of von Neumann entropy over the server's (or client's) classical response, satisfies the following inequality:
\begin{equation}
  H_{min}^{Eve} = H_{min}(S^m|ER^m) \geq m - \epsilon(m)  
\end{equation}
\end{theorem}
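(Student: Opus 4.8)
The plan is to derive the bound from the entropic uncertainty relation for the two complementary bases $Z=\{\ket{0},\ket{1}\}$ and $X=\{\ket{+},\ket{-}\}$ used in the encoding, but in its conditional min-/max-entropy form rather than the von Neumann form recalled in the preliminaries, since it is the min-/max- version whose Eve-term is exactly the min-entropy $H_{min}^{Eve}=-\log_2 P_{guess}^{Eve}$ that feeds into Theorem~\ref{thm:chall_reuse_mult}.

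First I would set up the joint state after Eve's most general attack. Fixing the challenge $x$, the encoded response of Construction~\ref{cons:hpuf_1} splits into a state register $S^m=(y_1,y_3,\dots,y_{2m-1})$ (the bits chosen within each basis) and a basis register $R^m=(y_2,y_4,\dots,y_{2m})$, with both the verifier and Eve conditioned on $R^m$. Eve holds an ancilla $E$ and forwards the possibly altered qubits in a register $B^m$ to the verifier. Because every qubit is prepared in one of the two mutually unbiased bases, the complementarity constant is $c=\max_{b,b'}|\braket{b|b'}|^2=\tfrac12$, so $\log_2(1/c)=1$ per qubit, and over all $m$ qubits the uncertainty relation gives
\begin{equation}
  H_{min}(S^m\,|\,E R^m) + H_{max}(S^m\,|\,B^m R^m) \geq m .
\end{equation}

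Second, I would control the verifier's term using verification success. The verification of Construction~\ref{cons:hlpuf} measures each forwarded qubit in the basis dictated by $R^m$ and compares the outcome with the stored value of $S^m$; by hypothesis it accepts with probability $1-\epsilon(m)$. This means that measuring $B^m$ in the $R^m$ basis reproduces $S^m$ up to error $\epsilon(m)$, so by a gentle-measurement argument the joint state is $O(\sqrt{\epsilon(m)})$-close in purified distance to one in which $S^m$ is perfectly recoverable from $(B^m,R^m)$, a state of vanishing conditional max-entropy. Invoking the operational meaning of $H_{max}$ together with this closeness (a smoothing argument) then yields $H_{max}(S^m|B^mR^m)\leq\epsilon(m)$, negligible in the security parameter. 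Substituting back gives
\begin{equation}
  H_{min}^{Eve}=H_{min}(S^m\,|\,E R^m)\;\geq\; m-H_{max}(S^m\,|\,B^m R^m)\;\geq\; m-\epsilon(m),
\end{equation}
which is the claim. I would also remark that the right-hand side of the uncertainty relation depends only on $c$ and not on the prior distribution of the response, so the bias $p=(\tfrac12+\delta_r)^{2m}$ does not weaken the bound; it serves only to certify that the honest encoding carries genuine randomness.

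The step I expect to be the main obstacle is the second one: turning the operational statement ``the verification accepts with probability $1-\epsilon(m)$'' into the entropic statement $H_{max}(S^m|B^mR^m)\le\epsilon(m)$ in a manner compatible with the smooth min-/max-entropy uncertainty relation, rather than only with the von Neumann inequality stated in the preliminaries. The delicate points are (i) identifying the verifier's measurement outcome with a recovery of $S^m$ and carefully tracking the smoothing parameter induced by the acceptance probability, and (ii) ensuring the two bases entering the uncertainty relation are exactly matched to the $Z/X$ split of the BB84 encoding, so that the min-entropy bounded on Eve's side really concerns the response bits whose guessing probability is invoked in Theorem~\ref{thm:chall_reuse_mult}.
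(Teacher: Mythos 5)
Your proposal is correct and follows the same skeleton as the paper's proof: both rest on the tripartite entropic uncertainty relation for the $Z/X$ conjugate bases of the BB84 encoding, with Eve's term bounded from below by $m$ minus the verifier's max-entropy term, which is in turn controlled by the acceptance probability of the verification. Two details differ. First, on the bias: you argue that since the complementarity constant $c$ depends only on the measurement operators and not on the prior distribution of $(S^m,R^m)$, one may take $c=\tfrac12$ and the bias $p=(\tfrac12+\delta_r)^{2m}$ plays no role in the bound; the paper instead absorbs the bias into artificially tilted $N$-measurements (projectors onto $\sqrt{\tfrac12+\delta_r}\ket{0}\pm\sqrt{\tfrac12\mp\delta_r}\ket{1}$-type states), computes $c=\tfrac12+\delta_r$, and carries an extra correction term $-m\log_2(1+2\delta_r)$ that is then argued negligible under the $p$-randomness assumption. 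Your treatment is the tighter and arguably cleaner one, and it explains why the theorem's bound has no explicit $\delta_r$ dependence; the paper's yields a slightly weaker but still sufficient inequality. Second, for the step you flag as the main obstacle, the paper does not run a gentle-measurement/smoothing argument from scratch: it directly invokes the finite-size result of Tomamichel et al.\ that if the verifier's measurement outcome $\tilde{Y}^m$ disagrees with $S^m$ on at most a fraction $\zeta$ of positions, then $H^{\epsilon}_{max}(S^m|\tilde{Y}^m)\leq m\,h(\zeta)$, and identifies $\zeta=\epsilon(m)$ with the rejection probability, giving $H^{\epsilon}_{max}\leq m\,h(\epsilon(m))\approx m\epsilon(m)$. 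That citation is the concrete tool you were looking for to make step two rigorous; note it bounds the max-entropy conditioned on the classical measurement record $\tilde{Y}^m$ rather than on the quantum register $B^mR^m$, which is obtained from your version by data processing. Up to relabelling of negligible functions the two proofs reach the same conclusion.
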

\begin{proof}
We prove this theorem based on the first half of the state used in Protocol~\ref{prot:hpuf}, i.e., the state $\ket{\psi_{f_1(x_i)}}\bra{\psi_{f_1(x_i)}}=\bigotimes_{j=1}^m |\psi^{i,j}_{f_1(x_i)}\rangle\langle\psi^{i,j}_{f_1(x_i)}|$ that is being sent by the Server (S) and received and measured by the Client (C). Nevertheless, the same proof applies to the second state due to the symmetry of the states and the protocol. 

Let $R^m = (R_1,\dots,R_m)$ be the randomness bitstring showing the choice of the basis encoding of the response, $S^m = (S_1,\dots,S_m)$ be the server's bit encoded in the $R^m$ bases. Note that both $R^m$ and $S^m$ are produced according to the bitstring $(y_{1},\dots,y_{2m})$ which is the first half of the response of CPUF to a given challenge $x$. Also, let $C^m = (C_1,\dots,C_m)$ be the client's correct bit string. We denote the arbitrary joint state of three systems by $\rho_{S^m E C^m}$ where $E$ denotes any arbitrary quantum system held by the eavesdropper. Now, let the Client's measurement outcomes, after the verification be $\tilde{Y}^m = (\tilde{Y_1},\dots,\tilde{Y_m})$ which shows the estimated bits by the Client. Now we can write the tripartite uncertainty principle, in terms of the von Neumann entropy, for MUB measurements and MUB states as follows:
\begin{align}\nonumber
    & H(X_1 X_2 Z_3 X_4 \dots X_{m-1} Z_m | E) + \\ \label{eq:uncert-tripartite}
    &H(Z_1 Z_2 X_3 Z_4 \dots Z_{m-1} X_m | C) \geq \log_2 \left(\frac{1}{c}\right)^m
\end{align}
where $c = \max_{x,z} c_{xz}$ and $c_{xz} = \parallel \sqrt{M^x}\sqrt{N^z} \parallel^2$ for an arbitrary POVM sets $M = \{M^x\}_x$ and $N = \{N^z\}_z$. We note that if the CPUF creates a perfect random bitstring for $R^m$ then states are perfect MUB states and $c = \frac{1}{2}$. Nonetheless, we consider a weaker CPUF with a biased distribution of $p = (\frac{1}{2} + \delta_r)^{2m}$ in creating $0$s and $1$s in the response. Hence, we can translate this imperfectness into a disturbance in the measurement bases. Let $M^0 = \ket{0}\bra{0}$ and $M^1 = \ket{1}\bra{1}$ be the usual measurement in the computational basis but let the $N$ measurements be a slightly shifted version of the measurements in the $X$ basis. Consider the following states:
\begin{equation}
\begin{split}
        & \ket{\psi_N} = \sqrt{\frac{1}{2} + \delta_r}\ket{0} + \sqrt{\frac{1}{2} - \delta_r}\ket{1} \\
        & \ket{\psi^{\perp}_N} = \sqrt{\frac{1}{2} - \delta_r}\ket{0} - \sqrt{\frac{1}{2} + \delta_r}\ket{1} 
\end{split}
\end{equation}
We define the new $N$ projective operators according to the following states as $N^0 = \ket{\psi_N}\bra{\psi_N}$ and $N^1 = \ket{\psi^{\perp}_N}\bra{\psi^{\perp}_N}$. Now we calculate the operator norm for all the pairs of measurements and we have:
\begin{equation}
\begin{split}
        &  \parallel\sqrt{M^0}\sqrt{N^0}\parallel^2 = \frac{1}{2} + \delta_r, \quad  \parallel\sqrt{M^0}\sqrt{N^1}\parallel^2 = \frac{1}{2} - \delta_r\\
        &  \parallel\sqrt{M^1}\sqrt{N^0}\parallel^2 = \frac{1}{2} - \delta_r, \quad  \parallel\sqrt{M^1}\sqrt{N^1}\parallel^2 = \frac{1}{2} + \delta_r
\end{split}
\end{equation}
Thus we conclude that $c = \frac{1}{2} + \delta_r$ and the Equation~\eqref{eq:uncert-tripartite} can be re-written as follows:
\begin{align}
    \nonumber
    & H(X_1 X_2 Z_3 X_4 \dots X_{m-1} Z_m | E) + \\ & H(Z_1 Z_2 X_3 Z_4 \dots Z_{m-1} X_m | C) \geq m  - m\log_2 (1 + 2\delta_r)
\end{align}

Now, as mentioned at the beginning of the section, using the data processing inequality~\cite{coles2017entropic}, we have got the following security criteria that show Eve's uncertainty (in terms of the von Neumann entropy) of the actual response bits $S^m$:
\begin{equation}
    H(S^m | ER^m) + H(S^m | \tilde{Y}^m) \geq m  - m\log_2 (1 + 2\delta_r).
\end{equation}
We can get the same inequality in terms of smooth min and max entropy~\cite{coles2017entropic,tomamichel2011uncertainty}, which is more appropriate for ensuring the security in the finite size, for min and max entropy we equivalently have:

\begin{equation}\label{eq:sec-criteria-uncert}
    H^{\epsilon}_{min}(S^m | ER^m) \geq m  - H^{\epsilon}_{max}(S^m | \tilde{Y}^m) - m\log_2 (1 + 2\delta_r)
\end{equation}
In order to calculate the above bound, we need to find the bound on the $H^{\epsilon}_{max}(S^m | \tilde{Y}^m)$. Here we use another result from~\cite{tomamichel2011uncertainty} where it states that for any bitstring $X$ of $n$ bit and the respective measurement outcome $X'$, which at most a fraction $\zeta$ of them disagree according to the performed statistical test, then the smooth max entropy is bounded as follows:
\begin{equation}
    H^{\epsilon}_{max}(X|X') \leq nh(\zeta)
\end{equation}
where $h(.)$ denotes the classical binary Shannon entropy. Now we can use this result and our assumption of successful verification together. Given the assumption that the verification is passed with a probability $1 - \epsilon(m)$, and the verification algorithm consists of measuring the states in the $Z$ and $X$ bases, we can conclude that the final bits differ in at most a fraction $\zeta = \epsilon(m)$ where $\epsilon(m)$ is a negligible function. As a result, we have:
\begin{equation}\label{eq:max-entropy}
    H^{\epsilon}_{max}(S^m | \tilde{Y}^m) \leq mh(\zeta) \approx m \epsilon(m)
\end{equation}
Putting Equations~\eqref{eq:sec-criteria-uncert} and~\eqref{eq:max-entropy} together, we have:
\begin{equation}
    H^{\epsilon}_{min}(S^m | ER^m) \geq m  - m \epsilon(m) - m\log_2 (1 + 2\delta_r)
\end{equation}
On the right-hand side of the above inequality, the second term is still a negligible function, and the third term depends on the CPUF bias probability distribution. We assume the CPUF satisfies $p$-Randomness, as defined in the Definition~\ref{def:p-randomness}. Thus the $\delta_r$ is a small value, and hence the term $(1 + 2\delta_r)$ is negligibly close to $1$, which means that the third term, is negligibly close to $0$ in the security parameter which is $m$. Finally, we conclude that:
\begin{equation}
    H^{Eve}_{min} = H^{\epsilon}_{min}(S^m | ER^m) \geq m  - \epsilon'(m)
\end{equation}
where $\epsilon'(m)$ is a negligible function, and the proof is complete.
\end{proof}

\section{MUB in 8 dimensions}\label{ap:mub8}
In this section, we consider an HPUF encoded using an 8-dimensional state (3 qubits). We use the construction shown in \cite{TNO09} to compute 9 mutually unbiased bases for the 8-dimensional state $\ket{x^\theta}, x \in \{0,1\}^3, \theta \in \{0,1,2\}^2$, where $\theta$ represents the basis and $x$ represents the state. We denote the set of basis vectors for each basis using the matrices $B^{\theta}, \theta \in \{0,1,...8\}$. The column $B^{\theta}_j$ denotes the $j^{th}$ basis vector for the basis set $\theta$. The MUB set is given as:
\begin{equation}
\begin{split}
B    = &\{\mathbb{I}_8, \mathbf{O \otimes O \otimes O}, \mathbf{U(O \otimes O \otimes I)}, \mathbf{V(O \otimes I \otimes O)}, \\&\mathbf{W(O \otimes I \otimes I)}, \mathbf{W(I \otimes O \otimes O)}, \mathbf{V(I \otimes O \otimes I)}, \\&\mathbf{U(I \otimes I \otimes O)}, \mathbf{I \otimes I \otimes I} \}
\end{split}
\end{equation}
where $
\mathbf{O} = \frac{1}{\sqrt{2}}\begin{bmatrix} 1 & 1 \\ 1 & -1\end{bmatrix}, \mathbf{I} = \frac{1}{\sqrt{2}}\begin{bmatrix} 1 & 1 \\ i & -i\end{bmatrix},\\
\mathbf{U} = \text{diag}\{1,1,1,1,1,-1,-1,1\},\\
\mathbf{V} = \text{diag}\{1,1,1,-1,1,-1,1,1\},\\
\mathbf{W} = \text{diag}\{1,1,1,-1,1,1,-1,1\}\\
$

With predicting the 8-dimension qubit correctly by an adversary, the optimal strategy is to perform \emph{Split Attack} as shown previously on modelling bit by bit of $x=x_0x_1x_2$. In general, we have:
\begin{align}
    p_{dist}(\rho_0,\rho_1) = \max_{E}(\frac{1}{2}+\frac{1}{2}Tr(E(\rho_0-\rho_1))) 
\end{align}
the optimal probability of distinguishing two mixed states with POVM element $E$. For each mixed state $\rho_x = \frac{1}{9}\sum_{\theta = 0}^{8} |x^\theta\rangle\langle x^\theta|$, the optimal success probabilities of guessing $x_0, x_1$ and $x_2$ are given as (See \cite{hpuf} for more details):
\begin{align}
    \nonumber
    p_0 &= p_{guess}(x_0) = p_{dist}(\frac{1}{4}\sum_{i = 0}^3\rho_i,\frac{1}{4}\sum_{i = 4}^7\rho_i)\approx 0.62\\
    \nonumber
    p_1 &= p_{guess}(x_1|x_0)\\
    \nonumber
    &= \frac{1}{2}(p_{guess}(x_1|x_0 = 0) + p_{guess}(x_1|x_0 = 1))\\
    \nonumber
    &\leq p_{dist}(\frac{\rho_0+\rho_1}{2},\frac{\rho_2+\rho_3}{2})+p_{dist}(\frac{\rho_4+\rho_5}{2},\frac{\rho_6+\rho_7}{2})\\
    \nonumber
    &\approx 0.69\\
    \nonumber
    p_2 &= p_{guess}(x_2|x_0,x_1)\\
    \nonumber
    &=\frac{1}{4}\sum_{i,j \in \{0,1\}}p_{guess}(x_2| x_0 = i,x_1 = j)\\
    \nonumber
    &\leq\frac{p_{dist}(\rho_0,\rho_1)+p_{dist}(\rho_2,\rho_3)+p_{dist}(\rho_4,\rho_5)+p_{dist}(\rho_6,\rho_7)}{4}\\
    &\approx 0.77
\end{align}

The result gives us an upper bound on the probabilities, allowing us to fit this attack into our existing simulation framework easily while only giving more power to the adversary, i.e., in an actual scenario, the number of CRPs required to obtain an accurate model would be the same or more than in our simulations.
\end{appendices}

\end{document}